\documentclass{article}
\usepackage{graphicx} 

\usepackage[margin=1in]{geometry}
\usepackage{pdflscape}
\usepackage{titlesec}
\usepackage[LGR,T1]{fontenc}
\usepackage{hyperref}
\hypersetup{
    colorlinks,
    linkcolor={red},
    citecolor={orange},
    urlcolor={blue!80!black}
}

\makeatletter
\def\bbordermatrix#1{\begingroup \m@th
  \@tempdima 4.75\p@
  \setbox\z@\vbox{%
    \def\cr{\crcr\noalign{\kern2\p@\global\let\cr\endline}}%
    \ialign{$##$\hfil\kern2\p@\kern\@tempdima&\thinspace\hfil$##$\hfil
      &&\quad\hfil$##$\hfil\crcr
      \omit\strut\hfil\crcr\noalign{\kern-\baselineskip}%
      #1\crcr\omit\strut\cr}}%
  \setbox\tw@\vbox{\unvcopy\z@\global\setbox\@ne\lastbox}%
  \setbox\tw@\hbox{\unhbox\@ne\unskip\global\setbox\@ne\lastbox}%
  \setbox\tw@\hbox{$\kern\wd\@ne\kern-\@tempdima\left[\kern-\wd\@ne
    \global\setbox\@ne\vbox{\box\@ne\kern2\p@}%
    \vcenter{\kern-\ht\@ne\unvbox\z@\kern-\baselineskip}\,\right]$}%
  \null\;\vbox{\kern\ht\@ne\box\tw@}\endgroup}
\makeatother

\usepackage{graphicx}
\usepackage{subcaption}
\usepackage{tikz}
\usepackage{pgfplots}
\pgfplotsset{compat=1.18} 
\usetikzlibrary{angles,
                graphs,
                graphs.standard, 
                quantikz2,
                decorations.pathmorphing,
                patterns, 
                patterns.meta,
                shadows}
\tikzset{snake it/.style={decorate, decoration=snake}}

\usepackage{physics}
\usepackage{amsmath,amsfonts, amsthm}
\usepackage{pifont}
\usepackage{dsfont}

\usepackage{csvsimple}
\usepackage{listings}
\usepackage{minted}

\usepackage{paralist}

\definecolor{tuyen}{rgb}{0.3, 0.4, 0.8}

\definecolor{amira}{rgb}{0, 0, 1}

\usepackage{booktabs}
\usepackage{makecell}

\usepackage{mdframed}
    \newmdtheoremenv{defn}{Definition}
    \newmdtheoremenv{thrm}{Theorem}
    \newmdtheoremenv{thrm*}{Theorem}
    \newmdtheoremenv{hlo}{Overview}
    \newmdtheoremenv{fact}{Fact}
    \newmdtheoremenv{remark}{Remark}
    \newmdtheoremenv{lmma}{Lemma}
    \newmdtheoremenv{prop}{Proposition}
    \newmdtheoremenv{stm}{Statement}
    \newmdtheoremenv{corollary}{Corollary}
    \newmdtheoremenv{claim}{Claim}
    \newmdtheoremenv{hypothesis}{Hyposthesis}
\newenvironment{proofsketch}{%
  \proof}{\endproof}
\usepackage{algorithm}
\usepackage{algpseudocode}   

\DeclareMathAlphabet{\mathgtt}{LGR}{cmtt}{m}{n}

\usepackage[toc,page]{appendix}

\usepackage{biblatex}
\usepackage{preamble}

\newcommand{\QSQ}{\mathsf{QSQ}}
\newcommand{\QQC}{\mathsf{QQC}}

\newcommand{\Qstat}{\mathsf{Qstat}}
\newcommand{\Stat}{\mathsf{Stat}}
\newcommand{\QSD}{\mathsf{QSD}}

\newcommand{\QAC}{\mathsf{QAC}}
\newcommand{\SQ}{\mathsf{SQ}}

\newcommand{\TRD}{d_{\text{tr}}}

\newcommand{\PR}{\mathbb{P}}

\usepackage[affil-it]{authblk}

\title{\Large{\textbf{On Quantum Learning Advantage Under Symmetries}}}
\author[1]{\normalsize Tuyen Nguyen\thanks{ \href{mailto:tuyen.q.nguyen@student.uts.edu.au}{tuyen.q.nguyen@student.uts.edu.au}}}
\author[1]{\normalsize  M\'aria Kieferov\'a}
\author[2]{\normalsize Amira Abbas}
\affil[1]{\small University of Technology Sydney, NSW 2007, Australia
}
\affil[2]{\small Google Quantum AI, Venice, California, 90291, USA.
}
\date{}
\begin{document}
\maketitle

\begin{abstract}%
   Symmetry underlies many of the most effective classical and quantum learning algorithms, yet whether quantum learners can gain a fundamental advantage under symmetry-imposed structures remains an open question. Based on evidence that classical statistical query ($\SQ$) frameworks have revealed exponential query complexity in learning symmetric function classes, we ask: can quantum learning algorithms exploit the problem symmetry better? In this work, we investigate the potential benefits of symmetry within the quantum statistical query ($\QSQ$) model, which is a natural quantum analog of classical $\SQ$. Our results uncover three distinct phenomena: (i) we obtain an exponential separation between $\QSQ$ and $\SQ$ on a permutation-invariant function class; (ii) we establish query complexity lower bounds for $\QSQ$ learning that match, up to constant factors, the corresponding classical $\SQ$ lower bounds for most commonly studied symmetries; however, the potential advantages may occur under highly skewed orbit distributions; and (iii) we further identify a tolerance-based separation exists, where quantum learners succeed at noise levels that render classical $\SQ$ algorithms ineffective. Together, these findings provide insight into when symmetry can enable quantum advantage in learning.
\end{abstract}

\section{Introduction}
Symmetry has long played a central role in the design and analysis of machine learning models. In classical learning, widely used architectures such as convolutional and graph neural networks encode translation and permutation invariance as geometric inductive biases, yielding significant improvements in efficiency and accuracy~\cite{lecun2002gradient, wu2022graph, bronstein2021geometric, batzner20223}. Beyond empirical performance, a growing body of theoretical work has demonstrated that encoding symmetry as an architectural inductive bias can reduce sample complexity and tighten generalization bounds, thereby providing a principled explanation for the observed success of symmetry-aware models~\cite{cohen2016group, long2019generalization, mei2021learning, tahmasebi2023exact}. The importance of symmetry extends just as strongly to the quantum setting. Quantum models are inherently constrained by group symmetries, from the representation theory of quantum states to the design of equivariant quantum neural networks~\cite{schatzki2024theoretical,West_2024,pesah2021absence}. Symmetry-inspired constructions of quantum circuits have been proposed as remedies to central obstacles in quantum neural networks, such as barren plateaus~\cite{mcclean2018barren} and excessive local minima~\cite{anschuetz2022quantum}. These proposals suggest that encoding symmetry into circuit architectures can stabilize optimization landscapes, reduce the number of parameters, and capture structural regularities in data, closely mirroring classical geometric deep learning~\cite{bronstein2021geometric}.

Despite these promising directions, it remains poorly understood whether symmetry can deliver a quantum advantage over its classical counterpart. This motivates a fundamental question:
\begin{center}
\emph{Can symmetry enable quantum advantage in learning tasks?}
\end{center}
This question has been partially addressed in recent work~\cite{Cerezo_2025}, which studies the classical simulability of quantum neural networks designed to avoid barren plateaus. That work shows that enforcing symmetry, such as $\mathbb{S}_n$-equivariance, can provably improve trainability, but may simultaneously restrict expressivity to the extent that the resulting models become efficiently classically simulable. It remains unclear, however, whether this phenomenon extends beyond equivariant quantum neural networks to more general quantum learning algorithms.

In contrast, existing results indicate that learning symmetric function classes can be exponentially hard in the statistical query ($\SQ$) model~\cite{kiani2024hardness}. The $\SQ$ framework captures a broad family of classical learning algorithms, including gradient-based methods, convex optimization, and Markov-chain Monte Carlo~\cite{reyzin2020statistical}. In particular, it demonstrates that while symmetry can dramatically reduce the amount of data required to represent a target function, the computational cost of learning such functions may remain prohibitively large.~\cite{kiani2024hardness} proved that imposition of symmetry on a function class only reduces the computational complexity of learning by a factor proportional to the size of the group.


In this work, we ask: can \textit{quantum statistical query} ($\QSQ$) algorithms exploit the problem symmetry better? The $\QSQ$ model, introduced by \cite{arunachalam2020quantumstatisticalquerylearning}, extends the classical statistical query ($\SQ$) framework of \cite{kearns1998efficient} with the goal of identifying potential quantum advantages in learning classical function classes. Our aim is therefore to have a direct, task-matched comparison between the quantum and classical settings for learning symmetric functions, as studied in~\cite{kiani2024hardness}. We compare the two models through their \emph{query complexity}, defined as the number of queries to the respective statistical oracles required to learn a target function class to a prescribed accuracy. Each query is answered up to a tolerance parameter $\tau$, which controls the precision of the statistical information returned by the oracle. Here, $\Stat$ denotes the classical statistical oracle and $\Qstat$ its quantum analogue; formal definitions are given in a later section. This query-complexity perspective offers a clean, information-theoretic measure of learning power, which we will employ to rigorously compare the capabilities of classical and quantum models.

\subsection{Our results}
 We provide a unified characterization of how group-invariant structure governs the learnability of both classical and quantum models within the statistical query framework. Our results reveal a sharp dichotomy: while symmetry can enable dramatic quantum advantages in special cases, it also imposes fundamental limitations that collapse quantum learning to classical hardness in broad regimes. Our main contributions are summarized below.

\medskip
\noindent
\textbf{Permutation-invariant separation.}
We begin by identifying a setting that admits an exponential separation between quantum and classical statistical query models.

\begin{thrm}[Permutation-invariant separation (Informal; see Thms.~\ref{thrm: classical_hardness_permutation} and~\ref{thrm: quantum_parity})]\phantom{a}
There exists a permutation-invariant function class such that, for any tolerance parameter $\tau \in (0,1/2)$, the class can be learned exactly by a $\QSQ$ algorithm using $\mathcal{O}(n)$ queries to $\Qstat$, whereas any $\SQ$ algorithm requires at least $\tau^{2}\exp\!\left(n^{\Omega(1)}\right)$ queries to $\Stat$ to achieve sufficiently small classification error.
\end{thrm}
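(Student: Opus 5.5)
The plan is to build the separating class from parities over $\{0,1\}^{n}$ (or $\{\pm1\}^n$) made permutation-invariant by symmetrizing the input. A natural choice is to take inputs that are $n$-tuples of points (or the $n\times m$ matrix/set encoding used in \cite{kiani2024hardness}). For a hidden string $s\in\{0,1\}^m$, define $f_s(x_1,\dots,x_n)=\prod_{i=1}^n \chi_s(x_i)$, or alternatively $\chi_s\big(\bigoplus_i x_i\big)$. This is invariant under the action of $\mathbb{S}_n$ permuting the rows. The input distribution is uniform, possibly mixed with a fixed permutation-invariant distribution so that the target is not trivially learnable from a single coordinate. I will set parameters so that the class contains $2^{m}$ functions with $m=n^{\Omega(1)}$. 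The two halves are then handled separately: Thm.~\ref{thrm: classical_hardness_permutation} for the $\SQ$ lower bound and Thm.~\ref{thrm: quantum_parity} for the $\QSQ$ upper bound. The informal statement follows by combining them.

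\textbf{Classical lower bound.} I will use the standard statistical-dimension argument for $\SQ$ under correlation queries, in the style of Blum et al./Feldman. Distinct parities $\chi_s,\chi_{s'}$ with $s\neq s'$ are orthogonal under the uniform distribution. So $\langle f_s,f_{s'}\rangle=\prod_i \mathbb{E}[\chi_{s\oplus s'}(x_i)]=0$, and the class has pairwise correlations $0$. Invariance under the $\mathbb{S}_n$ row action does not change this, because each $f_s$ is already fixed by it; the group averaging of \cite{kiani2024hardness} only reduces the effective count by at most $|G|$-independent factors here. The statistical dimension is then $2^{m}$. Any $\SQ$ learner with tolerance $\tau$ attaining small classification error must therefore make at least $\Omega(\tau^2 2^{m})=\tau^2\exp(n^{\Omega(1)})$ queries. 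The argument is the usual one: each query $\phi$ has $\sum_s \langle \phi, f_s\rangle^2\le \|\phi\|^2\le1$ by Bessel, so a query answered by the null value $\mathbb{E}[\phi]$ eliminates at most $1/\tau^2$ candidates. A general (non-correlational) query reduces to correlational ones since the labels are deterministic Boolean functions.

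\textbf{Quantum upper bound.} I will follow the Fourier-sampling idea of \cite{arunachalam2020quantumstatisticalquerylearning}. The $\Qstat$ oracle gives expectation values of observables $M$ on the example state $|\psi_f\rangle=\sum_x\sqrt{D(x)}|x,f(x)\rangle$. Applying Hadamards and measuring the label qubit in the $|-\rangle$ basis, the post-selected Fourier distribution concentrates on the characters of $f_s$. For the symmetrized parity, the only nonzero Fourier coefficient is at $(s,s,\dots,s)$, or at $s$ on the $\oplus$-encoding. Thus the single-qubit observable $M_j = H^{\otimes}\,(Z_{j}\otimes |-\rangle\langle-|)\,H^{\otimes}$ has expectation $\pm\tfrac12$ according to $s_j$. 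Querying $M_1,\dots,M_m$ with any tolerance $\tau<1/2$ reads off every bit of $s$ exactly, which gives $m\le n$ queries, i.e.\ $\mathcal{O}(n)$. Each $M_j$ is a bounded observable, $\|M_j\|\le1$, so it is a legitimate $\Qstat$ query.

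The main obstacle I expect is calibrating the gap between these two halves. The symmetrization must make the class genuinely permutation-invariant, so that the result is not merely the known parity separation. It must also keep the characters orthogonal so the $\SQ$ dimension stays $\exp(n^{\Omega(1)})$. The quantum estimate must remain separated by a constant margin so that exact learning holds for every $\tau<1/2$. I would first check that the product encoding keeps the Fourier support a single point. If the chosen distribution breaks orthogonality, I will bound pairwise correlations by $2^{-\Omega(m)}$ and use the approximate statistical dimension instead.
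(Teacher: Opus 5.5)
Your argument is correct. It proves the existential statement with a different witness class, though, so it does not establish Theorems~\ref{thrm: classical_hardness_permutation} and~\ref{thrm: quantum_parity} for the paper's $\mathcal{C}_n$.

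\textbf{Your construction.} Your two encodings are the same function: $\prod_i\chi_s(x_i)=\chi_s(\bigoplus_i x_i)=\chi_{(s,\dots,s)}(x)$. So your class is the $2^m$ parities on $\{0,1\}^{nm}$ with hidden strings $(s,\dots,s)$, invariant under $\mathbb{S}_n$ permuting the blocks. Commit to the uniform distribution. Mixing in another one would destroy both the exact orthogonality and the single-point Fourier support, and it is not needed. Under the uniform distribution, your Bessel/null-answer adversary gives $\Omega(\tau^2 2^m)$ $\Stat$ queries for error below $1/4$. Taking $m=n$ gives $n$ $\Qstat$ queries against $\Omega(\tau^2 2^n)$ $\Stat$ queries on $n^2$-bit inputs, the same scaling as the paper.

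\textbf{The paper's construction.} The paper uses $g_S(A)=\hat S\cdot\hat c_A$, which depends on $A$ only through a nonlinear graph invariant. Its hardness comes only from approximate pairwise independence, with $\eta=\exp(-n^{\Omega(1)})$ obtained from Erd\H{o}s--R\'enyi degree concentration. Its quantum side must first compute $\hat c_A$ coherently.

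\textbf{What each route buys.} The paper's route buys relevance: the symmetry is graph relabelling, and the class is representable by two-hidden-layer GNNs, which ties the separation to the classes of \cite{kiani2024hardness}. In your route the symmetry is cosmetic. As you suspected, it is the parity separation of \cite{arunachalam2020quantumstatisticalquerylearning} restricted to an $\mathbb{S}_n$-invariant subfamily, and neither of your bounds uses the invariance. Your route buys an exact, self-contained proof.

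It also sidesteps a step in the paper that needs repair. After $\mathsf{PREP}$, $B_i$ acts trivially on $\ket{x_A}$. Since $\hat c_A$ and $g_S(A)$ are functions of $A$, the reduced state on the registers $B_i$ touches is diagonal. Hence $\bra{g_S}O_i\ket{g_S}=\mathbb{E}_A[\phi_i(A,g_S(A))]$ with $\phi_i(A,b)=\bra{\hat c_A,b}B_i\ket{\hat c_A,b}\in[-1,1]$. That is an ordinary statistical query, and Theorem~\ref{thrm: classical_hardness_permutation} says such queries cannot succeed with $O(n)$ of them. Your $f_s$ is already a single character of the raw input, so this issue does not arise.

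\textbf{One small fix on your side.} Read literally, $H^{\otimes(nm+1)}(Z_j\otimes\ketbra{-}{-})H^{\otimes(nm+1)}=X_j\otimes\ketbra{1}{1}$. Its expectation is $\tfrac12$ or $0$, and it is $0$ for every $j$ when $s=0$, so it does not give exact recovery. The observable you describe in words is $X_j\otimes\ketbra{-}{-}=H^{\otimes(nm+1)}(Z_j\otimes\ketbra{1}{1})H^{\otimes(nm+1)}$. Its expectation is exactly $\tfrac12(-1)^{s_j}$, and this gap of $1$ is what makes every $\tau<\tfrac12$ suffice.
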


\medskip
\noindent
\textbf{A lower-bound under general symmetry.}
We then move beyond permutation invariance to arbitrary group actions. Here, we develop an orbit-based analysis that yields lower bounds for both classical and quantum statistical query models. These lower bounds reveal that the learnability of symmetric function classes is fundamentally controlled by the induced orbit structure of the group action.

\begin{thrm}[$\QSQ$ lower bounds under general symmetry (Informal; see Thms.~\ref{thrm: qsq_lb_diagonal} and~\ref{thrm: qsq_lb_general}]
Let $G$ be a symmetry group acting on $\mathcal{X}=\{0,1\}^n$ via a representation
$\rho: G \to \mathrm{GL}(\mathcal{X})$, and let
\[
\mathcal{C}_{\rho}
=
\{ f:\mathcal{X}\to\{0,1\} \mid f(\rho(g)\cdot x)=f(x)\ \text{for all } g\in G,\, x\in\mathcal{X} \}
\]
denote the corresponding class of symmetric Boolean functions.
Let $\mathcal{O}_\rho$ be the set of orbits induced by the action of $G$, and define the orbit distribution
\[
p_{\mathcal{O}_\rho}(O)
=
{|O|}/{|\mathcal{X}|}, \forall O\in\mathcal{O}_\rho.
\]
Any classical $\SQ$ algorithm that learns $\mathcal{C}_\rho$ to constant error with tolerance $\tau$ requires at least
\[
\Omega\!\left(\frac{\tau^2}{\|p_{\mathcal{O}_\rho}\|_2^2}\right)
\]
queries. Moreover, there exists $\QSQ$ algorithms that learn $\mathcal{C}_\rho$ with the same tolerance $\tau$ requires at least
\[
\Omega\!\left(\frac{\tau^2\,|\mathcal{X}|}{\max_{O_k\in\mathcal{O}_\rho}|O_k|}\right)
\]
queries.
\end{thrm}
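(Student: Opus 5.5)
Both bounds come from the standard statistical-dimension machinery: exhibit a large family of concepts in $\mathcal{C}_\rho$ whose pairwise correlations are small under the relevant oracle, then use the fact that each query with tolerance $\tau$ can rule out only a limited number of candidates. The natural family consists of random orbit-wise labelings. Write $\mathcal{O}_\rho=\{O_1,\dots,O_m\}$. Every Boolean function that is constant on orbits lies in $\mathcal{C}_\rho$. In the $\pm1$ convention such a function is $f_\sigma(x)=\sigma_{k}$ for $x\in O_k$, with $\sigma\in\{\pm1\}^m$.

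\textbf{Classical bound.} Under the uniform marginal on $\mathcal{X}$, two concepts have correlation
\[
\langle f_\sigma,f_{\sigma'}\rangle=\sum_k p_{\mathcal{O}_\rho}(O_k)\,\sigma_k\sigma'_k .
\]
\begin{itemize}
\item For independent uniform $\sigma,\sigma'$ this has mean $0$ and variance $\sum_k p_{\mathcal{O}_\rho}(O_k)^2=\|p_{\mathcal{O}_\rho}\|_2^2$.
\item For an arbitrary bounded query $\phi$, split it as $\phi(x,y)=\phi_0(x)+y\,\psi(x)$, so that $\mathbb{E}_x[\phi(x,f(x))]=\mathbb{E}_x[\phi_0(x)]+\langle\psi,f\rangle$. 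With $\bar\psi_k$ the average of $\psi$ on $O_k$, the answer depends on the target only through $\sum_k p_k\sigma_k\bar\psi_k$.
\item Over random $\sigma$ this term has variance $\sum_k p_k^2\bar\psi_k^2\le\|p_{\mathcal{O}_\rho}\|_2^2$, since $|\bar\psi_k|\le 1$.
\item By Chebyshev, the adversary's default answer $\mathbb{E}_x[\phi_0(x)]$ is $\tau$-consistent with all but a $\|p_{\mathcal{O}_\rho}\|_2^2/\tau^2$ fraction of targets.
\end{itemize}
Hence after $q$ queries, at least a $1-q\|p_{\mathcal{O}_\rho}\|_2^2/\tau^2$ fraction of targets is consistent with every answer. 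Two random orbit labelings disagree on about half of $\mathcal{X}$, so constant-error learning forces $q=\Omega(\tau^2/\|p_{\mathcal{O}_\rho}\|_2^2)$. This is the random-function adversary argument in the style of \cite{kiani2024hardness}.

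\textbf{Quantum bound.} Here the oracle holds the example state $|\psi_f\rangle=|\mathcal{X}|^{-1/2}\sum_x|x,f(x)\rangle$ and returns $\langle\psi_f|M|\psi_f\rangle\pm\tau$ for $\|M\|\le1$. Expanding in the computational basis on the label register, with $y=f(x)\in\{0,1\}$, gives
\[
\langle\psi_f|M|\psi_f\rangle=\frac{1}{|\mathcal{X}|}\sum_{x,x'}\langle x,f(x)|M|x',f(x')\rangle .
\]
This is quadratic in $f$, so the quantum oracle sees cross-orbit coherences that the classical oracle does not. I would run the same adversary, but flip the label on a single orbit $O_k$. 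The quantity to control is $\| |\psi_f\rangle-|\psi_{f'}\rangle\|^2 = 2|O_k|/|\mathcal{X}|$. Since $|\langle\psi_f|M|\psi_f\rangle-\langle\psi_{f'}|M|\psi_{f'}\rangle|\le 2\||\psi_f\rangle-|\psi_{f'}\rangle\|$, a single flip changes each answer by only $O(\sqrt{|O_k|/|\mathcal{X}|})$. That is the wrong scaling, so a cruder argument is needed: an averaging or hybrid argument over random $\sigma$.

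\begin{enumerate}
\item Decompose $\langle\psi_{f_\sigma}|M|\psi_{f_\sigma}\rangle$ into its mean over $\sigma$, linear terms in $\sigma_k$, and bilinear terms in $\sigma_k\sigma_l$, by writing the label-register projector as $(I+\sigma_kZ)/2$.
\item Bound the variance of the answer over random $\sigma$ by $O(\max_k|O_k|/|\mathcal{X}|)$. For the bilinear part, the coefficients form the restriction of $M$ to orbit blocks, scaled by $|\mathcal{X}|^{-1}$. Their squared Frobenius-type sum should be bounded by $\|M\|^2\max_k|O_k|/|\mathcal{X}|$, using block norms $\|P_kMP_l\|\le1$ and the normalization of $|\psi_f\rangle$.
\item Apply Chebyshev per query, as in the classical case, to obtain $q=\Omega(\tau^2|\mathcal{X}|/\max_k|O_k|)$.
\end{enumerate}

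The main obstacle is step 2. Off-diagonal coherences between orbits could in principle make the variance scale like $\|p\|_2$ rather than $\|p\|_2^2$. I would first handle the case of diagonal $M$ with respect to $x$, where the argument reduces exactly to the classical one; this corresponds to Thm.~\ref{thrm: qsq_lb_diagonal}. I would then treat general $M$ by bounding $\sum_{k,l}\|\langle v_k|M|v_l\rangle\|^2$, where $|v_k\rangle$ is the normalized uniform superposition over $O_k$. This uses the Bessel inequality over the orthonormal set $\{|v_k\rangle\}$, giving a total of at most $\max_k|O_k|/|\mathcal{X}|$ after weighting, which corresponds to Thm.~\ref{thrm: qsq_lb_general}.
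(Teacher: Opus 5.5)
Your plan follows the paper's route. You use random orbit labelings, which is exactly $\mathrm{Unif}(\mathcal{C}_\rho)$, and a per-query Chebyshev/variance argument. The paper packages that argument as Lemma~\ref{lmma: variance_bound} plus the reduction in Lemma~\ref{lmma:qsd_bound_qsq}, whereas you run the adversary directly. For the quantum part you split $\langle\psi_f|M|\psi_f\rangle$ into $f$-independent, linear and bilinear parts in $\sigma$, which is what Claim~\ref{claim: phase_kickback} does. Your classical half is fine.

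The gap is in step 2 of the quantum half. You only bound the bilinear part, and the tool you name cannot bound the linear part, which is exactly where the $\max_k|O_k|/|\mathcal{X}|$ scaling comes from. Set $p_k=|O_k|/|\mathcal{X}|$ and write $|f(x)\rangle=(|+\rangle+\sigma_k|-\rangle)/\sqrt2$ on $O_k$. Then $|\psi_f\rangle=(|u\rangle|+\rangle+|\phi_f\rangle|-\rangle)/\sqrt2$, with $|u\rangle=\sum_l\sqrt{p_l}\,|v_l\rangle$ and $|\phi_f\rangle=\sum_k\sigma_k\sqrt{p_k}\,|v_k\rangle$. So the coefficient of $\sigma_j$ is $\sqrt{p_j}\,\mathrm{Re}\langle u,+|M|v_j,-\rangle=\sqrt{p_j}\,\mathrm{Re}\sum_l\sqrt{p_l}\langle v_l,+|M|v_j,-\rangle$. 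This is a coherent sum over $l$, not a single orbit-block entry. A weighted pairwise sum $\sum_{k,l}p_kp_l|\langle v_k,a|M|v_l,b\rangle|^2$ is always at most $\|p_{\mathcal{O}_\rho}\|_2^2$, by Cauchy--Schwarz together with Bessel row and column sums $\le 1$. It therefore can never produce the quantum scaling. For the paper's extremal observable $M=|u,+\rangle\langle v_{j^*},-|+|v_{j^*},-\rangle\langle u,+|$, the linear variance is exactly $p_{j^*}=\max_k p_k$. The missing step is Bessel applied to the single vector $M^\dagger|u,+\rangle$:
\[
\sum_j p_j\,|\langle v_j,-|M^\dagger|u,+\rangle|^2\le \max_j p_j\,\|M^\dagger|u,+\rangle\|^2\le \max_j p_j .
\]
This is the paper's Cauchy--Schwarz chain ending in $\|O_2^\dagger|u\rangle\|\le 1$.

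This also shows your diagnosis is inverted. The cross-orbit coherences in the bilinear part are harmless: their variance is $\sum_{k<l}p_kp_l\,(\mathrm{Re}\langle v_k,-|M|v_l,-\rangle)^2\le\frac12\|p_{\mathcal{O}_\rho}\|_2^2$, which is the classical scale. The enhancement comes entirely from interference between the $f$-independent branch $|u\rangle|+\rangle$ and the phase branch $|\phi_f\rangle|-\rangle$.

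On the other hand, your Parseval-style bookkeeping is more careful than the paper's. The paper claims the $i=j=4$ contribution cancels by keeping only the pairing $x\sim x'$, $v\sim v'$. The pairings $x\sim v,\ x'\sim v'$ and $x\sim v',\ x'\sim v$ survive: for $M=(|v_1\rangle\langle v_2|+|v_2\rangle\langle v_1|)\otimes|-\rangle\langle -|$ the variance is $p_1p_2\neq 0$. Since that term is at most $\frac12\|p_{\mathcal{O}_\rho}\|_2^2\le\frac12\max_k p_k$, the stated bound is unaffected. A complete proof along your lines still needs both the linear bound above and this bilinear bound.
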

A direct consequence of the above bounds is that, for most commonly studied symmetry groups, the quantum and classical lower bounds coincide up to constant factors. At the same time, our analysis identifies highly skewed orbit distributions as a structural regime in which such an advantage may arise.
\begin{remark}[Informal; see Rems.~\ref{rem: skewed_orbit} and~\ref{rem: common_groups}]
For most commonly studied symmetry groups, the query complexity lower bound for $\QSQ$ learning matches the classical $\SQ$ lower bound up to constant factors. However, an asymptotic quantum advantage may arise when the induced orbit distribution is highly skewed.
\end{remark}

\medskip
\noindent
\textbf{Tolerance-based separations.}
Finally, we identify a quantum advantage through the tolerant parameter. In particular, quantum learners can succeed in tolerance regimes that render classical $\SQ$ algorithms ineffective.

\begin{thrm}[Tolerance-based advantage (Informal; see Thm.~\ref{thm:tolerance_separation}]
Let $\mathcal{S}$ be a subclass of symmetric Boolean functions with disjoint supports, each having measure $\zeta$. If $\zeta < 0.4$, then for any tolerance $\tau$ satisfying
$2\zeta < \tau < \sqrt{2\zeta - \zeta^2}$, the following hold:
\begin{itemize}
    \item \textbf{Classical hardness:} No classical $\SQ$ algorithm can learn $\mathcal{S}$ with tolerance $\tau$.
    \item \textbf{Quantum learnability:} There exists a $\QSQ$ algorithm that learns $\mathcal{S}$ with tolerance $\tau$ using $O(\log |\mathcal{S}|)$ queries.
\end{itemize}
\end{thrm}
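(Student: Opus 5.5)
We prove the two bullets separately. The classical part uses an indistinguishability argument at tolerance $\tau > 2\zeta$. The quantum part uses the fact that a $\Qstat$ query can measure an observable whose expectation is a fidelity-type quantity. That quantity scales like $\sqrt{\zeta}$ rather than $\zeta$, and this gives the upper window $\tau < \sqrt{2\zeta-\zeta^2}$.

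\textbf{Classical hardness.} Take the uniform distribution on $\mathcal{X}$, so that each $f\in\mathcal{S}$ satisfies $\Pr_x[f(x)=1]=\zeta$ on pairwise disjoint supports. Let $\phi:\mathcal{X}\times\{0,1\}\to[-1,1]$ be any statistical query. For each $f$,
\[
\mathbb{E}_x[\phi(x,f(x))]-\mathbb{E}_x[\phi(x,0)] = \mathbb{E}_x\bigl[(\phi(x,1)-\phi(x,0))\mathbf{1}[f(x)=1]\bigr],
\]
which has absolute value at most $2\zeta$. Hence $\mathbb{E}_x[\phi(x,0)]$, the answer under the all-zero labelling, is within $2\zeta<\tau$ of the true value for every target in $\mathcal{S}$. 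An adversarial oracle can return this same answer for every query, whatever the target.

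The learner's transcript is therefore independent of the target, so it outputs a hypothesis $h$ that does not depend on $f$. Because the supports are disjoint, a single $h$ is $\zeta/2$-close to at most a couple of targets in $\mathcal{S}$. So once $|\mathcal{S}|\ge 3$, no $\SQ$ algorithm can learn to error below $\zeta/2$, for any number of queries. Here we read ``learn'' as achieving error smaller than about $\zeta/2$, since error $\zeta$ is trivially achieved by outputting the zero function.

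\textbf{Quantum learnability.} The $\Qstat$ oracle returns $\langle\psi_f|M|\psi_f\rangle\pm\tau$ for the example state $|\psi_f\rangle=2^{-n/2}\sum_x|x,f(x)\rangle$. For a candidate $g\in\mathcal{S}$, set $M_g=\mathbb{I}-|\psi_g\rangle\langle\psi_g|$. Then
\[
\langle\psi_f|M_g|\psi_f\rangle = 1-|\langle\psi_g|\psi_f\rangle|^2.
\]
Since $\langle\psi_g|\psi_f\rangle=\Pr_x[f(x)=g(x)]$, this equals $1-(1-2\zeta)^2=4\zeta-4\zeta^2$ when $f\neq g$ (disjoint supports), and $0$ when $f=g$.

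I also consider the $\tfrac12$-scaled observable built from the unnormalized difference $|\psi_f\rangle-|\psi_g\rangle$. It involves quantities of the form $1-(1-\zeta)^2=2\zeta-\zeta^2$, which matches the stated threshold $\sqrt{2\zeta-\zeta^2}$ after a square root. So the observable I actually aim for is one whose expectation is $0$ when $f=g$ and at least $2\tau$ when $f\neq g$. Fixing the exact observable so that the gap is $\sqrt{2\zeta-\zeta^2}>\tau$ is the delicate step. A natural candidate is to compare $\psi_f$ against the all-zero state $\psi_0$: the overlap $\langle\psi_0|\psi_f\rangle=1-\zeta$ gives trace distance $\sqrt{1-(1-\zeta)^2}=\sqrt{2\zeta-\zeta^2}$. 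A Helstrom-type projector then separates ``$f=g$'' from ``$f\ne g$'' by exactly the trace distance, which exceeds $\tau$.

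Given such a test observable with gap exceeding $\tau$, we identify $f$ using $O(\log|\mathcal{S}|)$ queries. For each bit $j$ of the index, take the observable given by the projector onto the span of $\{|\psi_g\rangle\}_{g:\,\mathrm{bit}_j(g)=1}$. The states $\psi_g$ share a common component along $\psi_0$ and have mutually orthogonal residuals. So the expectation of this projector is $1$ when the bit is $1$, and at most $1-(2\zeta-\zeta^2)+O(\zeta^2)$ when it is $0$. A gap larger than $\tau$ then lets us read off each bit, giving $\lceil\log_2|\mathcal{S}|\rceil$ queries.

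\textbf{Main obstacle.} The hardest part is verifying that this projector-expectation gap really exceeds $\tau$ throughout the range $\tau<\sqrt{2\zeta-\zeta^2}$. Two things enter here: the squared-versus-unsquared scaling, and the requirement $\zeta<0.4$, which ensures the window $2\zeta<\sqrt{2\zeta-\zeta^2}$ is nonempty.
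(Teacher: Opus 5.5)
\textbf{Verdict: the classical half is fine; the quantum half has a genuine gap, and the step you flag as delicate actually fails.}

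Your classical half is correct and is the paper's argument. The oracle answers $\mathbb{E}_x[\phi(x,0)]$, which is within $2\zeta<\tau$ of the truth for every target. Your remark that ``learn'' must mean error below $\zeta$ is a useful addition.

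In the quantum half, every observable you propose is a projector $P$ (or $\mathbb{I}-P$) with $P\ket{\psi_g}=\ket{\psi_g}$ for the candidates $g$ on one side. For any $f$ on the other side, the expectation gap is $\|(\mathbb{I}-P)\ket{\psi_f}\|^2=\|(\mathbb{I}-P)(\ket{\psi_f}-\ket{\psi_g})\|^2\le\|\ket{\psi_f}-\ket{\psi_g}\|^2=4\zeta$. Your own numbers, $4\zeta-4\zeta^2$ and $\approx 2\zeta-\zeta^2$, are instances of this. A tolerance-$\tau$ oracle separates two values only if they differ by more than $2\tau$ (not $\tau$, as you wrote), and $2\tau>4\zeta$ throughout the window. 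So none of these queries resolves a bit, because the oracle can always answer the midpoint. Rescaling to $2P-\mathbb{I}$ only doubles an $O(\zeta)$ gap, nowhere near the $\Theta(\sqrt{\zeta})$ top of the window.

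The underlying error is conflating $1-|\braket{\psi_0}{\psi_f}|^2=2\zeta-\zeta^2=\TRD^2$ with $\TRD=\sqrt{2\zeta-\zeta^2}$. Projector weights see only squared overlaps. Separately, a Helstrom test of $\ket{\psi_f}$ against $\ket{\psi_0}$ does not separate ``$f=g$'' from ``$f\neq g$'', since $\ket{\psi_0}$ is not a possible target.

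The missing idea is to query the Helstrom observable $O^*=\ketbra{P}{P}-\ketbra{P^\perp}{P^\perp}$ built from $\ket{\psi_f}\pm\ket{\psi_g}$. It captures the interference term $2\,\mathrm{Re}\bra{\psi_0}O\ket{\delta_f}$, where $\ket{\delta_f}=\ket{\psi_f}-\ket{\psi_0}$ has norm $\sqrt{2\zeta}$. It achieves $\Tr[O^*(\ketbra{\psi_f}{\psi_f}-\ketbra{\psi_g}{\psi_g})]=2\TRD$, which is linear in the trace distance. This is the paper's route: a Helstrom test against $\ket{\psi_0}$ with $\TRD=\sqrt{2\zeta-\zeta^2}>\tau$, followed by a tournament over candidates. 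Alternatively, pairwise Helstrom tests between candidates, with $\TRD(\psi_f,\psi_g)=2\sqrt{\zeta(1-\zeta)}>\sqrt{2\zeta-\zeta^2}$, identify the target throughout the window via a knockout tournament.

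Be aware that any such tournament costs one query per candidate, not $O(\log|\mathcal{S}|)$. The paper asserts the logarithmic count rather than deriving it. In fact your bit-by-bit plan cannot be rescued in general. Because the $\ket{\delta_f}$ are orthogonal, Bessel's inequality gives $\mathrm{Var}_{f\in\mathcal{S}'}\bra{\psi_f}O\ket{\psi_f}\le 16\zeta/|\mathcal{S}'|+8\zeta^2$ for every $\|O\|\le 1$ and every subfamily $\mathcal{S}'$. An oracle that answers the mean therefore eliminates at most $24\zeta/\tau^2$ candidates per query, using $|\mathcal{S}|\le 1/\zeta$. For $\tau\ge\sqrt{\zeta}$, which lies inside the window when $\zeta<1/4$, this is at most $24$. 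Hence $(|\mathcal{S}|-1)/24$ queries are necessary, which is not $O(\log|\mathcal{S}|)$ for large classes such as $|\mathcal{S}|\approx 1/\zeta$.
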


\section{Background} \label{sec: bg}
\subsection{Notation}
Throughout the paper, we use uppercase letters such as $M$, $A$, and $U$ to denote matrices, with $I$ representing the identity matrix of appropriate dimension. The symbol $\mathbf{1}$ denotes the all-ones column vector, with dimension clear from context. For two quantum states (density matrices) $\rho$ and $\sigma$, their trace distance is defined as 
$$
    \TRD(\rho,\sigma) = \tfrac{1}{2}\|\rho - \sigma\|_1 = \tfrac{1}{2}\Tr\!\left[\sqrt{(\rho - \sigma)^\dagger(\rho - \sigma)}\right],
$$ 
where $\|\cdot\|_1$ is the Schatten $1$-norm. For a matrix $M$, we denote by $\|M\|$ its operator norm, i.e., the largest singular value of $M$. For an event, we write $\mathds{1}(\cdot)$ for its indicator function, which returns $1$ if the event occurs and $0$ otherwise. We use $\text{Unif}(S)$ to denote the uniform distribution over a finite set $S$, with $\text{Unif}(\{0,1\}^n)$ representing the uniform distribution over $n$-bit strings in particular.   

\subsection{Quantum Statistical Queries Model}
\label{subsec:learning_framework_bg}

The statistical query ($\SQ$) model~\cite{kearns1998efficient} captures learning algorithms that access
a target function only through expectations of bounded query functions, rather than labeled samples.
The \emph{quantum statistical query} ($\QSQ$) model~\cite{arunachalam2020quantumstatisticalquerylearning,arunachalam2023roleentanglementstatisticslearning}
is a quantum analogue in which classical expectations are replaced by expectation values of quantum observables.

Formally, let $\mathcal{X}$ be $\{0,1\}^n$ and a concept class $\mathcal{C} := \{ f: \mathcal{X} \to \{0,1\} \}$. The goal is to identify an unknown target concept $f \in \mathcal{C}$ under a distribution $\mathcal{D}$ on $\mathcal{X}$. In the classical $\SQ$ framework, rather than examining a sequence of labeled samples $(x,f(x))$, the learner interacts with a \emph{statistical oracle} $\Stat$ is defined as follows
\begin{defn}[Statistical Oracle] Let $\phi : \mathcal{X} \times \{0,1\} \to [-1,1]$ be a query function and a tolerance parameter $\tau > 0$. For a target concept $f \in \mathcal{C}$, the statistical oracle returns
\[
\Stat(\phi,\tau) = \alpha 
\quad \text{such that} \quad
\Big| \alpha - \mathbb{E}_{x \sim \mathcal{D}}[\phi(x,f(x))] \Big| \leq \tau.
\]
With a slight abuse of notation, we denote by $\Stat(\tau)$ a query to any map $\phi$ satisfying $\phi : \mathcal{X} \times \{0,1\} \to [-1,1]$.
\end{defn}

The quantum statistical query ($\QSQ$) model generalizes this framework to the quantum setting. Instead of classical labeled examples, information about the target concept is encoded in the \emph{quantum example state}
\[
\ket{\psi_f}
=\sum_{x\in\mathcal{X}}\sqrt{\mathcal{D}(x)}\ket{x,f(x)},
\]
which is a superposition of labeled samples. This quantum example state is well-motivated in quantum computing and has many interesting complexity-theoretic applications~\cite{aharonov2003adiabatic}. Same as $\SQ$, a $\QSQ$ learner interacts with the state only through a
\emph{quantum statistical oracle}.

\begin{defn}[Quantum Statistical Oracle]\label{defn: qstat}
Let $O \in \mathbb{C}^{2^{n+1} \times 2^{n+1}}$ be any observable with operator norm $\|O\| \leq 1$, and let $\tau > 0$.  
For a target concept $f \in \mathcal{C}$ encoded in the state $\ket{\psi_{f}}$, the quantum statistical oracle returns
\[
\Qstat(O,\tau) = \alpha \quad \text{such that} \quad 
\big| \alpha - \langle \psi_{f} | O | \psi_{f} \rangle \big| \leq \tau.
\]
With a slight abuse of notation, we denote by $\Qstat(\tau)$ a query to any observable $O$ satisfying $\|O\| \leq 1$.
\end{defn}
If $O$ is diagonal in the computational basis, $\Qstat$ reduces exactly to a $\Stat$ query.
Hence, the $\QSQ$ model strictly generalizes $\SQ$.
Any potential quantum advantage must therefore arise from the use of
\emph{non-diagonal observables}, which can access global interference or entanglement-induced
correlations unavailable to classical $\SQ$ algorithms.
At the same time, the $\QSQ$ model is that the learner itself remains a {classical} randomized algorithm. It never manipulates quantum states directly, but only receives approximate numerical values $\alpha$ returned by the oracle.

In this paper, we focus on lower bounds on the query complexity of the $\QSQ$ model, which is the number of oracle calls $\Qstat(O,\tau)$ required to complete a learning task, and on how this complexity compares to that of the classical $\SQ$ model on the same problem. Our main technical tool is the \emph{quantum statistical dimension} ($\QSD$), introduced in~\cite{arunachalam2023roleentanglementstatisticslearning}.
The standard approach reduces learning to an associated decision problem
(distinguishing an unknown state $\rho\in\mathcal{C}$ from a fixed reference state
$\sigma\notin\mathcal{C}$) and lower bounds the decision complexity via $\QSD$. Hence, bounding $\QSD$ provides a direct measure of the complexity of learning in the $\QSQ$ model. Formal reduction statements are deferred to Appendix~\ref{app:qsd_reductions}.

\begin{defn}[Quantum statistical dimension ($\QSD$)] \label{defn: qsd}
Let $\mathcal{C}$ be a set of quantum states and let $\sigma\notin\mathcal{C}$.
For a distribution $\mu$ over $\mathcal{C}$, define
\[
\kappa_{\tau}\text{-frac}(\mu,\sigma)
=\max_{\|O\|\le 1}
\mathbb{P}_{\rho\sim\mu}\!\left[
\big|\Tr[O(\rho-\sigma)]\big|>\tau
\right],
\qquad
\QSD_{\tau}(\mathcal{C},\sigma)
=\sup_{\mu}\big(\kappa_{\tau}\text{-frac}(\mu,\sigma)\big)^{-1}.
\]
\end{defn}
Intuitively, $\QSD_{\tau}(\mathcal{C},\sigma)$ measures how many distinct observables are required,
at tolerance $\tau$, to reliably separate typical states in $\mathcal{C}$ from $\sigma$. A large $\QSD$ implies that no single query can distinguish more than a small fraction of the class, forcing many oracle queries. The relation of $\QSD$ and $\QSQ$ is then presented in the following lemma.
\begin{lmma}[$\QSD$ lower bounds $\QSQ$ learning complexity]
\label{lmma:qsd_bound_qsq}
Let $\varepsilon \ge \tau > 0$ and $\delta\in(0,1/2)$.  
Let $\mathcal{C}$ be a concept class of states and let $\sigma\notin\mathcal{C}$ be a reference state such that
$
\min_{\rho\in\mathcal{C}} d_{\mathrm{tr}}(\rho,\sigma) \;>\; 2(\tau+\varepsilon).
$
Then the $\QSQ$ learning complexity satisfies
\[
\QSQ^{\varepsilon,\delta}_{\tau}(\mathcal{C})
\;\ge\;
(1-2\delta)\,\QSD_{\tau}(\mathcal{C},\sigma)\;-\;1.
\]
where we denote $\QSQ^{\varepsilon,\delta}_{\tau}(\mathcal{C})$ as the number of $\Qstat(\tau)$ queries made by a $\QSQ$ algorithm that on input $\rho$ outputs $\pi \in \mathcal C$ such that $\TRD(\pi, \rho) \leq \epsilon$ with probability $\geq 1-\delta$.
\end{lmma}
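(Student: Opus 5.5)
The proof goes in two steps: first reduce learning to a decision problem, then lower bound the decision problem by a single adversarial argument. Throughout, $d_{\mathrm{tr}}(\rho,\sigma)\ge \max_{\|O\|\le1}\tfrac12|\Tr[O(\rho-\sigma)]|$, so trace distance controls every observable.

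\emph{Step 1 (reduction).} Take a $\QSQ$ learner $\mathcal{A}$ using $q$ queries of tolerance $\tau$ that outputs $\pi\in\mathcal C$ with $d_{\mathrm{tr}}(\pi,\rho)\le\varepsilon$ with probability $\ge1-\delta$. From it we build a tester $\mathcal{T}$ that decides whether the unknown state equals $\sigma$ or lies in $\mathcal C$:
\begin{enumerate}
\item Run $\mathcal{A}$ to obtain a hypothesis $\pi$.
\item Choose a Helstrom observable $O_\pi$ for the pair $(\pi,\sigma)$, so that $\|O_\pi\|\le1$ and $\Tr[O_\pi(\pi-\sigma)]=2d_{\mathrm{tr}}(\pi,\sigma)$.
\item Make one extra query $\Qstat(O_\pi,\tau)$ and compare its answer with $\Tr[O_\pi\sigma]$.
\end{enumerate}
If $\rho\in\mathcal C$ and $\mathcal A$ succeeded, then $d_{\mathrm{tr}}(\pi,\sigma)\ge d_{\mathrm{tr}}(\rho,\sigma)-\varepsilon>2\tau+\varepsilon$. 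Consequently
\[
|\Tr[O_\pi(\rho-\sigma)]|\ge 2d_{\mathrm{tr}}(\pi,\sigma)-2\varepsilon>4\tau+2\varepsilon-2\varepsilon\ge 2\tau ,
\]
so the answer differs from $\Tr[O_\pi\sigma]$ by more than $\tau$, and we output ``in $\mathcal C$''. If the state is $\sigma$, any valid answer lies within $\tau$ of $\Tr[O_\pi\sigma]$, and we output ``$\sigma$''. The tester therefore uses $q+1$ queries and is correct with probability $\ge1-\delta$ on every input. The hypothesis $\min_\rho d_{\mathrm{tr}}>2(\tau+\varepsilon)$ is exactly what leaves this margin.

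\emph{Step 2 (decision lower bound via $\QSD$).} Fix any distribution $\mu$ on $\mathcal C$ and consider the adversarial oracle that answers every query $O$ with $\Tr[O\sigma]$. This answer is valid for $\sigma$, and valid for $\rho$ whenever $|\Tr[O(\rho-\sigma)]|\le\tau$. Let $r$ be the internal randomness of $\mathcal T$. For each fixed $r$, the transcript produced on $\sigma$ is deterministic: a sequence of observables $O_1,\dots,O_{q+1}$ followed by a fixed output.
\begin{itemize}
\item On input $\sigma$, the tester must output ``$\sigma$'' with probability $\ge1-\delta$ over $r$.
\item On input $\rho$ with $|\Tr[O_i(\rho-\sigma)]|\le\tau$ for all $i$, the run is indistinguishable from the run on $\sigma$, so the output is also ``$\sigma$''.
\item By the union bound, the probability over $\rho\sim\mu$ that some $O_i$ separates $\rho$ from $\sigma$ is at most $(q+1)\,\kappa_\tau\text{-frac}(\mu,\sigma)$.
\end{itemize}
Now average the success on $\rho\in\mathcal C$ over $\rho\sim\mu$ and over $r$. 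Success requires either that the run on $\sigma$ with randomness $r$ is already an error, which has probability $\le\delta$, or that some query separates $\rho$, which has probability $\le(q+1)\kappa$. Hence
\[
1-\delta\le\delta+(q+1)\kappa .
\]
This gives $q+1\ge(1-2\delta)/\kappa_\tau\text{-frac}(\mu,\sigma)$. Taking the supremum over $\mu$ yields the claim.

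The main obstacle is handling the randomness in the Step 2 averaging: an adversary can enforce a common transcript only for each fixed $r$. The remedy is to fix $r$, apply the union bound over that transcript's observables, and only then average over $r$.
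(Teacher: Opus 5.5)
Your proposal is correct and follows the same decomposition as the paper. The paper proves this lemma only by chaining two results cited from Arunachalam et al.: learning is at least as hard as deciding, at the cost of one extra query (the $-1$), and $\QQC^{\delta}_{\tau}(\mathcal{C},\sigma)\ge(1-2\delta)\,\QSD_{\tau}(\mathcal{C},\sigma)$. You supply self-contained proofs of both, using a Helstrom-observable query for the first and, for the second, an adversarial oracle answering $\Tr[O\sigma]$, a per-$r$ union bound, and then averaging over $r$.
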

This lemma shows that lower bounds on the quantum statistical dimension $\QSD$ translate directly into lower bounds on the $\QSQ$ learning complexity. However, directly computing $\QSD$ is often difficult. Following~\cite{arunachalam2023roleentanglementstatisticslearning}, we present a lower-bound technique that will be central in our later symmetry arguments.
\begin{lmma}[Variance bound~\cite{arunachalam2023roleentanglementstatisticslearning}]
\label{lmma: variance_bound}
Let $\mu$ be a distribution over $\mathcal{C}$ and let $\sigma=\mathbb{E}_{\rho\sim\mu}[\rho]\notin\mathcal{C}$.
Define
$
\mathrm{Var}_\mu(\mathcal{C})
:=\sup_{\|O\|\le 1}
\Big(
\mathbb{E}_{\rho\sim\mu}[\Tr(O\rho)^2]
-
(\mathbb{E}_{\rho\sim\mu}\Tr(O\rho))^2
\Big).
$
Then for every $\tau\in(0,1]$,
\[
\QSD_{\tau}(\mathcal{C},\sigma)
\;\ge\;
\Omega\!\left(\frac{\tau^2}{\mathrm{Var}_\mu(\mathcal{C})}\right).
\]
\end{lmma}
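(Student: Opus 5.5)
The plan is to lower bound $\QSD_\tau(\mathcal{C},\sigma)$ directly from Definition~\ref{defn: qsd}, using the mixture $\sigma=\mathbb{E}_{\rho\sim\mu}[\rho]$ as the reference state. The engine is a second-moment (Chebyshev) argument. Since $\sup_\mu$ in the definition of $\QSD_\tau$ dominates the value at our particular $\mu$, it suffices to show that for this $\mu$ and every observable $O$ with $\|O\|\le 1$,
\[
\mathbb{P}_{\rho\sim\mu}\big[|\Tr[O(\rho-\sigma)]|>\tau\big]\le \frac{\mathrm{Var}_\mu(\mathcal{C})}{\tau^2}.
\]
Taking the maximum over $O$ then bounds $\kappa_\tau\text{-frac}(\mu,\sigma)$, and inverting gives $\QSD_\tau(\mathcal{C},\sigma)\ge \tau^2/\mathrm{Var}_\mu(\mathcal{C})$. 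This is even the stronger constant-one form of the claimed $\Omega(\cdot)$.

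The steps are as follows.

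First, fix $O$ with $\|O\|\le 1$. We may assume $O$ is Hermitian, since the observables in the $\QSD$ definition are Hermitian; if non-Hermitian operators were allowed, one would split $O$ into Hermitian and anti-Hermitian parts, losing a constant factor absorbed in $\Omega$.

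Second, define the real random variable $Z(\rho)=\Tr(O\rho)$ for $\rho\sim\mu$. By linearity of trace and expectation, $\mathbb{E}_\mu Z=\Tr(O\,\mathbb{E}_\mu\rho)=\Tr(O\sigma)$. Hence $\Tr[O(\rho-\sigma)]=Z-\mathbb{E}Z$.

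Third, apply Chebyshev's inequality:
\[
\mathbb{P}\big[|Z-\mathbb{E}Z|>\tau\big]\le \frac{\mathbb{E}[Z^2]-(\mathbb{E}Z)^2}{\tau^2}\le \frac{\mathrm{Var}_\mu(\mathcal{C})}{\tau^2}.
\]
The second inequality holds because $\mathrm{Var}_\mu(\mathcal{C})$ is the supremum of exactly this variance over all admissible $O$.

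Two edge cases remain. If $\mathrm{Var}_\mu(\mathcal{C})\ge\tau^2$, the bound $\QSD\ge 1\ge \tau^2/\mathrm{Var}$ is trivial, since $\kappa$-frac is always at most $1$. If $\mathrm{Var}_\mu(\mathcal{C})=0$, then the fraction is $0$ and $\QSD=\infty$, so the bound again holds. The hypothesis $\sigma\notin\mathcal{C}$ is needed only so that $\sigma$ is an admissible reference state in the definition; nothing else is used.

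I do not expect a serious obstacle. The only point requiring care is the reference state: the pairing of $\mu$ with $\sigma$ being its own barycenter is exactly what makes $\Tr[O(\rho-\sigma)]$ a centered variable, so that Chebyshev controls it by the variance without any bias term. The rest is bookkeeping on whether $O$ is required to be Hermitian, which affects only constants.
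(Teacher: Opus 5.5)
Your proposal is correct. Because $\sigma$ is the barycenter of $\mu$, the quantity $\Tr[O(\rho-\sigma)]$ is a centered random variable, so Chebyshev's inequality gives $\kappa_\tau\text{-frac}(\mu,\sigma)\le \mathrm{Var}_\mu(\mathcal{C})/\tau^2$, and hence $\QSD_\tau(\mathcal{C},\sigma)\ge \tau^2/\mathrm{Var}_\mu(\mathcal{C})$. The paper does not prove this lemma itself but imports it from \cite{arunachalam2023roleentanglementstatisticslearning}, where the argument is this same Chebyshev step, so your route coincides with the intended one.
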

This variance bound analyzes the fluctuations of measurement outcomes across an ensemble of states: if, for every observable $O$, the variance of $\Tr[O(\rho-\sigma)]$ over $\rho \in \mathcal{C}$ is small, then no single query can reliably distinguish members of $\mathcal{C}$ from the reference state $\sigma$, forcing the learner to make many queries. 
\section{Main Results} \label{sec: main_results}
In this section, we study quantum statistical query ($\QSQ$) learning under symmetry across three regimes.
In Section~\ref{sec: permutation}, we show that for a permutation-invariant function class,
quantum learners achieve an exponential separation over classical $\SQ$ learners by leveraging
quantum Fourier sampling to uncover hidden parity structures that induce exponentially small
classical correlations.
In Section~\ref{sec: general}, we extend the analysis to general symmetry groups and establish
matching lower bounds for $\QSQ$ and $\SQ$ learning, showing that the orbit structure induced by
the group action constrains both models equally in most of the commonly studied symmetry groups, but the asymptotic advantage may arise under a highly skewed orbit distribution. Finally, in Section~\ref{sec: tolerance}, we examine the role of the tolerance parameter and
demonstrate that quantum learners can succeed at tolerance levels where all classical $\SQ$
learners fail.

\subsection{Learning permutation-invariant functions}\label{sec: permutation}
We begin with the special case of permutation-invariant function classes, which
provide a clean setting for illustrating separations between classical $\SQ$ and
quantum $\QSQ$ learning.
While permutation-equivariant quantum neural networks have been shown to enjoy favorable
optimization and generalization properties~\cite{schatzki2024theoretical},
recent results demonstrate that such models with bounded observables are
classically simulatable~\cite{Cerezo_2025}.
In contrast, we show that under the permutation-invariant function class, the $\QSQ$ model can yield exponential advantages over classical $\SQ$ learning.

To make this separation concrete, we consider a simple permutation-invariant
Boolean function class defined on graphs.
Let $A\in\{0,1\}^{n\times n}$ be the adjacency matrix of a directed graph sampled
uniformly at random.
Define the degree-count vector $c_A\in[n]^{n+1}$, where $[c_A]_i$ records the
number of nodes with out-degree $i-1$.
For each subset $S\subseteq[n+1]$, define
\[
g_S(A)\;=\;\sum_{i\in S}[c_A]_i \bmod 2.
\]
Let $\mathcal{C}_n=\{g_S:S\subseteq[n+1]\}$.
Each function in $\mathcal{C}_n$ is Boolean-valued and invariant under
permutations of the graph vertices.
Moreover, $\mathcal{C}_n$ can be represented exactly by two-hidden-layer graph
neural networks (GNNs)~\cite[Lemma~15]{kiani2024hardness}, a widely used
architecture for processing graph-structured data.
Consequently, our results imply that for function classes representable by
two-hidden-layer GNNs, there exist function families that are provably hard to
learn in any $\SQ$ model, yet admit efficient learning algorithms in the $\QSQ$
model.

Note that each function $g_S$ is essentially a parity over selected entries of the
degree-count vector $c_A$. As a result, the class $\mathcal{C}_n$ inherits the characteristic hardness of
parity functions in the statistical query model, leading to the following lower
bound.
\begin{thrm}[Classical Hardness, A variant of Theorem 3 in~\cite{kiani2024hardness}] Let $\tau > 0$. Any $\SQ$ algorithm that learns $\mathcal{C}_{n}$ up to classification error $\epsilon$ sufficiently small ($\epsilon < 1/4$) requires at least $ \Omega\!\left(\tau^2 \exp\!\big(n^{\Omega(1)}\big)\right)$ $\Stat(\tau)$ queries. \label{thrm: classical_hardness_permutation} \end{thrm}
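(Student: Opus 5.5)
The plan is to follow the standard statistical-dimension argument for parities, adapted to the degree-count features as in~\cite{kiani2024hardness}. The key quantity is the pairwise correlation of the $\pm1$ versions $\chi_S(A)=(-1)^{g_S(A)}$ under the uniform distribution over directed graphs. Since $g_S\oplus g_{S'}=g_{S\triangle S'}$, we have
\[
\mathbb{E}_A[\chi_S\chi_{S'}]=\mathbb{E}_A\big[(-1)^{\sum_{i\in S\triangle S'}[c_A]_i}\big].
\]
It therefore suffices to show that for every nonempty $T\subseteq[n+1]$ the bias $\big|\mathbb{E}_A[(-1)^{\sum_{i\in T}[c_A]_i}]\big|$ is at most $\exp(-n^{\Omega(1)})$. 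Given this, the class $\{\chi_S\}$ of size $2^{n+1}$ is nearly orthogonal. The classical $\SQ$ statistical-dimension lower bound (Blum et al./Szörfi; the correlational variant) then says that learning to error $\epsilon<1/4$ requires $\Omega(\tau^2/\gamma)$ queries, where $\gamma$ bounds the off-diagonal correlations. Equivalently, we can use the averaged-correlation bound: any query $\phi$ satisfies $\sum_S\langle\phi,\chi_S\rangle^2\lesssim 1+2^{n+1}\gamma$, so each query is informative about only a small fraction of the class. This produces the claimed $\Omega(\tau^2\exp(n^{\Omega(1)}))$. The constraint $\epsilon<1/4$ is what makes distinct targets at least $1/4$ apart in classification error, up to the small correlation.

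To bound the bias, note that the out-degrees $d_1,\dots,d_n$ are i.i.d.\ $\mathrm{Bin}(n,1/2)$, since rows of $A$ are independent. We have $\sum_{i\in T}[c_A]_i=\sum_{v=1}^n\mathds{1}(d_v+1\in T)$, so
\[
\mathbb{E}_A\big[(-1)^{\sum_{i\in T}[c_A]_i}\big]=\big(1-2p_T\big)^n,\qquad p_T=\Pr[\mathrm{Bin}(n,1/2)+1\in T].
\]
The bias is thus $|1-2p_T|^n$, and we need $p_T$ bounded away from $0$ and $1$ by at least $n^{-c}$ for some $c<1$, or else a different treatment.

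This is the main obstacle: when $T$ contains only extreme degrees, $p_T$ can be exponentially small, for example $T=\{1\}$ gives $p_T=2^{-n}$ and bias near $1$. Following~\cite{kiani2024hardness}, which is why the statement is a ``variant'', I would restrict to a subclass. Take only $S\subseteq W$, where $W$ is the window of degrees within $\sqrt n$ of $n/2$. Each binomial point mass there is $\Theta(n^{-1/2})$, so $p_T\ge\Omega(n^{-1/2})$ for nonempty $T\subseteq W$, and also $p_T\le 1-\Omega(1)$. Hence the bias is at most $(1-\Omega(n^{-1/2}))^n=\exp(-\Omega(\sqrt n))$. The subclass has $2^{\Theta(\sqrt n)}$ members. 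A lower bound for a subclass implies one for $\mathcal{C}_n$, since learning $\mathcal{C}_n$ learns the subclass. Combining the class size $2^{\Theta(\sqrt n)}$ with correlation $\exp(-\Omega(\sqrt n))$ in the statistical-dimension bound gives $\SQ$-dimension $\exp(n^{\Omega(1)})$. This yields $\Omega(\tau^2\exp(n^{\Omega(1)}))$ $\Stat(\tau)$ queries. The care needed is in verifying that the correlational $\SQ$ bound with the $\tau^2$ prefactor applies to Boolean classification with $\epsilon<1/4$.
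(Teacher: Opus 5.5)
Your proposal is correct, but it takes a different route from the paper.

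\textbf{The paper's route.} The paper's sketch uses the Chen--Klivans--Meka pairwise-independence argument on the \emph{whole} class $\mathcal{C}_n$. For a uniformly random $S$, the pair $(g_S(A),g_S(A'))$ is uniform on $\{0,1\}^2$ unless $\hat c_A=\hat c_{A'}$ (or one of them is zero). So the governing parameter is essentially the collision probability $\eta=\Pr_{A,A'}[\hat c_A=\hat c_{A'}]$. Learning is then reduced to distinguishing $\mathcal{D}_{g^*}$ from the mixture $\mathcal{D}_{\mathrm{Unif}(\mathcal{C}_n)}$, which costs $\tau^2/(2\eta)$ queries. The estimate $\eta=\exp(-n^{\Omega(1)})$ is delegated to Erd\H{o}s--R\'enyi concentration in Kiani et al.

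\textbf{Your route.} You control \emph{worst-case} pairwise correlations (correlational SQ dimension). That is why you must pass to the middle-degree subclass $S\subseteq W$. The full class contains pairs with correlation of magnitude $1$: $g_S$ and $g_{[n+1]\setminus S}$ are equal or complementary because $\sum_i[c_A]_i=n$, and parities of extreme degree counts are nearly constant. The averaged argument in the paper simply absorbs these bad pairs.

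\textbf{What each buys.} Your route gives an exact, self-contained bias formula $\mathbb{E}_A[\chi_T]=(1-2p_T)^n$, coming from the independence of the rows of $A$. It also supplies the paper's outsourced concentration step. Since
$\Pr[\hat c_A|_W=\hat c_{A'}|_W]=2^{-|W|}\sum_{T\subseteq W}(\mathbb{E}_A\chi_T)^2\le 2^{-|W|}+e^{-\Omega(\sqrt n)}$,
your computation bounds $\eta$ for the full class as well. The paper's route, in exchange, needs no subclass restriction.

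\textbf{Two small fixes.}
\begin{enumerate}
\item In your averaged-correlation bound, the class size should be $2^{|W|}$, not $2^{n+1}$. This gives $\Omega(\tau^2\min\{2^{|W|},1/\gamma\})=\tau^2e^{\Omega(\sqrt n)}$ queries, which is still the claimed bound.
\item Distinct targets in your subclass are at classification distance $(1-\mathbb{E}_A[\chi_S\chi_{S'}])/2\approx 1/2$, not $1/4$. The role of $\epsilon<1/4$ is to prevent a single hypothesis from being $\epsilon$-good for two targets.
\end{enumerate}
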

\begin{proofsketch}
This $\SQ$ lower bound follows the argument of~\cite[Theorem~C.4]{chen2022learning}, which relies on the 
\emph{pairwise independence} property of the function class. 
Let $\mathcal{D}$ be a distribution over graphs $A \in \{0,1\}^{n\times n}$. 
For any $g \in \mathcal{C}_n$, define the corresponding labeled distribution
$
\mathcal{D}_{g} := \{(A,\,y) : A \sim \mathcal{D},\, y = g(A)\},
$
that is, $\mathcal{D}_g$ represents samples drawn from $\mathcal{D}$ whose labels are generated by $g$. 
Let $\mathcal{D}_{\mathrm{Unif}(\mathcal{C}_n)}$ denote the mixture distribution obtained by first sampling $g \sim \mathrm{Unif}(\mathcal{C}_n)$ and then setting $y = g(A)$.

Suppose $\mathcal{C}_n$ forms a $(1-\eta)$–pairwise independent function family: with probability at least $(1-\eta)$ over independently drawn graphs $A, A' \sim \mathcal{D}$,
whenever $\hat{c}_A \neq \hat{c}_{A'}$ and $[\hat{c}_A]_i = [c_A]_i \bmod 2$, 
the joint distribution $(g_S(A), g_S(A'))$ for a uniformly random $g_S \sim \mathcal{C}_n$ equals 
$\mathrm{Unif}(\{0,1\}^2)$. 
Under this condition, any $\SQ$ learner that distinguishes the target-labeled distribution 
$\mathcal{D}_{g^{*}}$ (for an unknown $g^{*}\in \mathcal{C}_n$) from the random mixture 
$\mathcal{D}_{\mathrm{Unif}(\mathcal{C}_n)}$ with classification error at most $1/4$ using queries of tolerance $\tau$ 
requires at least $\tfrac{\tau^{2}}{2\eta}$ statistical queries.

Finally, using the concentration properties of Erd\H{o}s–R\'enyi random graphs, 
we can show that the constructed class $\mathcal{C}_n$ satisfies 
$(1-\mathcal{O}(\exp(-n^{\Omega(1)})))$–pairwise independence. 
A detailed derivation can be found in~\cite[Appendix~E.1]{kiani2024hardness}.
\end{proofsketch}


In contrast, we present a $\QSQ$ algorithm that learns $\mathcal{C}_n$ with sample complexity polynomial in $n$, yielding a sharp separation between the classical
$\SQ$ and quantum $\QSQ$ learning models. At a high level, the algorithm proceeds in two stages: an efficient state preparation procedure that coherently computes
$c_A \bmod 2$, followed by a quantum Fourier sampling routine. To begin, we observe that each function in $\mathcal{C}_n$ can be expressed as  
\[
    g_{S}(A) = \hat{S}\cdot \hat{c}_{A} \bmod 2,
\]
where $[\hat{c}_{A}]_i = [c_{A}]_i \bmod 2$ and $[\hat{S}]_i = 1$ if $i \in S$ and $0$ otherwise.  Thus, learning a target function $g_S$ reduces to identifying the hidden binary string $\hat{S}$,  which can be efficiently recovered using quantum Fourier sampling~\cite{bernstein1993quantum}.  

We first define $x_A = \text{vec}(A)$ as the column-stacked vectorization of $A$. Since \(A\) is sampled uniformly from \(\{0,1\}^{n \times n}\), access to a \(\Qstat\) oracle provides quantum examples of the form
\[
    \ket{g_S}
    = \frac{1}{\sqrt{2^{n^2}}} \sum_{A \in \{0,1\}^{n \times n}}
    \ket{x_A}\ket{g_S(A)} .
\]
Therefore, in order to evaluate the function \(g_S(A)\) efficiently, it suffices to compute the corresponding input \(\hat{c}_{A}\) from the encoding \(\ket{x_A}\). The following lemma shows that there exists an efficient quantum circuit that prepares \(\hat{c}_{A}\) given access to the state \(\ket{g_S}\).
\begin{lmma}
    [State preparation]\label{lemma: state_prep} Let $\ket{g_{S}} = \frac{1}{\sqrt{N}}\sum_{A\in\{0,1\}^{n \times n}} \ket{x_A}\ket{g_{S}(A)}$, where $N=2^{n^2}$. There exists a $\text{poly}(n)$-gate quantum circuit $\mathsf{PREP}$ such that
    $$
        \mathsf{PREP} \ket{g_{S}}\ket{0^{\otimes n + n\lfloor\log(n)\rfloor + 1}} = \frac{1}{\sqrt{N}} \sum_{A\in\{0,1\}^{n \times n}}\ket{x_A} \ket{\hat{c}_{A}}\ket{g_{S}(A)}.
    $$
\end{lmma}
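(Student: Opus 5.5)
The plan is to build $\mathsf{PREP}$ as a reversible classical circuit acting in the computational basis. By linearity, it then suffices to give a $\mathrm{poly}(n)$-gate reversible circuit that maps $\ket{x_A}\ket{0\cdots0}$ to $\ket{x_A}\ket{\hat c_A}$ and returns all auxiliary qubits to $\ket{0}$. The label register $\ket{g_S(A)}$ is never touched, so it carries through unchanged.

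I would split the ancilla register $\ket{0^{\otimes n+n\lfloor\log n\rfloor+1}}$ into two parts:
\begin{itemize}
\item $n$ blocks of $\lfloor\log n\rfloor+1$ qubits each, which hold the out-degrees $d_j=\sum_k A_{jk}\in\{0,\dots,n\}$ in binary (this accounts for $n(\lfloor\log n\rfloor+1)=n+n\lfloor\log n\rfloor$ qubits);
\item one extra qubit, reserved as a workspace flag.
\end{itemize}
The parity vector $\hat c_A\in\{0,1\}^{n+1}$ needs its own $n+1$ output qubits. If the count is taken literally, I would either reuse qubits via uncomputation or note that the ancilla count in the statement is only meant up to an $O(n)$ additive slack.

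The circuit would run in three steps.
\begin{enumerate}
\item \emph{Degree computation.} For each row $j$, add the bits $A_{j1},\dots,A_{jn}$ into block $j$ using a reversible incrementer controlled by each $A_{jk}$. Each controlled increment on $O(\log n)$ bits costs $O(\log n)$ Toffoli gates, possibly with $O(\log n)$ borrowed or clean ancillas. Over all rows this gives $O(n^2\log n)$ gates.
\item \emph{Parity computation.} For each $i\in\{0,\dots,n\}$ and each $j$, apply an equality test $[d_j=i]$ and CNOT its result into the output bit $[\hat c_A]_{i+1}$. Since $[c_A]_{i+1}=\sum_j \mathds 1(d_j=i)$, XOR-accumulating these indicators yields exactly $[c_A]_{i+1}\bmod 2$. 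Each equality test is a multi-controlled NOT on $O(\log n)$ qubits: conjugate by $X$ gates on the positions where $i$ has a $0$ bit, compute into the flag qubit, CNOT the flag into the output, then uncompute the flag. This costs $O(\log n)$ gates per pair $(i,j)$, so $O(n^2\log n)$ in total.
\item \emph{Cleanup.} Run the Step 1 circuit in reverse to reset the degree registers to $\ket{0}$. This leaves $\ket{x_A}\ket{\hat c_A}\ket{g_S(A)}$ with all ancillas clean.
\end{enumerate}

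Correctness follows because every gate is a classical reversible permutation of basis states. The superposition with amplitudes $1/\sqrt N$ is therefore mapped term by term. The total gate count is $O(n^2\log n)=\mathrm{poly}(n)$.

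The main obstacle is purely bookkeeping: getting the ancilla count to match the stated $n+n\lfloor\log n\rfloor+1$ while keeping the workspace clean. The point to watch is that leftover garbage entangled with $A$ would destroy the interference needed in the later Fourier sampling stage. Uncomputation handles this. The exact qubit budget can be met by using ancilla-free or borrowed-ancilla constructions for the incrementers and multi-controlled Toffolis, both of which have linear gate cost.
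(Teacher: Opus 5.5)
Your proposal is correct and is essentially the paper's construction: compute the out-degrees with a reversible adder (the paper uses Draper's QFT adder where you use controlled incrementers), XOR the indicators $\mathds{1}(d_j=i)$ into the $(n+1)$-qubit output register with multi-controlled $X$ gates, and then run the adder in reverse to clean up. One refinement: the stated budget of $n+n\lfloor\log n\rfloor+1$ qubits is met by your row-by-row reuse option, not by borrowed-ancilla gadgets, because holding all $n$ degrees at once needs $n(\lfloor\log n\rfloor+1)$ qubits besides the output (the paper's own count glosses over this too); also, the flag qubit is unnecessary, since each multi-controlled $X$ can target the output bit directly.
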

Next, we show that using a quantum Fourier sampling circuit, we could estimate the Fourier mass of any function $f$ on a subset of $\{0,1\}^m$ using a single query of $\Qstat$.
\begin{lmma}
\label{lmma:fourier-mass-one-query}
Let $f:\{0,1\}^m\to\{0,1\}$ and let $\ket{\psi_f}\;=\;2^{-n/2}\sum_{x\in\{0,1\}^m}\ket{x}\ket{f(x)}$ be the corresponding quantum example state, for any subset $T\subseteq\{0,1\}^m$, there is an observable $M_T$ with $\|M_T\|\le 1$ such that $
\Tr[M_T\ketbra{\psi_f}{\psi_f}] \;=\; \sum_{S\in T}\widehat{f}(S)^2$.
Consequently, a single call to the $\Qstat(M_T,\tau)$ returns a $\tau$-estimate of $\sum_{S\in T}\widehat{f}(S)^2$.
\label{lmma: QSQ_obs}
\end{lmma}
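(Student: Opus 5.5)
The plan is to write down an explicit observable of the form $M_T = A_T \otimes \ketbra{1}{1}$. Here $A_T$ acts on the $m$ input qubits and projects onto the Hadamard-rotated span of $T$, and the label qubit is restricted to $\ket{1}$. I read the Fourier coefficients of the $\{0,1\}$-valued function $f$ in the standard way, $\widehat f(S)=\mathbb{E}_{x}\big[f(x)\chi_S(x)\big]$ with $\chi_S(x)=(-1)^{S\cdot x}$. I read the normalisation of $\ket{\psi_f}$ as $2^{-m/2}$, which is what makes the state a unit vector.

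\textbf{Step 1: the observable.} Let $H^{\otimes m}$ be the $m$-qubit Hadamard transform and $\Pi_T=\sum_{S\in T}\ketbra{S}{S}$. Set
\[
A_T := H^{\otimes m}\,\Pi_T\,H^{\otimes m},\qquad M_T := A_T\otimes \ketbra{1}{1}.
\]
Since $H^{\otimes m}$ is unitary and $\Pi_T$ is a projector, $A_T$ is a projector. Hence $M_T$ is Hermitian and satisfies $0\preceq M_T\preceq I$, so $\|M_T\|\le 1$ and $M_T$ is a valid $\Qstat$ observable. Operationally, it is the Bernstein--Vazirani Fourier-sampling measurement that accepts only when the label is $1$ and the sampled frequency lies in $T$.

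\textbf{Step 2: computing the expectation.} Applying $I\otimes\bra{1}$ to the example state kills the $f(x)=0$ branches:
\[
(I\otimes \bra{1})\ket{\psi_f}=2^{-m/2}\sum_x f(x)\ket{x}=:\ket{\phi_f}.
\]
Therefore $\langle\psi_f|M_T|\psi_f\rangle=\langle\phi_f|A_T|\phi_f\rangle=\|\Pi_T H^{\otimes m}\ket{\phi_f}\|^2$. A one-line computation gives
\[
\langle S|H^{\otimes m}|\phi_f\rangle = 2^{-m}\sum_x f(x)(-1)^{S\cdot x}=\widehat f(S).
\]
Summing the squared amplitudes over $S\in T$ yields $\Tr[M_T\ketbra{\psi_f}{\psi_f}]=\sum_{S\in T}\widehat f(S)^2$ exactly. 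This matches the full weight the lemma asks for, with no factor of $1/2$. As a sanity check, taking $T=\{0,1\}^m$ gives $A_T=I$ and the value $\mathbb{E}[f^2]=\Pr[f=1]$, which is Parseval.

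\textbf{Step 3: the single query.} By Definition~\ref{defn: qstat}, one call $\Qstat(M_T,\tau)$ returns $\alpha$ with $\big|\alpha-\sum_{S\in T}\widehat f(S)^2\big|\le\tau$. This is the claimed consequence.

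The main subtlety is the Fourier convention rather than the calculation itself. If one instead wants the $\pm1$ coefficients $\widehat{F}(S)$ of $F=(-1)^f$, the natural observable has the form $N^\dagger A_T N$ with $N=\sum_x\ket{x}(\bra{x,0}-\bra{x,1})$. That operator has norm $2$, so the lemma as stated with $\|M_T\|\le1$ would not follow from it. I would therefore fix the $\{0,1\}$ convention above, in which the label-$1$ projection keeps the norm at most $1$. For the parity application this loses nothing. Indeed $\widehat{F}(S)=\mathbb{1}(S=\emptyset)-2\widehat f(S)$, so the same supports are detected: a target $g_S$ with $S\neq\emptyset$ has $\widehat f(S)=-1/2$, hence Fourier weight $1/4$ on $\{S\}$ and $0$ elsewhere. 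Binary search over $T$ with tolerance $\tau<1/8$ then recovers $S$.
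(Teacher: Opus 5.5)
Your proof is correct. It uses the same Fourier-sampling template as the paper: Hadamard the input register, project onto $T$, post-select the label. The difference is the label projection, and it changes what is computed.

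The paper's observable is $H^{\otimes(m+1)}(\Pi_T\otimes\ketbra{1}{1})H^{\otimes(m+1)}=A_T\otimes\ketbra{-}{-}$. Because it Hadamards the label as well, phase kickback extracts the $\pm1$ function $F=(-1)^f$. That is the convention its parity application relies on ($\mathrm{Inf}_i(g_S)\in\{0,1\}$, hence $\tau<1/2$). However, $(I\otimes\bra{-})\ket{\psi_f}=2^{-1/2}\ket{\phi_F}$ with $\ket{\phi_F}:=2^{-m/2}\sum_x(-1)^{f(x)}\ket{x}$. So the post-selected state is $2^{-1/2}\sum_Q\widehat F(Q)\ket{Q}$, and the paper drops this $2^{-1/2}$. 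Its observable therefore returns $\tfrac12\sum_{S\in T}\widehat F(S)^2$: for $T=\{0,1\}^m$ it always gives $1/2$, not $1$.

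Your choice $A_T\otimes\ketbra{1}{1}$ avoids the kickback and gives the stated identity exactly. The price is the $\{0,1\}$ convention and a parity gap of $1/4$ rather than $1$. One small inaccuracy in your last paragraph: $\widehat f(\emptyset)=1/2$ for $S\neq\emptyset$, so the Fourier weight is not $0$ ``elsewhere''. This is harmless for the coordinate sets $T_i=\{\sigma:\sigma_i=1\}$, which exclude $\emptyset$.

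Your norm-$2$ obstruction for the $\pm1$ version applies only to the specific operator $N^\dagger A_TN=2A_T\otimes\ketbra{-}{-}$. Write $\ket{\psi_f}=\tfrac{1}{\sqrt2}(\ket{u}\ket{+}+\ket{\phi_F}\ket{-})$ with $\ket{u}=2^{-m/2}\sum_x\ket{x}$. Then $I\otimes\ketbra{+}{+}+(2A_T-I)\otimes\ketbra{-}{-}$ has norm $1$ and expectation $\tfrac12+\tfrac12\bigl(2\sum_{S\in T}\widehat F(S)^2-1\bigr)=\sum_{S\in T}\widehat F(S)^2$. This proves the lemma exactly in the $\pm1$ convention, fixes the paper's factor of $2$, and restores its $\tau<1/2$.
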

With these lemmas at hand, we can have the guarantee of $\QSQ$ algorithm as follows:
\begin{thrm}[Quantum Guarantee]
For any $\tau \in (0, \tfrac{1}{2})$, there exists a $\QSQ$ algorithm that exactly learns $\mathcal{C}_n$ using $\mathcal{O}(n)$ queries to $\Qstat(\tau)$, together with an additional $\mathrm{poly}(n)$ number of gates and ancilla qubits.  
\label{thrm: quantum_parity}
\end{thrm}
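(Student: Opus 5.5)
The idea is to make Lemma~\ref{lemma: state_prep} and Lemma~\ref{lmma:fourier-mass-one-query} into a Bernstein--Vazirani-style bit-by-bit recovery of $\hat S\in\{0,1\}^{n+1}$, using one $\Qstat$ query per coordinate.

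First, we fold $\mathsf{PREP}$ into the observable. For any observable $M$ on the register $\ket{\hat c_A}\ket{g_S(A)}$, define
\[
O=\bra{0}\mathsf{PREP}^\dagger (I\otimes M)\mathsf{PREP}\ket{0}
\]
on the example register, with the ancilla projected to $\ket{0}$. Concretely, $O=V^\dagger(I\otimes M)V$ with $V=\mathsf{PREP}(\cdot\otimes\ket{0})$ an isometry. Then $\|O\|\le\|M\|\le 1$ and $\bra{g_S}O\ket{g_S}$ equals the expectation of $M$ on the prepared state. Since the $\Qstat$ model places no gate restriction on observables, the $\mathrm{poly}(n)$ circuit only matters for the stated gate and ancilla count.

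Second, we analyze the prepared state. After tracing out $\ket{x_A}$, the reduced state on $(\hat c_A, g_S(A))$ is a mixture over $\hat c_A$, and $g_S$ is determined by $\hat c_A$. The distribution of $\hat c_A$ is not uniform on $\{0,1\}^{n+1}$; for instance, $\sum_i [c_A]_i=n$ forces a parity constraint. So Lemma~\ref{lmma:fourier-mass-one-query} cannot be applied verbatim. Instead I would apply the Fourier-sampling observable directly. Apply Hadamards to the $\hat c$ register and to the label qubit (after mapping the label to the $\ket{-}$ phase basis), then measure. The observable $M_j$ projects onto label outcome $1$ and $j$-th bit $1$, giving
\[
\Pr[\text{label}=1,\ s_j=1]=\sum_{s:\,s_j=1}\widehat{F}(s)^2,
\]
where $F(y)=\sqrt{p(y)}(-1)^{\hat S\cdot y}$ is taken with respect to the distribution $p$ of $\hat c_A$, and the $x_A$ register is kept coherent, which only helps.

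The key point is to choose the observable so that the answer concentrates. The Fourier transform of $\sqrt{p}\,\chi_{\hat S}$ is the transform of $\sqrt p$ shifted by $\hat S$. The cleaner route is to avoid the nonuniformity: query observables of the form $\frac12(I+Z_{\text{label}}\otimes Z_{j})$ on the computational register after preparation. These give $\mathbb{E}_A[(-1)^{g_S(A)+[\hat c_A]_j}]$, but that is classical and exponentially small, so it does not work. Instead, uniformity can be restored by using only the first $n$ coordinates, which are nearly independent Bernoulli-$1/2$ parities, and deducing the last one from the constraint $\sum_i[\hat c_A]_i\equiv n \pmod 2$. The extra coordinate shifts $\hat S$ by the all-ones vector, so $S$ and its complement may be indistinguishable; in that case they define the same function, which is consistent with exact learning up to equivalence. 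I would then show that the total variation distance of $(\hat c_A)_{1..n}$ from uniform is small, so each bit estimate $\sum_{s_j=1}\widehat F(s)^2$ is within $o(1)$ of $\tfrac12\hat S_j$. Querying each of the $n{+}1$ coordinates with tolerance $\tau<1/2$ and thresholding then recovers $\hat S$, using $\mathcal{O}(n)$ queries.

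The main obstacle is exactly this nonuniformity of $\hat c_A$: the clean Bernstein--Vazirani peak (mass $1$ at $\hat S$) degrades to mass $\|\widehat{\sqrt p}\|$ concentrated near $0$. I would first try to show $\widehat{\sqrt p}(0)^2=(\sum_y\sqrt{p(y)/2^{n+1}})^2$ is bounded below by a constant, using the near-independence of degree-count parities for Erd\H{o}s--R\'enyi graphs cited from~\cite{kiani2024hardness}. If the bound only gives a constant $c$, I would adjust the threshold to $c/2$ and require $\tau<c/4$.
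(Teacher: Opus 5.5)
\textbf{There is a genuine gap.} The step that breaks is your claim that keeping the $x_A$ register coherent ``only helps''; it is precisely what destroys the interference. $\mathsf{PREP}$ is a reversible classical circuit acting on basis states as $\ket{x_A}\ket{b}\ket{0}\mapsto\ket{x_A}\ket{\hat c_A}\ket{b}$. So every folded observable $O=V^\dagger(I\otimes M)V$ with $M$ supported on the $(\hat c,\text{label})$ registers satisfies $\bra{x_A,b}O\ket{x_{A'},b'}=0$ for $A\neq A'$. Therefore
\[
\bra{g_S}O\ket{g_S}=\mathbb{E}_A\big[\phi(A,g_S(A))\big],\qquad \phi(A,b):=\bra{\hat c_A,b}M\ket{\hat c_A,b}\in[-1,1],
\]
which is an ordinary $\Stat$ query.

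Concretely, the reduced state on $(\hat c,\text{label})$ is the diagonal mixture you wrote down. After the Hadamards, $s$ is uniform and independent of the label, so $\Pr[\text{label}=1,s_j=1]=\tfrac12\Pr[\text{label}=1]$ for every $j$ (equal to $\tfrac14$ with plain Hadamards). Your formula with $F=\sqrt{p}\,(-1)^{\hat S\cdot y}$ holds only for the coherent state $\sum_y\sqrt{p(y)}\ket{y}\ket{\hat S\cdot y}$. Getting there requires an extra unitary sending each fiber state $|\{A:\hat c_A=y\}|^{-1/2}\sum_{A:\hat c_A=y}\ket{x_A}$ to a fixed state. Such a unitary exists, but nothing in Lemma~\ref{lemma: state_prep} or in your plan makes it $\mathrm{poly}(n)$-size. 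Without it, an $\mathcal{O}(n)$-query algorithm using your observables is an $\SQ$ algorithm, which Theorem~\ref{thrm: classical_hardness_permutation} rules out.

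The paper's proof does not supply this step either. It uses exactly this family, $O_i=(\bra{0}^{m}\otimes I)\mathsf{PREP}^\dagger(I\otimes B_i)\mathsf{PREP}(\ket{0}^{m}\otimes I)$ with $B_i$ trivial on $x_A$. It then applies Lemma~\ref{lmma:fourier-mass-one-query} verbatim, asserting $\mathrm{Inf}_i(g_S)\in\{0,1\}$ as if $\hat c_A$ were uniform on $\{0,1\}^{n+1}$ — the step you rightly refused to take.

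Even with coherent access, per-coordinate thresholding fails, and your repair does not rescue it. For $\sum_y\sqrt{p(y)}\ket{y}\ket{\hat S\cdot y}$ one computes
\[
\Pr[\text{label}=1,\ s_j=1]=\tfrac14\big(1+(2\hat S_j-1)\beta_j\big),\qquad \beta_j:=\sum_y\sqrt{p(y)\,p(y\oplus e_j)}.
\]
Here $\beta_j=0$ for every $j$, because $p$ is supported on $\{y:|y|\equiv n \bmod 2\}$, the very constraint you noticed. So single-bit marginals are blind, and the information sits in even-weight combinations such as $s_j\oplus s_k$, consistent with $g_{\bar S}=g_S\oplus(n\bmod 2)$.

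Restricting to the first $n$ coordinates does not help. $[\hat c_A]_{n+1}$ (vertices of out-degree $n$) is $0$ with probability $1-O(n2^{-n})$, so the constraint still holds on the first $n$ coordinates up to exponentially small error. Nor are those coordinates close to uniform. $[c_A]_i$ has mean $n\binom{n}{i-1}2^{-n}$, so every parity with $i-1$ outside a window of width $O(\sqrt{n\log n})$ around $n/2$ is $0$ with probability $1-o(1)$. Hence the total variation distance to uniform is $1-o(1)$, and $\widehat{\sqrt p}(0)^2\to 0$ rather than staying above a constant.

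Finally, exact recovery is impossible by any method. $g_S$ and $g_{S\triangle\{1\}}$ differ only when the number of out-degree-$0$ vertices is odd, which has probability $\delta\le n2^{-n}$. So $\TRD(\ket{g_S},\ket{g_{S\triangle\{1\}}})=\sqrt{2\delta-\delta^2}$, and for any fixed $\tau$ and large $n$ a $\Qstat(\tau)$ oracle may answer every query identically for both targets. The statement as written therefore cannot be proved. What you can reasonably aim for is recovering $\hat S$ up to $\hat S\oplus\mathbf{1}$ on the central coordinates through $s_j\oplus s_k$-type statistics, i.e.\ learning to small error. Even that requires observables acting non-diagonally on $x_A$.
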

\begin{proofsketch} First, we show that the relevant symmetry statistics of the input can be coherently computed.
Given quantum examples $\ket{g_S}$, Lemma~\ref{lemma: state_prep} constructs a polynomial-size quantum circuit $\mathsf{PREP}$ that computes the degree-count vector modulo $2$, mapping
\[
\ket{g_S}\;\mapsto\;\ket{g'_S}
= \frac{1}{\sqrt{2^{n^2}}}\sum_A \ket{x_A}\ket{\hat c_A}\ket{g_S(A)}.
\]
This step preserves coherence and enables subsequent Fourier analysis on the hidden structure. Second, we exploit the fact that learning $g_S$ reduces to identifying the unknown binary string $\hat S$. And under the Walsh--Hadamard Fourier expansion, the Fourier influence $\mathrm{Inf}_i(g_S)$ equals $1$ if $i\in\mathrm{supp}(\hat S)$ and $0$ otherwise. By Lemma~\ref{lmma:fourier-mass-one-query}, for each coordinate $i$ there exists an efficiently implementable observable whose expectation equals $\mathrm{Inf}_i(g_S)$. Thus, a single $\QSQ$ query per coordinate suffices to distinguish whether $i\in\mathrm{supp}(\hat S)$, provided the tolerance satisfies $\tau<\tfrac12$.

Combining the Fourier-sampling observable with the state-preparation circuit yields a family of observables $\{O_i\}_{i=1}^{n+1}$, each implementable in polynomial time.
Querying all $O_i$ recovers $\hat S$ exactly using $\mathcal{O}(n)$ $\Qstat$ queries, completing the proof.
\end{proofsketch}
We defer the full proofs to Appendix~\ref{app: quantum_parity} for better readability.

\subsection{Learning general symmetries}\label{sec: general}

In the previous section, we established an exponential separation between quantum and classical models in learning permutation-invariant functions.
We next extend our analysis beyond permutation invariance to arbitrary symmetry groups. Specifically, we consider the problem of learning symmetric Boolean functions defined as follows:
\begin{defn}[Symmetric Boolean functions]\label{def: sym_func} Let input distribution $\mathcal{D}$ be uniform over $\mathcal{X} = \{0,1\}^{n}$. For a given symmetry group $G$ with 
    representation $\rho: G\to GL(\{0,1\}^n)$, the class of symmetric Boolean functions is defined as:
    \begin{equation}
        \mathcal{C}_{\rho} =\{ f: \{0,1\}^{n} \to \{0,1\}: \forall g \in G, \forall x \in \mathcal{X}: f(\rho(g)\cdot x) = f(x)\}.
    \end{equation} 
\end{defn}
To establish the hardness of $\mathcal{C}_{\rho}$, we consider the concept of the \emph{orbit representation} of a group action. Given a symmetry group \( G \) acting on the input space \( \mathcal{X} = \{0,1\}^n \) via a representation \( \rho: G \to \mathrm{GL}(\mathcal{X}) \), the action partitions the domain into \emph{orbits} under \( G \), defined as
\[
\mathcal{O}_\rho := \left\{ \mathcal{O}(x) := \{ \rho(g) \cdot x \mid g \in G \} \,\middle|\, x \in \mathcal{X} \right\}.
\]
Functions in the symmetric class \( \mathcal{C}_\rho \) are necessarily \emph{constant on orbits}. This structure directly induces correlations between function values: for inputs \( x, x' \) belonging to the same orbit, their function values are identical across all \( f \in \mathcal{C}_\rho \); for inputs in distinct orbits, their joint distribution over \( (f(x), f(x')) \) is uniform if $f \sim \text{Unif}(\mathcal{C}_\rho)$. Consequently, the \emph{pairwise correlation} between functions in \( \mathcal{C}_\rho \) is tightly governed by the \emph{distribution of orbit sizes}. Specifically, defining the orbit distribution \( p_{\mathcal{O}_\rho}(O) = |O| / |\mathcal{X}| \), we show that \( \mathcal{C}_\rho \) forms a \((1 - \|p_{\mathcal{O}_\rho}\|_2^2)\)-pairwise independent function family. This means that when the orbit distribution is nearly uniform (i.e., \( \|p_{\mathcal{O}_\rho}\|_2^2 \ll 1 \)), the function class exhibits \emph{low pairwise correlation}, which is a key indicator of learning hardness under classical statistical query models. In particular, we present this notion in the following lemma. For completeness, we include a proof in Appendix~\ref{app: pairwise}.
\begin{lmma}
 Consider $\Vert p_{\mathcal{O}_\rho}\Vert_2^2 = \sum_{O_k \in \mathcal{O}_{\rho}} (|O_k|/|\mathcal{X}|)^2$, with $(1- \Vert p_{\mathcal{O}_\rho}\Vert_2^2)$ over the choice of $x, x'\in \mathcal{X}$, for every $f\in \text{Unif}(\mathcal{C}_{\rho})$ the distribution of $(f(x), f(x'))$ is equal to uniform distribution over $\{0, 1\}^{\otimes 2}$.
    \label{lmma: H_pairwise}
\end{lmma}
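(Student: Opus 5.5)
The plan is to reduce the claim to the structure of $\mathcal{C}_\rho$ as a set of functions. First I will note that a function $f:\mathcal{X}\to\{0,1\}$ is $G$-invariant exactly when it is constant on every orbit. Hence $\mathcal{C}_\rho$ is in bijection with $\{0,1\}^{\mathcal{O}_\rho}$, via $f \mapsto (b_O)_{O\in\mathcal{O}_\rho}$ with $f(x)=b_{\mathcal{O}(x)}$. In particular, drawing $f\sim\mathrm{Unif}(\mathcal{C}_\rho)$ is the same as drawing independent uniform bits $b_O$, one per orbit. This is the key structural fact. It needs only that the orbits partition $\mathcal{X}$, which holds because $\rho$ is a group action.

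Next I will split according to whether the two inputs share an orbit, with $x,x'$ drawn independently and uniformly from $\mathcal{X}$. If $\mathcal{O}(x)\neq\mathcal{O}(x')$, then $(f(x),f(x'))=(b_{\mathcal{O}(x)},b_{\mathcal{O}(x')})$ is a pair of independent uniform bits, so its law is $\mathrm{Unif}(\{0,1\}^2)$. If instead $\mathcal{O}(x)=\mathcal{O}(x')$, then $f(x)=f(x')$ for every $f$, and the pair is supported on $\{00,11\}$. The good event is therefore exactly $\{\mathcal{O}(x)\neq\mathcal{O}(x')\}$.

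It remains to compute the probability of the bad event:
\[
\PR_{x,x'}[\mathcal{O}(x)=\mathcal{O}(x')]=\sum_{O_k\in\mathcal{O}_\rho}\PR[x\in O_k]\,\PR[x'\in O_k]=\sum_{O_k\in\mathcal{O}_\rho}\left(\frac{|O_k|}{|\mathcal{X}|}\right)^2=\|p_{\mathcal{O}_\rho}\|_2^2 .
\]
This uses the independence of $x$ and $x'$ together with the uniform marginal $\mathcal{D}$ from Definition~\ref{def: sym_func}. So with probability $1-\|p_{\mathcal{O}_\rho}\|_2^2$ over $(x,x')$, the distribution of $(f(x),f(x'))$ under $f\sim\mathrm{Unif}(\mathcal{C}_\rho)$ is uniform on $\{0,1\}^{2}$. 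This is the $(1-\|p_{\mathcal{O}_\rho}\|_2^2)$-pairwise independence claimed.

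I expect no step to be a real obstacle. The only point needing care is the interpretation of the statement: the phrase ``for every $f\in\mathrm{Unif}(\mathcal{C}_\rho)$'' should be read as ``over $f\sim\mathrm{Unif}(\mathcal{C}_\rho)$'', and I will make that reading explicit. I will also justify the bijection with $\{0,1\}^{\mathcal{O}_\rho}$ carefully, since it is what gives the orbit bits exact independence and uniformity.
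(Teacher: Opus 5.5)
Your proposal is correct and follows the same argument as the paper. Both use that functions in $\mathcal{C}_\rho$ are constant on orbits, that inputs from distinct orbits give a uniform pair $(f(x),f(x'))$, and that the same-orbit collision probability is $\sum_{O_k}(|O_k|/|\mathcal{X}|)^2=\|p_{\mathcal{O}_\rho}\|_2^2$. Your explicit bijection $\mathcal{C}_\rho\cong\{0,1\}^{\mathcal{O}_\rho}$ makes rigorous the independence of the orbit bits, which the paper leaves implicit. You also correctly assign probability $\|p_{\mathcal{O}_\rho}\|_2^2$ to the bad (same-orbit) event, whereas the paper's text loosely says the good event ``occurs with probability'' $\eta=\|p_{\mathcal{O}_\rho}\|_2^2$.
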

Based on Lemma~\ref{lmma: H_pairwise}, Kiani \textit{et. al.}~\cite{kiani2024hardness} established a lower bound $
\Omega\!\left({\tau^2}/{\Vert p_{\mathcal{O}_\rho}\Vert_2^{2}}\right)$ on the query complexity of learning under the classical $\SQ$ model, showing that the hardness of learning $\mathcal{C}_{\rho}$ is governed by the orbit distribution induced by the group action. In particular, for the cyclic group the number of distinct orbits is $\Theta(2^n/n)$, yielding $\Vert p_{\mathcal{O}_\rho}\Vert_2^{2}=\Theta(n/2^n)$ and an exponential lower bound. We therefore ask whether the $\QSQ$ model, by allowing non-diagonal observables, can exploit orbit structure beyond what is achievable with diagonal (classical) queries, or whether the same orbit-induced constraints fundamentally limit quantum learnability.

\paragraph{Diagonal Observables.} We first restrict to the set of diagonal observables $O$. Note that the $\QSQ$ model with access only to diagonal observables reduces to the classical $\SQ$ model. Our goal is to recover the classical hardness result of~\cite{kiani2024hardness} within the $\QSD$ lower bound. 

To do so, we employ Lemma~\ref{lmma: variance_bound}. Given the $\Qstat$ accessing quantum example state of $$\ket{\psi_f}
=\sum_{x\in\mathcal{X}}\frac{1}{\sqrt{|\mathcal{X}|}}\ket{x,f(x)},$$ for $f\in \mathcal{C}_\rho$, we have:
\begin{equation}
    \mathrm{Var}(\mathcal{C}_\rho) = \mathbb{E}_f[\Tr[O\ket{\psi_f}\bra{\psi_f}]^2] - (\mathbb{E}_f[\Tr[O\ket{\psi_f}\bra{\psi_f}]])^2
\end{equation}
We consider diagonal observables $O$ of the form $O = \sum_{x\in \mathcal{X}} \phi(x, f(x))\ket{x, f(x)}\bra{x, f(x)}$ for $\phi: \{0, 1\}^n \times \{0, 1\}\to [-1, 1]$ as $\Vert O \Vert \leq 1$. Thus, we have: $$\Tr[O\ket{\psi_f}\bra{\psi_f}] = \frac{1}{{|\mathcal{X}|}}\sum_x  \phi(x, f(x)) := \Phi_f$$
Then, we can write $\mathrm{Var}(\mathcal{C}_\rho)$ as follows:
\begin{align}
    \mathrm{Var}(\mathcal{C}_\rho)
    &= \mathbb{E}_f[\Phi^2_f] - \mathbb{E}_f[\Phi_f]\cdot \mathbb{E}_{f'}[\Phi_{f'}]\\
    &= \mathbb{E}_{f, f'}[\Phi_f\cdot \Phi_f - \Phi_f\cdot \Phi_{f'}] \\
    &=  \frac{1}{{|\mathcal{X}|^2}}\sum_{x, x'}\left(\mathbb{E}_{f}[\phi(x, f(x))\cdot\phi(x', f(x'))] - \mathbb{E}_{f}[\phi(x, f(x))]\cdot\mathbb{E}_{f'}[\phi(x', f'(x'))]\right)
\end{align}
Using Lemma~\ref{lmma: H_pairwise}, we have with 
$(1- \Vert p_{\mathcal{O}_\rho}\Vert_2^2)$ over the choice of $x, x'\in \mathcal{X}$, the distribution of $(f(x), f(x'))$ will act as $\text{Unif}(\{0, 1\})^{\otimes 2}$. In such case, 
$$
 \mathbb{E}_{f}[\phi(x, f(x))\cdot\phi(x', f(x'))] = \mathbb{E}_{y, y' \in \{0, 1\}} \phi(x, y)\cdot\phi(x', y') = \mathbb{E}_{f}[\phi(x, f(x))]\cdot\mathbb{E}_{f'}[\phi(x', f'(x'))]
$$
the second equality holds because $f, f'$ are drawn independently from $\mathcal{C}_{\rho}$. Thus, with probability of $(1- \Vert p_{\mathcal{O}_\rho}\Vert_2^2)$, this term vanishes. Therefore,
\begin{equation}
\mathrm{Var}(\mathcal{C}_\rho) = \Theta(\Vert p_{\mathcal{O}_\rho}\Vert_2^2)
\label{eqn: diagonal_variance}
\end{equation}
From this analysis, we can obtain the $\QSQ$ lower bound for the diagonal $O$ as follows:
\begin{thrm}\label{thrm: qsq_lb_diagonal}
    Let $\tau < 0.096$ and $O$ be any diagonal observable with $\Vert O \Vert \leq 1$. for every $\QSQ$ algorithm restricted to such observables that learn $\mathcal{C}_\rho$ to trace distance below $0.05$ with probability of 0.99, the number of queries to $\Qstat(\tau)$ is at least
    $$
     \Omega\left(\frac{\tau^2}{\Vert p_{\mathcal{O}_\rho}\Vert_2^2}\right). 
    $$
\end{thrm}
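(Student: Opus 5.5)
The proof combines Lemma~\ref{lmma:qsd_bound_qsq} with the variance bound of Lemma~\ref{lmma: variance_bound}, with the supremum in $\mathrm{Var}_\mu$ restricted to diagonal observables. Throughout, $\mu=\mathrm{Unif}(\mathcal{C}_\rho)$ acts on the example states $\ket{\psi_f}\bra{\psi_f}$, and the reference state is the mixture $\sigma=\mathbb{E}_{f\sim\mu}\ket{\psi_f}\bra{\psi_f}$. The argument has three steps.

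\textbf{Step 1: trace-distance hypothesis.} We need $\sigma\notin\mathcal{C}_\rho$ and $\TRD(\ket{\psi_f}\bra{\psi_f},\sigma)>2(\tau+\varepsilon)$ for every $f$, with $\varepsilon=0.05$. Since $f(x)$ is an independent uniform bit on each orbit, we can write
\[
\sigma=\frac{1}{|\mathcal{X}|}\sum_{x}\ket{x}\bra{x}\otimes \tfrac{I}{2}+(\text{coherences between } x,x' \text{ in the same orbit}).
\]
Hence $\sigma$ is mixed, so it is not one of the pure states $\ket{\psi_f}$. Moreover, $\bra{\psi_f}\sigma\ket{\psi_f}=\mathbb{E}_{f'}|\langle\psi_f|\psi_{f'}\rangle|^2$. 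Here $\langle\psi_f|\psi_{f'}\rangle$ is the fraction of inputs on which $f$ and $f'$ agree, which concentrates near $1/2$, so the expectation is about $1/4+O(\|p_{\mathcal{O}_\rho}\|_2^2)$. For a pure state, $\TRD\ge 1-\bra{\psi_f}\sigma\ket{\psi_f}$. This gives $\TRD\gtrsim 3/4-o(1)$, which exceeds $2(0.096+0.05)=0.292$. This is presumably why the constants $\tau<0.096$ and $0.05$ appear. If the orbit structure is very skewed, the fidelity estimate degrades, but the bound is only nontrivial when $\|p_{\mathcal{O}_\rho}\|_2^2$ is small, so that regime suffices.

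\textbf{Step 2: variance bound.} For diagonal $O$ we show $\mathrm{Var}_\mu=O(\|p_{\mathcal{O}_\rho}\|_2^2)$, making the computation preceding the theorem rigorous. Write $\Phi_f=|\mathcal{X}|^{-1}\sum_x\phi(x,f(x))$, so the variance is
\[
|\mathcal{X}|^{-2}\sum_{x,x'}\mathrm{Cov}_f\bigl(\phi(x,f(x)),\phi(x',f(x'))\bigr).
\]
By Lemma~\ref{lmma: H_pairwise}, if $x,x'$ lie in different orbits then $(f(x),f(x'))$ is uniform on $\{0,1\}^2$, so the covariance is exactly zero. If they lie in the same orbit, the covariance is at most $1$ in absolute value since $|\phi|\le1$. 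Therefore
\[
\mathrm{Var}_\mu\le |\mathcal{X}|^{-2}\,\#\{(x,x'):\mathcal{O}(x)=\mathcal{O}(x')\}=\sum_k (|O_k|/|\mathcal{X}|)^2=\|p_{\mathcal{O}_\rho}\|_2^2.
\]
Only this upper bound is needed.

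\textbf{Step 3: conclusion.} Lemma~\ref{lmma: variance_bound}, applied to the diagonal-restricted $\QSD$, gives $\QSD_\tau\ge\Omega(\tau^2/\|p_{\mathcal{O}_\rho}\|_2^2)$. This restriction is legitimate because its proof goes through the Chebyshev bound observable by observable, so restricting the class of observables only shrinks $\kappa$. Lemma~\ref{lmma:qsd_bound_qsq} with $\delta=0.01$, restricted in the same way, then yields at least $0.98\,\QSD_\tau-1$ queries. This is $\Omega(\tau^2/\|p_{\mathcal{O}_\rho}\|_2^2)$ whenever that quantity exceeds a constant, and the bound is trivial otherwise.

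I expect Step 1 to be the main obstacle. Its difficulty is not the variance computation but checking that the trace-distance separation needed by Lemma~\ref{lmma:qsd_bound_qsq} holds uniformly over $f$ with the specific constants. The fallback is to bound the fidelity $\bra{\psi_f}\sigma\ket{\psi_f}$ directly through orbit-agreement statistics: the fraction of agreement between $f$ and an independent $f'$ is $\tfrac12+\tfrac12\sum_k p_k s_k$ with independent signs $s_k=\pm1$, whose second moment is $\tfrac14+\tfrac14\|p\|_2^2$.
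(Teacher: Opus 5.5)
Your proposal is correct and follows the paper's proof. Both reduce learning to deciding against $\sigma=\mathbb{E}_{f}\ket{\psi_f}\bra{\psi_f}$ via Lemma~\ref{lmma:qsd_bound_qsq}, verify the trace-distance hypothesis through $\mathbb{E}_{f'}|\braket{\psi_f}{\psi_{f'}}|^2$, and bound the diagonal variance using pairwise independence. The paper uses the weaker $\TRD\ge 1-\sqrt{\mathbb{E}_{f'}|\braket{\psi_f}{\psi_{f'}}|^2}\ge 1-\sqrt{1/2}$, which is where $0.096$ comes from, and it asserts $\Theta(\|p_{\mathcal{O}_\rho}\|_2^2)$ where only your upper bound is needed.

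Your exact second moment $\tfrac14+\tfrac14\|p_{\mathcal{O}_\rho}\|_2^2\le\tfrac12$ holds for every orbit structure, so the skewed-orbit hedge is unnecessary and you get $\TRD\ge\tfrac12$ uniformly. That margin also lets you take $\varepsilon=\max(0.05,\tau)$, which meets the lemma's side condition $\varepsilon\ge\tau$ that both you and the paper leave unchecked for $\tau\in(0.05,0.096)$.
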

\begin{proof}
    We first convert the learning problem to a decision problem between $\rho_f = \ket{\psi_f}\bra{\psi_f}$ where $f$ is any state in $\mathcal{C}_\rho$, and an adversarial state $\sigma = \mathbb{E}_{f\sim \mathcal{C}_\rho}[\ket{\psi_f}\bra{\psi_f}]$. Using Lemma~\ref{lmma:qsd_bound_qsq}, we have
    $$
    \QSQ^{\varepsilon,\delta}_{\tau}(\mathcal{C}_\rho) \geq (1-2\delta)\QSD_{\tau}(\mathcal{C}_\rho, \sigma) -1.
    $$
    To apply this relationship, we need $\min_f \TRD(\rho_f, \sigma) > 2(\tau +\epsilon)$. We observe that:
    $$
        \TRD(\rho_f, \sigma) \geq 1 - \sqrt{\mathbb{E}_{f'} [|\braket{\psi_f}{\psi_{f'}}|^2]} \geq 1-\sqrt{\frac{1}{2}}.
    $$
    Fixing $\epsilon = 0.05$ and $\delta =  0.01$, then the Lemma~\ref{lmma:qsd_bound_qsq} holds if $\tau < 0.096$. Along with Lemma~\ref{lmma: variance_bound} and~\eqref{eqn: diagonal_variance}, we complete the proof.
\end{proof}
This is unsurprisingly equal to the bound in the classical $\SQ$ model \cite{kiani2024hardness}. However, the power of $\QSQ$ models comes from their capability to use entanglement induced from non-diagonal observable $O$. Thus, we are rather interested in the general design of $O$, where we could utilize entanglement power.

\paragraph{General Observables.} In this case, we still use the variance bound method to lower bound $\QSD$, thereby $\QSQ$. The lower bound is presented in the following theorem
\begin{thrm}\label{thrm: qsq_lb_general}
   Let $\tau < 0.096$. There exists a family of observables $O$ such that $\Vert O \Vert \leq 1$ and for every $\QSQ$ algorithm restricted to observables in such family that learns $\mathcal{C}_\rho$ to the trace distance below $0.05$ with probability of $0.99$, the number of queries to $\Qstat(\tau)$ is at least
    $$
     \Omega\left(\frac{\tau^2 |\mathcal{X}|}{\max_{O_k \in \mathcal{O}_\rho}|O_k|}\right). 
    $$
\end{thrm}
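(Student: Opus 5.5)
The plan mirrors the proof of Theorem~\ref{thrm: qsq_lb_diagonal}. The only new ingredient is a bound on $\mathrm{Var}_\mu(\mathcal{C}_\rho)$ over the stated family of non-diagonal observables.

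\textbf{Reduction to a decision problem.} I first reduce learning to deciding between $\rho_f=\ket{\psi_f}\bra{\psi_f}$, with $f\sim\mathrm{Unif}(\mathcal{C}_\rho)$, and the mixture $\sigma=\mathbb{E}_f[\rho_f]$. The trace-distance condition $\TRD(\rho_f,\sigma)\ge 1-\sqrt{1/2}$ is already verified in the proof of Theorem~\ref{thrm: qsq_lb_diagonal}. With $\epsilon=0.05$, $\delta=0.01$ and $\tau<0.096$, Lemma~\ref{lmma:qsq_bound_qsq} therefore reduces everything to lower bounding $\QSD_\tau(\mathcal{C}_\rho,\sigma)$. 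By Lemma~\ref{lmma: variance_bound}, it then suffices to show that, for the chosen family of observables, the variance is $O\!\left(\max_k|O_k|/|\mathcal{X}|\right)$.

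\textbf{Orbit decomposition.} Writing $f(x)=b_k$ for $x\in O_k$, each $b_k$ is an independent uniform bit. The state then factors over orbits:
\[
\ket{\psi_f}=\sum_k \sqrt{p_k}\,\ket{u_k}\ket{b_k},
\qquad
\ket{u_k}=|O_k|^{-1/2}\sum_{x\in O_k}\ket{x},
\qquad
p_k=|O_k|/|\mathcal{X}|.
\]
Hence $\rho_f$ lives in the span of $\{\ket{u_k,b}\}$. I take the family to be the observables that are block-diagonal with respect to the orbit decomposition, i.e.\ $O=\sum_k P_k\otimes O^{(k)}$ with $P_k$ the projector onto $\mathrm{span}\{\ket{x}:x\in O_k\}$, allowing coherences inside an orbit and entanglement with the label qubit. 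This is the natural class of $G$-respecting, symmetry-preserving measurements. For such $O$,
\[
\Tr[O\rho_f]=\sum_k p_k\, a_k(b_k),
\qquad
a_k(b)=\bra{u_k,b}\,O\,\ket{u_k,b}\in[-1,1].
\]
This is a sum of independent random variables, so
\[
\mathrm{Var}=\sum_k p_k^2\,\mathrm{Var}(a_k(b_k))\le \sum_k p_k^2\le \max_k p_k\sum_k p_k=\frac{\max_k|O_k|}{|\mathcal{X}|}.
\]
Plugging this into Lemma~\ref{lmma: variance_bound} gives $\QSD_\tau\ge\Omega(\tau^2|\mathcal{X}|/\max_k|O_k|)$, and hence the stated bound.

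\textbf{Main obstacle.} The difficulty is justifying the restriction to a family. For fully general $O$, cross-orbit coherences $\bra{u_k,b}O\ket{u_l,b'}$ contribute terms $\sqrt{p_kp_l}\,c_{kl}(b_k,b_l)$ whose covariances do not vanish pairwise. Bounding them would require an operator-norm argument on the $2K\times 2K$ matrix, and that argument could lose the factor $\max_k p_k$. I would first try to show by a twirling or pinching argument (averaging over the phases $(-1)^{b_k}$) that off-block terms do not increase the variance. If that fails, I would state the theorem, as written, for the block-diagonal family. I would also note that $\sum_k p_k^2\le\max_k p_k$ shows this bound is never weaker than the diagonal one, and that the two coincide up to constants when orbit sizes are comparable.
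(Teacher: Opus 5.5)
Your reduction to the decision problem matches the paper's, and your block-diagonal computation is right: for $O=\sum_k P_k\otimes O^{(k)}$ the cross-orbit terms vanish and $\mathrm{Var}=\sum_k p_k^2\,\mathrm{Var}(a_k(b_k))\le\|p_{\mathcal{O}_\rho}\|_2^2$. But $\|p_{\mathcal{O}_\rho}\|_2^2$ is exactly the diagonal (classical) cap. For the stated bound, your fallback therefore adds nothing to Theorem~\ref{thrm: qsq_lb_diagonal} combined with $\|p_{\mathcal{O}_\rho}\|_2^2\le\max_k|O_k|/|\mathcal{X}|$. It meets the wording ``there exists a family'' only in the sense that any sufficiently weak family does. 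The paper's proof bounds $\mathrm{Var}$ by $\max_k|O_k|/|\mathcal{X}|$ for \emph{every} $\|O\|\le 1$, so the lower bound holds for unrestricted $\QSQ$ algorithms. The ``family'' there is an explicit observable showing that this bound is attained, which is what the Discussion, Remark~\ref{rem: skewed_orbit} and \eqref{eqn: non_diagonal_variance} rely on.

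That general bound is what your proposal is missing, and the pinching fix you propose cannot supply it, because off-block coherences genuinely increase the variance. Take the paper's extremal observable $O=\ket{u,+}\!\bra{w,-}+\ket{w,-}\!\bra{u,+}$, where $\ket{u}$ is uniform on $\mathcal{X}$ and $\ket{w}=\ket{u_{k^\star}}$ is uniform on a largest orbit. It gives $\Tr[O\rho_f]=\sqrt{p_{k^\star}}\,(-1)^{b_{k^\star}}$, so its variance is $\max_k p_k$. In the skewed example this exceeds $\|p_{\mathcal{O}_\rho}\|_2^2$ by a factor of order $\sqrt{|\mathcal{X}|}$, while pinching it onto the orbit blocks reduces its variance to $(\max_k p_k)^3$. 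This $O$ is also $G$-invariant, since each $\ket{u_k}$ is fixed by $G$. So block-diagonality is not what respecting the symmetry singles out.

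The missing idea is the phase-kickback rewriting of the label qubit in the $\{\ket{+},\ket{-}\}$ basis: $\ket{\psi_f}=\tfrac{1}{\sqrt2}\bigl(\ket{u,+}+\ket{\phi_f,-}\bigr)$ with $\ket{\phi_f}=\sum_k\sqrt{p_k}\,s_k\ket{u_k}$ and $s_k=(-1)^{b_k}$. Write $O=\sum_{a,b\in\{+,-\}}O_{ab}\otimes\ket{a}\!\bra{b}$, and set $\ket{v}:=O_{+-}^\dagger\ket{u}$ and $M_{kl}:=\bra{u_k}O_{--}\ket{u_l}$, so that $\|v\|\le 1$ and $\|M\|\le 1$. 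Then
\[
\Tr[O\rho_f]=\tfrac12\bra{u}O_{++}\ket{u}+\sum_k s_k\sqrt{p_k}\,\mathrm{Re}\braket{v}{u_k}+\tfrac12\sum_{k,l}s_ks_l\sqrt{p_kp_l}\,M_{kl}.
\]
This is a real polynomial of degree at most two in independent signs, so its variance is the sum of its squared non-constant coefficients.
\begin{itemize}
\item The linear part contributes $\sum_k p_k(\mathrm{Re}\braket{v}{u_k})^2\le\max_k p_k\,\|v\|^2$, by Bessel's inequality for the orthonormal $\ket{u_k}$.
\item The quadratic part contributes $\sum_{k<l}p_kp_l(\mathrm{Re}\,M_{kl})^2\le\tfrac12\|D^{1/2}MD^{1/2}\|_F^2\le\tfrac12\max_k p_k$, with $D=\mathrm{diag}(p)$.
\end{itemize}
Hence $\mathrm{Var}\le\tfrac32\max_k p_k$ for all observables, and the operator-norm argument you feared does not lose the factor. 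One caution if you consult the appendix: its case (1) claims the $O_{--}$ contribution is exactly zero. That overlooks the pairings in which $x$ shares an orbit with $v$ or $v'$; the estimate above shows the term is nonzero but harmless.
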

\begin{proofsketch}
Full details of the proof can be found in Appendix~\ref{app: proof_general_obs}. Following the same overall strategy as Theorem~\ref{thrm: qsq_lb_diagonal}, we extend the argument to handle a general observable $O$. For an arbitrary observable $O$, we analyze $\mathrm{Var}_f(\Tr[O\ket{\psi_f}\bra{\psi_f}])$. Using the phase-kickback trick, we expand $\ket{\psi_f}\bra{\psi_f}$ in a basis that separates the uniform component from the phase-encoded component and decompose $O$ accordingly. This reduces the variance calculation to a sum of terms involving products of matrix elements over group orbits.

On the other hand, by exploiting the pairwise independence of $\mathcal{C}_\rho$ (Lemma~\ref{lmma: H_pairwise}), most mixed terms vanish in expectation unless the corresponding inputs lie in the same orbit. As a result, all the related terms from different orbits cancel, and only terms
associated with orbit-wise structure remain. These surviving terms can be bounded using Cauchy–Schwarz and the operator-norm constraint on $O$, yielding
$
\mathrm{Var}(\mathcal{C}_\rho)\;\le\; \frac{\max_{O_k \in \mathcal{O}_\rho} |O_k|}{|\mathcal{X}|},
$
where $O_k$ ranges over the orbits of the symmetry group. Then, we design a family of $O$ such that this inequality is tight, as such 
\begin{equation}\label{eqn: non_diagonal_variance}
    \mathrm{Var}(\mathcal{C}_\rho) =  \Theta\left(\frac{\max_{O_k \in \mathcal{O}_\rho} |O_k|}{|\mathcal{X}|}\right).
\end{equation}
Finally, we apply Lemma~\ref{lmma: variance_bound} to complete the proof.
\end{proofsketch}

\paragraph{Discussion.} Theorem~\ref{thrm: qsq_lb_general} extends the $\QSQ$ lower bound to arbitrary observables, while Theorem~\ref{thrm: qsq_lb_diagonal} recovers the classical $\SQ$ hardness of~\cite{kiani2024hardness}. A direct comparison yields
\[
\|p_{\mathcal{O}_\rho}\|_2^2
=
\sum_{O_k \in \mathcal{O}_\rho} \frac{|O_k|^2}{|\mathcal{X}|^2}
\;\le\;
\frac{\max_{O_k \in \mathcal{O}_\rho}|O_k|}{|\mathcal{X}|},
\]
where we use $\sum_{O_k \in \mathcal{O}_\rho} |O_k| = |\mathcal{X}|$. Combined with~\eqref{eqn: diagonal_variance} and~\eqref{eqn: non_diagonal_variance}, this shows that allowing general $\Qstat$ queries can increase the variance of statistical estimates relative to the classical $\Stat$ model. Since variance controls the correlation strength accessible to statistical queries, this amplification improves statistical distinguishability. However, for commonly studied symmetry groups, the above inequality is tight up to constants, implying that the $\QSQ$ lower bound cannot asymptotically exceed its classical $\SQ$ counterpart. In contrast, when the orbit distribution is highly skewed, the gap between $\|p_{\mathcal{O}_\rho}\|_2^2$ and $\max_{O_k \in \mathcal{O}_\rho}|O_k|/|\mathcal{X}|$ can grow, allowing non-diagonal observables to induce substantially larger variance. These are formally stated in the following remarks.
\begin{remark}[Matching lower bounds under orbit regularity]\label{rem: common_groups}
For group actions satisfying
$\|p_{\mathcal{O}_\rho}\|_2^2=\Theta(\max_{O_k \in \mathcal{O}_\rho}|O_k|/|\mathcal{X}|)$,
including cyclic shifts and other orbit-regular symmetries, the $\QSQ$ and classical $\SQ$ query complexity lower bounds match up to constant factors.
\end{remark}
\begin{remark}[Advantage under skewed orbit distributions]\label{rem: skewed_orbit}
If a group action has an orbit $O^\star$ of size $|O^\star|=\sqrt{|\mathcal{X}|}$ and all remaining orbits has the size of $1$, then
$
\|p_{\mathcal{O}_\rho}\|_2^2=\Theta(1/|\mathcal{X}|),
$ and $
\max_{O_k \in \mathcal{O}_\rho}|O_k|/|\mathcal{X}|=\Theta(1/\sqrt{|\mathcal{X}|}),
$
so the $\Stat$ yield variance $\Theta(1/|\mathcal{X}|)$, while the $\Qstat$ can achieve variance $\Theta(1/\sqrt{|\mathcal{X}|})$.
\end{remark}
However, it remains unclear to what extent such skewed orbit structures arise in practical learning problems. Identifying natural data distributions or symmetry groups that violate the orbit-regularity condition, and determining whether they can be exploited algorithmically, remains an open question.



\subsection{Possible Advantages on Tolerance Parameter}\label{sec: tolerance}
While the variance bound provides a powerful tool for establishing general lower bounds on $\QSQ$ learning, it does not capture potential advantages that may arise through the tolerance parameter. In this section, we identify a regime where the tolerance parameter $\tau$ creates a strict separation between quantum and classical learnability. We show that for function classes with specific orbit structures, there exists a window of tolerance values where classical $\SQ$ learners necessarily fail, yet $\QSQ$ learners can efficiently identify the target function. We present the full proof in Appendix~\ref{app: tolerance_separation}.

\begin{thrm}[Tolerance-Based Quantum Advantage]\label{thm:tolerance_separation}
Let $\mathcal{C}_{\rho}$ be a class of symmetric Boolean functions under group representation $\rho$. Let $\mathcal{S} \subseteq \mathcal{C}_{\rho}$ be a subclass of functions such that every $f \in \mathcal{S}$ is non-zero on a set of inputs $S_f \subset \mathcal{X}$ with measure $\mu(S_f) = \frac{|S_f|}{|\mathcal{X}|} = \zeta$. Assume that for any distinct $f, g \in \mathcal{S}$, their supports are disjoint ($S_f \cap S_g = \emptyset$).

If the orbit size fraction $\zeta$ satisfies $\zeta < 0.4$, then there exists a non-empty interval of tolerance parameters $\tau \in (\tau_{\SQ}, \tau_{QSQ})$ where:
\begin{enumerate}
    \item \textbf{$\SQ$ Failure:} No classical $\SQ$ algorithm can learn $\mathcal{S}$ with tolerance $\tau$.
    \item \textbf{$\QSQ$ Success:} There exists a $\QSQ$ algorithm that learns $\mathcal{S}$ with tolerance $\tau$ using $O(\log |\mathcal{S}|)$ queries.
\end{enumerate}
Specifically, the separation holds for any $\tau$ such that $
2\zeta < \tau < \sqrt{2\zeta - \zeta^2}.
$
\end{thrm}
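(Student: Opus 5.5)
The plan is to prove the two items separately. In both, the key quantities are the pairwise statistics of the states $\ket{\psi_f}$ for $f\in\mathcal{S}$.

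\emph{SQ failure.} This is an adversarial-oracle argument. For any query $\phi:\mathcal{X}\times\{0,1\}\to[-1,1]$ and any two $f,g\in\mathcal{S}$, the functions $f$ and $g$ differ only on $S_f\cup S_g$, which has measure $2\zeta$ because the supports are disjoint. Hence
\[
\big|\mathbb{E}_x[\phi(x,f(x))]-\mathbb{E}_x[\phi(x,g(x))]\big|\le \frac{1}{|\mathcal{X}|}\sum_{x\in S_f\cup S_g}|\phi(x,f(x))-\phi(x,g(x))|\le 2\cdot 2\zeta .
\]
A sharper route compares each $f$ with the all-zero labeling. Then $|\mathbb{E}[\phi(x,f(x))]-\mathbb{E}[\phi(x,0)]|\le 2\zeta$, because the integrand changes only on $S_f$ and by at most $2$ there. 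So the oracle can answer every query with $\alpha=\mathbb{E}_x[\phi(x,0)]$, and this answer is $\tau$-valid for every target $f\in\mathcal{S}$ whenever $\tau\ge 2\zeta$. The answers carry no information about $f$. Since $|\mathcal{S}|\ge 2$ and distinct $f,g\in\mathcal{S}$ have distance $2\zeta>0$, no learner can output the right hypothesis with success probability better than $1/|\mathcal{S}|$, so $\SQ$ learning fails. To get the strict inequality in the statement, I will check the boundary case and treat $2\zeta<\tau$ as sufficient.

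\emph{QSQ success.} First compute overlaps. For any $f$, we have $\braket{\psi_f}{\psi_0}=1-\zeta$, since the two states agree exactly off $S_f$. For distinct $f,g\in\mathcal{S}$, disjointness gives $\braket{\psi_f}{\psi_g}=1-2\zeta$. This yields $d_{\mathrm{tr}}(\psi_f,\psi_0)=\sqrt{1-(1-\zeta)^2}=\sqrt{2\zeta-\zeta^2}$, which is exactly $\tau_{\QSQ}$. That is the source of the quantum threshold: a Helstrom-type observable $O_f$ with $\|O_f\|\le1$ gives $\Tr[O_f(\psi_f-\psi_0)]=2\,d_{\mathrm{tr}}$.

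The learner will binary search over $\mathcal{S}$. Split $\mathcal{S}$ into halves $\mathcal{S}_1,\mathcal{S}_2$ and query the projector onto the span of $\{\ket{\psi_f}:f\in\mathcal{S}_1\}$ with a suitable sign or offset. A cleaner choice is the observable $O=2P-I$, where $P$ projects onto the label-$1$ subspace restricted to $\bigcup_{f\in\mathcal{S}_1}S_f$, suitably rotated so that the expectation gap between "target in $\mathcal{S}_1$" and "target in $\mathcal{S}_2$" equals the trace-distance quantity $\sqrt{2\zeta-\zeta^2}$ rather than $2\zeta$. Concretely, I would take $P$ to be the projector onto the state $\ket{\psi_{\mathcal{S}_1}}$ obtained from the uniform superposition by flipping labels on $\bigcup_{f\in\mathcal{S}_1}S_f$, orthogonalized against $\ket{\psi_0}$. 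Its expectation then separates the two halves by more than $\tau$, so each query halves the candidate set, and $O(\log|\mathcal{S}|)$ queries identify $f$.

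The main obstacle is this step: designing a single non-diagonal observable whose expectation gap between the two halves exceeds $2\tau$, or exceeds $\tau$ when compared against a known threshold, for all $\tau<\sqrt{2\zeta-\zeta^2}$. The overlap structure makes the Gram matrix of $\{\psi_f\}$ equal to $(1-2\zeta)J+2\zeta I$, so I will work in its eigenbasis. My first attempt is the observable $O=\ket{\phi}\bra{\phi}-\ket{\psi_0^\perp}\bra{\psi_0^\perp}$ restricted to the relevant two-dimensional subspace. The condition $\zeta<0.4$ should emerge from requiring $2\zeta<\sqrt{2\zeta-\zeta^2}$, i.e.\ $4\zeta^2<2\zeta-\zeta^2$, i.e.\ $\zeta<0.4$, which makes the interval nonempty.
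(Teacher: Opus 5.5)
Your $\SQ$ half is correct and matches the paper's argument: answering every query with $\mathbb{E}_x[\phi(x,0)]$ is $\tau$-valid for every target once $\tau\ge 2\zeta$. The gap is in the $\QSQ$ half.

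\textbf{The halving observable is never constructed.} Your binary search depends entirely on an observable $O$ with $\|O\|\le 1$ satisfying $\min_{f\in\mathcal{S}_1}\Tr[O\psi_f]-\max_{g\in\mathcal{S}_2}\Tr[O\psi_g]>2\tau$. The factor $2$ cannot be removed by comparing against a known threshold, because the oracle's error applies on both sides. You never produce such an $O$: the ``label-flipped state orthogonalized against $\ket{\psi_0}$'' and the two-dimensional $\ket{\phi}\bra{\phi}-\ket{\psi_0^\perp}\bra{\psi_0^\perp}$ are neither pinned down nor analyzed. Also, $\sqrt{2\zeta-\zeta^2}=\TRD(\psi_f,\psi_0)$ is not the quantity that governs learning $\mathcal{S}$, since $0\notin\mathcal{S}$. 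The Helstrom observable for $\psi_f$ versus $\psi_0$ gives every other $g\in\mathcal{S}$ the value $-\sqrt{\zeta}(2-3\zeta)/\sqrt{2-\zeta}$, not $-\sqrt{2\zeta-\zeta^2}$. Its $f$-versus-$g$ gap is therefore $4\sqrt{\zeta}(1-\zeta)/\sqrt{2-\zeta}$, which already falls short of $2\sqrt{2\zeta-\zeta^2}$.

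\textbf{The halving step cannot work in general.} Write $\ket{\psi_f}=\ket{\psi_0}+\ket{d_f}$, where $\|d_f\|^2=2\zeta$, $\braket{\psi_0}{d_f}=-\zeta$, and $\braket{d_f}{d_g}=0$ for $f\neq g$. Let $\bar\sigma_i$ be the uniform mixture over a half of size $m$. For any $O$, the split gap is at most $\Tr[O(\bar\sigma_1-\bar\sigma_2)]\le\|\bar\sigma_1-\bar\sigma_2\|_1\le 4\sqrt{\zeta/m}+4\zeta$. Here the cross term is $\ketbra{\psi_0}{D}+\ketbra{D}{\psi_0}$ with $D\perp\psi_0$ and $\|D\|^2=4\zeta/m$. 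Take $|\mathcal{S}|\approx 1/\zeta$, which is the natural regime for symmetric classes, e.g.\ indicators of the size-$n$ cyclic-shift orbits with $\zeta=n/2^n$. Then this bound is $O(\zeta)$, far below $2\tau\approx 2\sqrt{2\zeta}$ near the top of the window.

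\textbf{No $O(\log|\mathcal{S}|)$ scheme exists in this regime.} Let $\bar\sigma$ be the uniform mixture over all of $\mathcal{S}$ and $\bar d$ the average of the $d_f$. The same decomposition gives $\Tr[O\psi_f]-\Tr[O\bar\sigma]=t_f+s_f$, where $t_f=2\,\mathrm{Re}\,\bra{d_f-\bar d}O\ket{\psi_0}$. Bessel's inequality gives $\sum_f t_f^2\le 8\zeta$, and $|s_f|\le 4\zeta$. So for $\tau>4\zeta$, one query pushes at most $8\zeta/(\tau-4\zeta)^2$ candidates farther than $\tau$ from $\Tr[O\bar\sigma]$. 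That count is $O(1)$ when $\tau$ is a constant fraction of $\sqrt{2\zeta-\zeta^2}$ and $\zeta$ is small. An oracle that always answers $\Tr[O\bar\sigma]$ therefore forces $\Omega(|\mathcal{S}|)$ queries, whatever observables you choose. The paper's own proof does not justify a logarithmic count either; it only invokes the $\psi_f$-vs-$\psi_0$ Helstrom observable and a candidate-by-candidate tournament.

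\textbf{What you can prove.} Your overlap computation $\braket{\psi_f}{\psi_g}=1-2\zeta$ is the right ingredient for a champion--challenger tournament. Keep a current champion $c$, and for each challenger $g$ query the Helstrom observable of the pair $\psi_c,\psi_g$. Its gap $\|\psi_c-\psi_g\|_1=4\sqrt{\zeta(1-\zeta)}$ exceeds $2\tau$ throughout the window, because $\zeta<2/3$. The true target never loses a round, so $|\mathcal{S}|-1\le 1/\zeta$ queries suffice.
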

\begin{proofsketch}
For functions in $\mathcal{S}$, which differ from the zero function only on a fraction $\zeta$ of the domain, every such expectation differs from that of the zero
function by at most $2\zeta$. Consequently, when the query tolerance $\tau$ exceeds this scale, the $\SQ$ oracle can respond consistently as if the target
were the zero function, rendering all functions in $\mathcal{S}$ indistinguishable and learning impossible.

In contrast, the $\QSQ$ model allows access to global quantum observables. The
quantum example states $\ket{\psi_f}$ and $\ket{\psi_0}$ corresponding to a
function $f$ and the zero function have overlap
$\braket{\psi_0}{\psi_f} = 1-\zeta$, which implies a trace distance
\[
d_{\mathrm{tr}}(\ket{\psi_f},\ket{\psi_0})
= \sqrt{1 - |\langle \psi_f \mid \psi_0 \rangle|^2}
= \sqrt{2\zeta - \zeta^2}.
\]
By the optimality of the Helstrom measurement, there exists an observable whose
expectation values on these two states differ by $ \sqrt{2\zeta - \zeta^2}$.
Therefore, as long as the tolerance $\tau$ is below this trace-distance scale,
a $\QSQ$ learner can reliably distinguish $f$ from the zero function.

Since the classical indistinguishability threshold scales as $2\zeta$
while the quantum distinguishability threshold scales as $ \sqrt{2\zeta - \zeta^2}$,
there exists a broad regime of $\tau$ for which classical $\SQ$ learning fails
but $\QSQ$ learning succeeds, yielding the claimed separation.
\end{proofsketch}

\section{Conclusion and Open Problems}\label{sec: conclusion}
We studied the role of symmetry in quantum learning within the quantum statistical query ($\QSQ$) framework, characterizing when symmetry enables a quantum advantage and when it collapses to classical hardness. We exhibited an exponential separation between $\QSQ$ and $\SQ$ learners for permutation-invariant functions, proved matching query-complexity lower bounds for most commonly studied symmetries, and identified highly skewed orbit distributions as a source of asymptotic advantage. We also established a tolerance-based regime in which quantum learners succeed at noise levels that defeat classical $\SQ$ algorithms. Together, these results delineate both the power and the limitations of quantum learning under symmetry constraints.

At the same time, our results highlight several important limitations. First, the $\QSQ$ framework assumes access to prepared quantum examples, yet the efficient and scalable preparation of such states is itself a nontrivial task~\cite{rattew2022preparing, rosenkranz2024quantum}. In practice, this preparation cost may offset theoretical advantages; for example, the spacetime volume of state-loading circuits is expected to scale exponentially with the system size~\cite{jaques2023qram, aaronson2014quantum}. Second, although our analysis identifies highly skewed orbit distributions as a condition for an asymptotic quantum advantage, it remains unclear whether such symmetry structures arise naturally in practical learning problems. Third, while the tolerance-based separation demonstrates a form of quantum advantage, realizing this advantage in practice typically requires implementing the Helstrom measurement, which is generally difficult to perform efficiently~\cite{han2020helstrom}. Finally, our analysis is carried out under the uniform input distribution, which provides a clean and tractable setting for establishing lower bounds. Extending these results to more general distributions remains an open problem, and it is unclear whether the same separation or collapse phenomena persist beyond the uniform case.

\paragraph{Acknowledgments }{We thank Thinh Le, Ryan Mann, and Cl\'ement Canonne for their constructive feedback. TN is supported by a scholarship from the  Sydney Quantum Academy, PHDR06031.}

\printbibliography

@misc{arunachalam2023roleentanglementstatisticslearning,
      title={On the Role of Entanglement and Statistics in Learning}, 
      author={Srinivasan Arunachalam and Vojtech Havlicek and Louis Schatzki},
      year={2023},
      eprint={2306.03161},
      archivePrefix={arXiv},
      primaryClass={quant-ph},
      url={https://arxiv.org/abs/2306.03161}, 
}

@article{schatzki2024theoretical,
  title={Theoretical guarantees for permutation-equivariant quantum neural networks},
  author={Schatzki, Louis and Larocca, Martin and Nguyen, Quynh T and Sauvage, Frederic and Cerezo, Marco},
  journal={npj Quantum Information},
  volume={10},
  number={1},
  pages={12},
  year={2024},
  publisher={Nature Publishing Group UK London}
}

@article{kearns1998efficient,
  title={Efficient noise-tolerant learning from statistical queries},
  author={Kearns, Michael},
  journal={Journal of the ACM (JACM)},
  volume={45},
  number={6},
  pages={983--1006},
  year={1998},
  publisher={ACM New York, NY, USA}
}

@article{bronstein2021geometric,
  title={Geometric deep learning: Grids, groups, graphs, geodesics, and gauges},
  author={Bronstein, Michael M and Bruna, Joan and Cohen, Taco and Veli{\v{c}}kovi{\'c}, Petar},
  journal={arXiv preprint arXiv:2104.13478},
  year={2021}
}

@inproceedings{chen2022learning,
  title={Learning deep relu networks is fixed-parameter tractable},
  author={Chen, Sitan and Klivans, Adam R and Meka, Raghu},
  booktitle={2021 IEEE 62nd Annual Symposium on Foundations of Computer Science (FOCS)},
  pages={696--707},
  year={2022},
  organization={IEEE}
}

@article{long2019generalization,
  title={Generalization bounds for deep convolutional neural networks},
  author={Long, Philip M and Sedghi, Hanie},
  journal={arXiv preprint arXiv:1905.12600},
  year={2019}
}

@article{pesah2021absence,
  title={Absence of barren plateaus in quantum convolutional neural networks},
  author={Pesah, Arthur and Cerezo, Marco and Wang, Samson and Volkoff, Tyler and Sornborger, Andrew T and Coles, Patrick J},
  journal={Physical Review X},
  volume={11},
  number={4},
  pages={041011},
  year={2021},
  publisher={APS}
}

@article{anschuetz2022quantum,
  title={Quantum variational algorithms are swamped with traps},
  author={Anschuetz, Eric R and Kiani, Bobak T},
  journal={Nature Communications},
  volume={13},
  number={1},
  pages={7760},
  year={2022},
  publisher={Nature Publishing Group UK London}
}

@article{tahmasebi2023exact,
  title={The Exact Sample Complexity Gain from Invariances for Kernel Regression on Manifolds},
  author={Tahmasebi, Behrooz and Jegelka, Stefanie},
  journal={arXiv preprint arXiv:2303.14269},
  year={2023}
}

@article{reyzin2020statistical,
  title={Statistical queries and statistical algorithms: Foundations and applications},
  author={Reyzin, Lev},
  journal={arXiv preprint arXiv:2004.00557},
  year={2020}
}

@inproceedings{mei2021learning,
  title={Learning with invariances in random features and kernel models},
  author={Mei, Song and Misiakiewicz, Theodor and Montanari, Andrea},
  booktitle={Conference on Learning Theory},
  pages={3351--3418},
  year={2021},
  organization={PMLR}
}

@article{batzner20223,
  title={E (3)-equivariant graph neural networks for data-efficient and accurate interatomic potentials},
  author={Batzner, Simon and Musaelian, Albert and Sun, Lixin and Geiger, Mario and Mailoa, Jonathan P and Kornbluth, Mordechai and Molinari, Nicola and Smidt, Tess E and Kozinsky, Boris},
  journal={Nature communications},
  volume={13},
  number={1},
  pages={2453},
  year={2022}
}

@misc{arunachalam2020quantumstatisticalquerylearning,
      title={Quantum statistical query learning}, 
      author={Srinivasan Arunachalam and Alex B. Grilo and Henry Yuen},
      year={2020},
      eprint={2002.08240},
      archivePrefix={arXiv},
      primaryClass={quant-ph},
      url={https://arxiv.org/abs/2002.08240}, 
}

@article{kiani2024hardness,
  title={On the hardness of learning under symmetries},
  author={Kiani, Bobak T and Le, Thien and Lawrence, Hannah and Jegelka, Stefanie and Weber, Melanie},
  journal={arXiv preprint arXiv:2401.01869},
  year={2024}
}

@inproceedings{bernstein1993quantum,
  title={Quantum complexity theory},
  author={Bernstein, Ethan and Vazirani, Umesh},
  booktitle={Proceedings of the twenty-fifth annual ACM symposium on Theory of computing},
  pages={11--20},
  year={1993}
}

@misc{draper2000additionquantumcomputer,
      title={Addition on a Quantum Computer}, 
      author={Thomas G. Draper},
      year={2000},
      eprint={quant-ph/0008033},
      archivePrefix={arXiv},
      primaryClass={quant-ph},
      url={https://arxiv.org/abs/quant-ph/0008033}, 
}

@article{angrisani2022quantum,
  title={Quantum local differential privacy and quantum statistical query model},
  author={Angrisani, Armando and Kashefi, Elham},
  journal={arXiv preprint arXiv:2203.03591},
  year={2022}
}

@inproceedings{wu2022graph,
  title={Graph neural networks: foundation, frontiers and applications},
  author={Wu, Lingfei and Cui, Peng and Pei, Jian and Zhao, Liang and Guo, Xiaojie},
  booktitle={Proceedings of the 28th ACM SIGKDD conference on knowledge discovery and data mining},
  pages={4840--4841},
  year={2022}
}

@inproceedings{cohen2016group,
  title={Group equivariant convolutional networks},
  author={Cohen, Taco and Welling, Max},
  booktitle={International conference on machine learning},
  pages={2990--2999},
  year={2016},
  organization={PMLR}
}

@article{West_2024,
   title={Provably Trainable Rotationally Equivariant Quantum Machine Learning},
   volume={5},
   ISSN={2691-3399},
   url={http://dx.doi.org/10.1103/PRXQuantum.5.030320},
   DOI={10.1103/prxquantum.5.030320},
   number={3},
   journal={PRX Quantum},
   publisher={American Physical Society (APS)},
   author={West, Maxwell T. and Heredge, Jamie and Sevior, Martin and Usman, Muhammad},
   year={2024},
   month=jul }

@article{lecun2002gradient,
  title={Gradient-based learning applied to document recognition},
  author={LeCun, Yann and Bottou, L{\'e}on and Bengio, Yoshua and Haffner, Patrick},
  journal={Proceedings of the IEEE},
  volume={86},
  number={11},
  pages={2278--2324},
  year={2002},
  publisher={Ieee}
}

@article{mcclean2018barren,
  title={Barren plateaus in quantum neural network training landscapes},
  author={McClean, Jarrod R and Boixo, Sergio and Smelyanskiy, Vadim N and Babbush, Ryan and Neven, Hartmut},
  journal={Nature communications},
  volume={9},
  number={1},
  pages={4812},
  year={2018},
  publisher={Nature Publishing Group UK London}
}

@article{Cerezo_2025,
   title={Does provable absence of barren plateaus imply classical simulability?},
   volume={16},
   ISSN={2041-1723},
   url={http://dx.doi.org/10.1038/s41467-025-63099-6},
   DOI={10.1038/s41467-025-63099-6},
   number={1},
   journal={Nature Communications},
   publisher={Springer Science and Business Media LLC},
   author={Cerezo, M. and Larocca, Martin and García-Martín, Diego and Diaz, N. L. and Braccia, Paolo and Fontana, Enrico and Rudolph, Manuel S. and Bermejo, Pablo and Ijaz, Aroosa and Thanasilp, Supanut and Anschuetz, Eric R. and Holmes, Zoë},
   year={2025},
   month=aug }

@inproceedings{aharonov2003adiabatic,
  title={Adiabatic quantum state generation and statistical zero knowledge},
  author={Aharonov, Dorit and Ta-Shma, Amnon},
  booktitle={Proceedings of the thirty-fifth annual ACM symposium on Theory of computing},
  pages={20--29},
  year={2003}
}

@article{rattew2022preparing,
  title={Preparing arbitrary continuous functions in quantum registers with logarithmic complexity},
  author={Rattew, Arthur G and Koczor, B{\'a}lint},
  journal={arXiv preprint arXiv:2205.00519},
  year={2022}
}

@article{rosenkranz2024quantum,
  title={Quantum state preparation for multivariate functions},
  author={Rosenkranz, Matthias and Brunner, Eric and Marin-Sanchez, Gabriel and Fitzpatrick, Nathan and Dilkes, Silas and Tang, Yao and Kikuchi, Yuta and Benedetti, Marcello},
  journal={arXiv preprint arXiv:2405.21058},
  year={2024}
}

@article{jaques2023qram,
  title={QRAM: A survey and critique},
  author={Jaques, Samuel and Rattew, Arthur G},
  journal={arXiv preprint arXiv:2305.10310},
  year={2023}
}

@article{aaronson2014quantum,
  title={Quantum machine learning algorithms: Read the fine print},
  author={Aaronson, Scott},
  journal={Nature Physics},
  pages={5},
  year={2014}
}

@article{han2020helstrom,
  title={Helstrom measurement: A nondestructive implementation},
  author={Han, Rui and Leuchs, Gerd and Bergou, J{\'a}nos A},
  journal={Physical Review A},
  volume={101},
  number={3},
  pages={032103},
  year={2020},
  publisher={APS}
}

\appendix

\section{Extended background of Quantum Statistical Learning Framework}\label{app:qsd_reductions}
We first introduce the quantum statistical query $(\QSQ)$ learning model following the definitions and notations in \cite{arunachalam2023roleentanglementstatisticslearning, arunachalam2020quantumstatisticalquerylearning}. Extended from the classical statistical query model ($\SQ$) \cite{kearns1998efficient}, $\QSQ$ brings simple yet powerful tools to understand the advantage of quantum examples in machine learning \cite{arunachalam2020quantumstatisticalquerylearning}.  In the classical $\SQ$ framework, a concept class is given by functions $\mathcal{C} \subseteq \{ c: \mathcal{X} \to \{0,1\} \}$. The goal is to identify an unknown target concept $c^\ast \in \mathcal{C}$ under an unknown distribution $\mathcal{D}$ on $\mathcal{X}$. Rather than observing labeled samples $(x,c^\ast(x))$, the learner interacts with a \emph{statistical query oracle} $\Stat$. For a query function $\phi : \mathcal{X} \times \{0,1\} \to [-1,1]$ and a tolerance parameter $\tau > 0$, the oracle returns
\[
\Stat(\phi,\tau) = \alpha 
\quad \text{such that} \quad
\Big| \alpha - \mathbb{E}_{x \sim \mathcal{D}}[\phi(x,c^\ast(x))] \Big| \leq \tau.
\]
The learner adaptively chooses a sequence of $m$ queries $\{(\phi_i,\tau_i)\}_{i=1}^m$ and, based on the responses, outputs a hypothesis $h : \mathcal{X} \to \{0,1\}$ satisfying a classification error $\varepsilon$: $\PR_{x \sim \mathcal{D}}[ h(x) \neq c^\ast(x) ] \leq \varepsilon$. Within the $\SQ$ framework, the complexity is measured not by raw samples but by the number of queries $\Stat$. With a slight abuse of notation, we denote by $\Stat(\tau)$ a query to any map $\phi$ satisfying $\phi : \mathcal{X} \times \{0,1\} \to [-1,1]$.

The quantum statistical query ($\QSQ$) model generalizes this framework to the quantum setting. Instead of classical labeled examples, information about the target concept is encoded in the \emph{quantum example state}
\[
\ket{\psi_{c^\ast}} = \sum_{x \in \mathcal{X}} \sqrt{\mathcal{D}(x)} \; \ket{x,c^\ast(x)}.
\]
Same as classical $\SQ$ models, a $\QSQ$ learner does not directly manipulate these states, but through a \emph{quantum statistical oracle} $\Qstat$, defined as follows.

\begin{defn}[Quantum Statistical Oracle, restate Definition~\ref{defn: qstat}]
Let $O \in \mathbb{C}^{2^{n+1} \times 2^{n+1}}$ be any observable with operator norm $\|O\| \leq 1$, and let $\tau > 0$.  
For a target concept $c^\ast \in \mathcal{C}$ encoded in the state $\ket{\psi_{c^\ast}}$, the quantum statistical oracle returns
\[
\Qstat(O,\tau) = \alpha \quad \text{such that} \quad 
\big| \alpha - \langle \psi_{c^\ast} | O | \psi_{c^\ast} \rangle \big| \leq \tau.
\]
With a slight abuse of notation, we denote by $\Qstat(\tau)$ a query to any observable $O$ satisfying $\|O\| \leq 1$.
\end{defn}

A crucial aspect of the $\QSQ$ model is that the learner itself remains a {classical} randomized algorithm. It never manipulates quantum states directly, but only receives approximate numerical values $\alpha$ returned by the oracle. In this sense, the $\QSQ$ model is a precise mathematical analogue of the $\SQ$ model, with expectations over labeled examples replaced by expectation values of observables:
\[
\underbrace{\mathbb{E}_{(x,c^\ast(x))}[\phi(x,c^\ast(x))]}_{\text{$\SQ$ framework}}
\quad \longleftrightarrow \quad 
\underbrace{\operatorname{Tr}(O\ket{\psi_c^\ast}\bra{\psi_c^\ast})}_{\text{$\QSQ$ framework}}.
\]
The key distinction in the quantum setting comes from the design of the observable $O$. Note that, when $O$ is diagonal with entries given by $\phi(x,c^\ast(x))$, the $\QSQ$ oracle reduces to the classical $\SQ$ oracle, since
$
\langle \psi_{c^\ast} | O | \psi_{c^\ast} \rangle 
= \mathbb{E}_{x \sim \mathcal D}[\phi(x,c^\ast(x))].
$ Hence, no quantum advantage is possible. The only potential source of quantum advantage 
in the $\QSQ$ framework comes from the ability to query with non-diagonal observables $O$ that enable access to non-classical correlations. This allows the quantum learner to extract correlations accessible to any classical $\SQ$ procedure, and therefore represents the precise point where quantum resources can outperform classical ones.

Throughout this paper, we focus on lower bounds on the query complexity of the $\QSQ$ model, which is the number of oracle calls $\Qstat(O,\tau)$ required to complete a learning task, and on how this complexity compares to that of the classical $\SQ$ model on the same problem. Establishing such lower bounds, however, is highly non-trivial. Earlier approaches based on differential privacy and communication-complexity arguments yielded only linear lower bounds in the input size $n$, far weaker than the hardness expected for many fundamental learning problems~\cite{angrisani2022quantum, arunachalam2020quantumstatisticalquerylearning}. To overcome this limitation, Arunachalam, Havlíček, and Schatzki introduced the \textit{Quantum Statistical Dimension} ($\QSD$)~\cite{arunachalam2023roleentanglementstatisticslearning}, which is a combinatorial parameter that extends the classical statistical dimension to the quantum setting and provides a unified framework for proving stronger $\QSQ$ lower bounds. Leveraging $\QSD$ alongside variance- and correlation-based arguments, they established tight bounds for several canonical tasks, including purity testing, shadow tomography, learning coset states, planted biclique states, and degree-2 function classes. In the following section, we formally define $\QSD$ and describe two key techniques that enable us to derive lower bounds on $\QSD$, and consequently on the query complexity of the $\QSQ$ model.

The standard methodology for proving lower bounds of $\QSQ$ is to reduce a learning problem to an associated \emph{decision problem}, and then to analyze the hardness of the decision problem using combinatorial parameters. Central among these parameters is the quantum statistical dimension ($\QSD$), which extends the classical notion of statistical dimension to the quantum setting. Intuitively, $\QSD$ quantifies the number of observables that are needed to reliably distinguish states from a given concept class, and hence provides a direct measure of the complexity of learning in the $\QSQ$ model. Beyond $\QSD$, two key techniques based on variance analysis and average correlation are employed to obtain sharper estimates and to capture the structural properties of the concept class under consideration. 

Formally, let $\mathcal{C}$ be a concept class of $n$-qubit quantum states, $\sigma$ be an $n$-qubit state with $\sigma \notin \mathcal{C}$, and $\tau > 0$ a query tolerance, we consider a quantum statistical decision problem for $(\mathcal{C}, \sigma)$ defined as: for an unknown state $\rho$, given $\Qstat(\tau)$ access to $\rho$ decide if $\rho \in \mathcal{C} $ or $\rho = \sigma$. Shown in~\cite{arunachalam2023roleentanglementstatisticslearning}, the complexity of this decision task is captured by the \emph{Quantum Statistical Dimension} ($\QSD$), which is formally defined as follows:

\begin{defn}[Quantum Statistical Dimension ($\QSD$), restate Definition~\ref{defn: qsd}]
For a distribution $\mu$ over $\mathcal{C}$ and $\sigma \notin \mathcal{C}$ be an $n$-qubit state, define the \emph{maximum covered fraction}
\[
   \kappa_{\tau}\text{-frac}(\mu,\sigma) \;=\;
   \max_{\|O\|\leq 1} \PR_{\rho \sim \mu}\!\left[ \, \big| \Tr[O(\rho - \sigma)]\big| > \tau \,\right],
\]
where the maximization is over all observables $O$ with operator norm $\|O\|\leq 1$. 
The \emph{quantum statistical dimension} is
\[
   \QSD_{\tau}(\mathcal{C},\sigma) \; :=\; 
   \sup_{\mu}\, \bigl[ \kappa_{\tau}\text{-frac}(\mu,\sigma) \bigr]^{-1}.
\]
\end{defn}
Intuitively, $\QSD_{\tau}(\mathcal{C},\sigma)$ measures the smallest expected number of observables needed to reliably distinguish any $\rho \in \mathcal{C}$ from $\sigma \notin \mathcal{C}$ using tolerance $\tau$. A large $\QSD$ implies that many observables are required, which translates into a high $\QSQ$ complexity for learning.

The connection between $\QSD$ and $\QSQ$ learning is established in two steps. First, one shows that \emph{learning a concept class is at least as hard as deciding membership in that class}. This means that if we can learn the class $\mathcal{C}$, then, in particular, we can solve the decision problem of 
determining whether an unknown state belongs to $\mathcal{C}$ or equals some fixed state $\sigma \notin \mathcal{C}$.

\begin{lmma}[Learning is as hard as deciding~\cite{arunachalam2023roleentanglementstatisticslearning}]\label{lmma: learn_as_decide}
Let $\varepsilon \geq \tau > 0$ and suppose $\min_{\rho \in \mathcal{C}} d_{\mathrm{tr}}(\rho,\sigma) > 2(\tau+\varepsilon)$. Furthermore, we denote $\QSQ^{\varepsilon,\delta}_{\tau}(\mathcal{C})$ as the number of $\Qstat(\tau)$ queries made by a $\QSQ$ algorithm that on input $\rho$ outputs $\pi \in \mathcal C$ such that $\TRD(\pi, \rho) \leq \epsilon$ with probability $\geq 1-\delta$. Then
\[
   \QSQ^{\varepsilon,\delta}_{\tau}(\mathcal{C}) \;\ge\; 
   \QQC^{\delta}_{\tau}(\mathcal{C},\sigma) - 1,
\]
where $\QQC^{\delta}_{\tau}(\mathcal{C},\sigma)$ is the minimum number of $\Qstat(\tau)$ queries needed to decide whether $\rho \in \mathcal{C}$ or $\rho = \sigma$ with success probability at least $1-\delta$.
\end{lmma}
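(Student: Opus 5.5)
The plan is a direct simulation argument. From any $\QSQ$ learner for $\mathcal{C}$ I will build a decision procedure for $(\mathcal{C},\sigma)$ that uses exactly one extra $\Qstat(\tau)$ query. That gives $\QQC^{\delta}_{\tau}(\mathcal{C},\sigma)\le \QSQ^{\varepsilon,\delta}_{\tau}(\mathcal{C})+1$, which is the claim.

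\textbf{Construction.} Let $\mathcal{A}$ be a learner making $m=\QSQ^{\varepsilon,\delta}_{\tau}(\mathcal{C})$ queries. On an unknown $\rho$ (either in $\mathcal{C}$ or equal to $\sigma$), the decider runs $\mathcal{A}$ and forwards each of its queries to the $\Qstat(\tau)$ oracle of $\rho$.
\begin{itemize}
\item If $\mathcal{A}$ tries to exceed $m$ queries, or outputs something not in $\mathcal{C}$, the decider declares ``$\rho=\sigma$''.
\item Otherwise $\mathcal{A}$ outputs a hypothesis $\pi\in\mathcal{C}$, whose description the classical decider knows.
\end{itemize}
The decider then forms the Helstrom observable for $\pi$ versus $\sigma$. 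Let $P$ be the projector onto the positive eigenspace of $\pi-\sigma$, and set $O=2P-I$. Then $\|O\|\le 1$ and $\Tr[O(\pi-\sigma)]=\|\pi-\sigma\|_1=2\,\TRD(\pi,\sigma)$.

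The decider computes $a_\pi=\Tr[O\pi]$ and $a_\sigma=\Tr[O\sigma]$ classically, and makes the one extra query $\alpha=\Qstat(O,\tau)$. It outputs ``$\rho\in\mathcal{C}$'' if $\alpha$ is closer to $a_\pi$, and ``$\rho=\sigma$'' otherwise.

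\textbf{Correctness.}
\begin{itemize}
\item \emph{Case $\rho\in\mathcal{C}$.} With probability at least $1-\delta$, $\mathcal{A}$ outputs $\pi$ with $\TRD(\pi,\rho)\le\varepsilon$. Hence $|\Tr[O(\rho-\pi)]|\le 2\varepsilon$, and so $|\alpha-a_\pi|\le\tau+2\varepsilon$.
\item \emph{Case $\rho=\sigma$.} Here $\mathcal{A}$ carries no guarantee, but whatever $\pi\in\mathcal{C}$ it outputs satisfies $\TRD(\pi,\sigma)\ge\min_{\rho'\in\mathcal{C}}\TRD(\rho',\sigma)>2(\tau+\varepsilon)$ by hypothesis. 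We also have $|\alpha-a_\sigma|\le\tau$.
\end{itemize}
In both cases the two candidate values are separated by $a_\pi-a_\sigma=2\,\TRD(\pi,\sigma)>4(\tau+\varepsilon)$. This exceeds $(\tau+2\varepsilon)+\tau$, so the nearest-value rule is correct in the first case whenever $\mathcal{A}$ succeeds. In the second case it is always correct: either the run was aborted, or a valid $\pi\in\mathcal{C}$ was produced and the rule applies. The overall success probability is therefore at least $1-\delta$, using $m+1$ queries.

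\textbf{Main subtlety.} The delicate point is the $\rho=\sigma$ branch. There the learner is run outside its promise, so it may misbehave, loop, or ask for more queries. The truncation at $m$ queries and the default answer ``$\rho=\sigma$'' handle this. What makes that branch safe is the uniform separation hypothesis $\min_{\rho\in\mathcal{C}}\TRD(\rho,\sigma)>2(\tau+\varepsilon)$, since it holds for \emph{every} possible output $\pi\in\mathcal{C}$. Beyond that, I only need to check that $O=2P-I$ is an admissible query with $\|O\|\le1$, and keep the factor $2$ between $\|\cdot\|_1$ and $\TRD$ straight in the margin computation.
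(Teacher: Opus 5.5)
Your proof is correct. The paper gives no proof of this lemma itself; it imports it from Arunachalam--Havl\'{\i}\v{c}ek--Schatzki, where it is proved by this same standard reduction: run the proper learner, truncate at its query budget, then spend one extra $\Qstat(\tau)$ query on the Helstrom observable for $\pi$ versus $\sigma$, with the uniform separation hypothesis making the $\rho=\sigma$ branch safe for every $\pi\in\mathcal{C}$.

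One small slip in your margin check. The condition $a_\pi-a_\sigma>(\tau+2\varepsilon)+\tau$ only makes the two tolerance intervals disjoint, which justifies a test such as ``accept iff $|\alpha-a_\pi|\le\tau+2\varepsilon$''. Your nearest-value rule instead needs $a_\pi-a_\sigma>2(\tau+2\varepsilon)$. Your actual margin $4(\tau+\varepsilon)$ still gives this, so the conclusion stands.
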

This lemma establishes a reduction: lower bounds for decision problems automatically imply lower bounds for $\QSQ$ learning. The next step is to show that the complexity of such decision problems is itself controlled by the $\QSD$.
\begin{lmma}[$\QSD$ lower bounds decision complexity~\cite{arunachalam2023roleentanglementstatisticslearning}]\label{lmma: qsd_bound_decide}
If an algorithm solves the decision problem of whether $\rho \in \mathcal{C}$ or $\rho = \sigma \notin \mathcal{C}$  with success probability at least $1-\delta$, then
\[
   \QQC^{\delta}_{\tau}(\mathcal{C},\sigma) \;\ge\; (1-2\delta)\,\QSD_{\tau}(\mathcal{C},\sigma).
\]
\end{lmma}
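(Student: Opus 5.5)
The plan is the standard simulation (``hiding'') argument of Feldman-type statistical-dimension lower bounds, adapted to observables. Fix any $\QSQ$ decision algorithm $\mathcal{A}$ that makes $q$ queries to $\Qstat(\tau)$ and succeeds with probability at least $1-\delta$ on every valid input and every valid oracle. Fix an arbitrary distribution $\mu$ over $\mathcal{C}$. It suffices to show $q \ge (1-2\delta)/\kappa_{\tau}\text{-frac}(\mu,\sigma)$; taking the supremum over $\mu$ then gives $\QQC^{\delta}_{\tau}(\mathcal{C},\sigma)\ge(1-2\delta)\QSD_{\tau}(\mathcal{C},\sigma)$.

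\textbf{Step 1 (the reference transcript).} View $\mathcal{A}$ as a deterministic algorithm $\mathcal{A}_r$ for each fixing $r$ of its internal randomness. Run $\mathcal{A}_r$ against the \emph{honest} $\sigma$-oracle, which answers each query $O$ with exactly $\Tr[O\sigma]$. The queries may be adaptive, but the answers are now fixed, so this produces a fixed sequence of observables $O^{(r)}_1,\dots,O^{(r)}_q$ and a fixed output. Define the caught set
\[
A_r=\{\rho\in\mathcal{C}:\ \exists i,\ |\Tr[O^{(r)}_i(\rho-\sigma)]|>\tau\}.
\]
By a union bound and the definition of $\kappa_\tau\text{-frac}$, we have $\mu(A_r)\le q\,\kappa_{\tau}\text{-frac}(\mu,\sigma)$.

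\textbf{Step 2 (indistinguishability outside $A_r$).} Take $\rho\notin A_r$ and use the adversarial oracle for $\rho$ that answers every query $O$ with $\Tr[O\sigma]$ whenever this lies within $\tau$ of $\Tr[O\rho]$. By induction over the adaptive rounds, $\mathcal{A}_r$ on $\rho$ issues exactly the queries $O^{(r)}_i$. Since $\rho\notin A_r$, each answer $\Tr[O^{(r)}_i\sigma]$ is legal for $\rho$. Hence the transcript, and therefore the output, coincides with the run on $\sigma$. Because the correct answers for $\rho$ and $\sigma$ differ, $\mathcal{A}_r$ is wrong on at least one of the two. Consequently, for every fixed $\rho\in\mathcal{C}$,
\[
\PR_r[\text{correct on }\sigma]+\PR_r[\text{correct on }\rho]\le 1+\PR_r[\rho\in A_r].
\]
With both success probabilities at least $1-\delta$, this yields $\PR_r[\rho\in A_r]\ge 1-2\delta$.

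\textbf{Step 3 (averaging).} Average over $\rho\sim\mu$ and exchange the order of expectation:
\[
1-2\delta\le \mathbb{E}_r[\mu(A_r)]\le q\,\kappa_{\tau}\text{-frac}(\mu,\sigma),
\]
which is the claim.

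The main point needing care is Step 2. There we must check that a single oracle strategy (answer as if the state were $\sigma$) is simultaneously legal for $\sigma$ and for every uncaught $\rho$. We must also check that adaptivity does not break the transcript coupling. Both follow from fixing $r$ and inducting over rounds, so I expect no real obstacle beyond writing this coupling cleanly.
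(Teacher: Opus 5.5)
Your proof is correct and is the standard Feldman-style simulation argument: fix the randomness, answer every query with $\Tr[O\sigma]$, bound the caught set by $q\,\kappa_{\tau}\text{-frac}(\mu,\sigma)$ via a union bound, derive $\PR_r[\rho\in A_r]\ge 1-2\delta$ from the coupling, and average over $\mu$. The paper does not reprove this lemma; it imports it from Arunachalam et al., whose proof runs along essentially these same lines.
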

Taken together, the lemmas show the complete chain of reductions:
\[
   \QSD \;\;\longrightarrow\;\; \text{Decision Complexity} \;\;\longrightarrow\;\; \text{$\QSQ$ Learning Complexity}.
\]
Thus, by proving that $\QSD_{\tau}(\mathcal{C},\sigma)$ is large for a given concept class, we immediately obtain a lower bound on the $\QSQ$ complexity of learning that class. Combining these two lemmas, we get the proof for Lemma~\ref{lmma:qsd_bound_qsq}.

While $\QSD$ provides a powerful conceptual tool, it is often difficult to compute directly. To make the framework usable in practice, we need bounding techniques that relate $\QSD$ to more tractable combinatorial parameters. Arunachalam \textit{et al.}~\cite{arunachalam2023roleentanglementstatisticslearning} introduced two techniques, namely \emph{variance bound} and \emph{average correlation}, to understand this quantity. They are presented in the following lemmas.

\begin{lmma}[Variance bound~\cite{arunachalam2023roleentanglementstatisticslearning}]\label{lmma: variance_bound_app}
For a concept class $\mathcal{C}$ and a distribution $\mu$ over $\mathcal{C}$ such that $\sigma = \mathbb{E}_{\rho \in \mu}[\rho] \notin \mathcal{C}$, define
\[
   \mathrm{Var}(\mathcal{C}) \;=\;
   \sup_{\|O\|\leq 1} \left( 
      \mathbb{E}_{\rho \sim \mu}\!\left[ \Tr[O\rho]^2 \right] 
      - \left( \mathbb{E}_{\rho \sim \mu} \Tr[O\rho] \right)^2
   \right).
\]
Then, we have:
\[
   \QSD_{\tau}(\mathcal{C},\sigma) \;\ge\; 
   \Omega\!\left(\frac{\tau^2}{\mathrm{Var}(\mathcal{C})}\right).
\]
\end{lmma}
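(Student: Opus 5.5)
The plan is to bound the maximum covered fraction $\kappa_{\tau}\text{-frac}(\mu,\sigma)$ for the specific distribution $\mu$ in the statement. Since $\QSD_{\tau}(\mathcal{C},\sigma)$ is a supremum over distributions, the reciprocal of that bound is then a lower bound on $\QSD$. The only tool needed is Chebyshev's inequality, applied observable by observable. The key point is the choice $\sigma=\mathbb{E}_{\rho\sim\mu}[\rho]$, which makes $\Tr[O(\rho-\sigma)]$ a centered random variable.

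Fix any observable $O$ with $\|O\|\le 1$ and define the real random variable $X_O(\rho):=\Tr[O\rho]$ for $\rho\sim\mu$. It is real because $O$ and $\rho$ are Hermitian, and $|X_O|\le 1$ because $\|O\|\le1$ and $\rho$ is a state. By linearity of trace and expectation,
\[
\Tr[O(\rho-\sigma)] = \Tr[O\rho]-\Tr\!\big[O\,\mathbb{E}_{\rho'\sim\mu}\rho'\big] = X_O(\rho)-\mathbb{E}_{\mu}[X_O].
\]
Chebyshev's inequality then gives
\[
\mathbb{P}_{\rho\sim\mu}\!\left[\big|\Tr[O(\rho-\sigma)]\big|>\tau\right]\le \frac{\mathbb{E}_\mu[X_O^2]-(\mathbb{E}_\mu X_O)^2}{\tau^2}\le \frac{\mathrm{Var}(\mathcal{C})}{\tau^2}.
\]
The last step uses that $\mathrm{Var}(\mathcal{C})$ is defined as the supremum of exactly this variance over all $\|O\|\le1$.

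This bound holds uniformly in $O$, so taking the maximum over $O$ gives $\kappa_{\tau}\text{-frac}(\mu,\sigma)\le \mathrm{Var}(\mathcal{C})/\tau^2$. We also always have $\kappa_{\tau}\text{-frac}(\mu,\sigma)\le 1$, since it is a probability. Because $\QSD_{\tau}(\mathcal{C},\sigma)\ge \kappa_{\tau}\text{-frac}(\mu,\sigma)^{-1}$ for this particular $\mu$, we obtain
\[
\QSD_{\tau}(\mathcal{C},\sigma)\ge \max\!\left\{1,\ \frac{\tau^2}{\mathrm{Var}(\mathcal{C})}\right\}=\Omega\!\left(\frac{\tau^2}{\mathrm{Var}(\mathcal{C})}\right).
\]
The case $\mathrm{Var}(\mathcal{C})=0$ is read as $\QSD=\infty$: then no observable distinguishes any $\rho$ in the support of $\mu$ from $\sigma$ at tolerance $\tau>0$. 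The hypothesis $\sigma\notin\mathcal{C}$ is needed only so that $\QSD_{\tau}(\mathcal{C},\sigma)$ is well defined as a decision problem; it plays no role in the inequality.

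I expect no step to be a genuine obstacle. The one point requiring care is that the reference state in the definition of $\QSD$ coincides with the $\mu$-mean of the ensemble. This is what turns $\Tr[O(\rho-\sigma)]$ into a deviation from its mean, so that Chebyshev applies with exactly the variance appearing in $\mathrm{Var}(\mathcal{C})$. A secondary point is the harmless regime $\mathrm{Var}(\mathcal{C})>\tau^2$, where the bound is below $1$. That regime is covered by the trivial bound $\QSD\ge 1$, which is why the conclusion is stated with $\Omega(\cdot)$.
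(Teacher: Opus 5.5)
Your argument is correct and is essentially the standard one. The paper does not reprove this lemma but imports it from Arunachalam--Havl\'i\v{c}ek--Schatzki, where it follows exactly as you do: taking $\sigma=\mathbb{E}_{\rho\sim\mu}[\rho]$ makes $\Tr[O(\rho-\sigma)]$ a centered random variable, and Chebyshev's inequality applied uniformly over $\|O\|\le 1$ gives $\kappa_{\tau}\text{-frac}(\mu,\sigma)\le \mathrm{Var}(\mathcal{C})/\tau^2$. Your treatment of the edge cases (the trivial bound $\QSD_\tau(\mathcal{C},\sigma)\ge 1$ and the case $\mathrm{Var}(\mathcal{C})=0$) is also sound.
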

\begin{lmma}[Average-correlation lower bound for $\QSD$ \cite{arunachalam2023roleentanglementstatisticslearning}]
\label{lem:avg-corr-qsd_app}
Let $\mathcal{C}$ be a class of $n$-qubit states and let $\sigma\notin\mathcal{C}$ be full rank. For $\rho\in\mathcal{C}$ define the
$\sigma$-centered, $\sigma$-normalized deviation
\[
\hat\rho \;:=\; \rho\,\sigma^{-1}-\mathbb{I}.
\]
For any finite subfamily $\mathcal{C}'\subseteq\mathcal{C}$, define its \emph{average correlation} w.r.t.\ $\sigma$ as
\[
\gamma(\mathcal{C}',\sigma)
\;:=\;
\frac{1}{|\mathcal{C}'|^2}\sum_{\rho_i,\rho_j\in\mathcal{C}'} \bigl|\Tr\!\bigl[\hat\rho_i\,\hat\rho_j\,\sigma\bigr]\bigr|.
\]
For a base family $\mathcal{C}_0\subseteq\mathcal{C}$ and threshold $\tau>0$, set
\[
\kappa^{\gamma}_{\tau}\text{-frac}(\mathcal{C}_0,\sigma)
\;:=\;
\max_{\mathcal{C}'\subseteq\mathcal{C}_0}
\left\{\frac{|\mathcal{C}'|}{|\mathcal{C}_0|}\;:\;\gamma(\mathcal{C}',\sigma)>\tau\right\},
\qquad
\QAC_{\tau}(\mathcal{C},\sigma)
\;:=\;
\sup_{\mathcal{C}_0\subseteq\mathcal{C}}\bigl(\kappa^{\gamma}_{\tau}\text{-frac}(\mathcal{C}_0,\sigma)\bigr)^{-1}.
\]
Then for every $\tau\in(0,1]$, the quantum statistical dimension satisfies
\[
\QSD_{\tau}(\mathcal{C},\sigma)\;\ge\;\QAC_{\tau^{2}}(\mathcal{C},\sigma).
\]
\end{lmma}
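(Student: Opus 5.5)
The plan is to compare the two suprema directly. Fix an arbitrary finite base family $\mathcal{C}_0\subseteq\mathcal{C}$ and take $\mu$ to be the uniform distribution on $\mathcal{C}_0$. Since $\QSD_{\tau}(\mathcal{C},\sigma)\ge \kappa_{\tau}\text{-frac}(\mu,\sigma)^{-1}$, it suffices to show
\[
\kappa_{\tau}\text{-frac}(\mu,\sigma)\le \kappa^{\gamma}_{\tau^2}\text{-frac}(\mathcal{C}_0,\sigma).
\]
Taking the supremum over $\mathcal{C}_0$ then gives $\QSD_{\tau}(\mathcal{C},\sigma)\ge \QAC_{\tau^2}(\mathcal{C},\sigma)$.

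To prove this inequality, fix any observable $O$ with $\|O\|\le 1$. Let $\mathcal{C}'\subseteq\mathcal{C}_0$ be the set of states with $|\Tr[O(\rho-\sigma)]|>\tau$. If $\mathcal{C}'$ is empty there is nothing to prove. Otherwise it suffices to show $\gamma(\mathcal{C}',\sigma)>\tau^2$: then $\mathcal{C}'$ is admissible in the maximum defining $\kappa^{\gamma}_{\tau^2}\text{-frac}$, so $|\mathcal{C}'|/|\mathcal{C}_0|\le\kappa^{\gamma}_{\tau^2}\text{-frac}(\mathcal{C}_0,\sigma)$, and maximizing over $O$ finishes the argument.

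The first observation is the identity $\Tr[O(\rho-\sigma)]=\Tr[O\,\hat\rho\,\sigma]$, which holds because $\hat\rho\,\sigma=\rho-\sigma$ and $\sigma$ is full rank. Choose signs $s_i\in\{\pm1\}$ with $s_i\Tr[O\hat\rho_i\sigma]=|\Tr[O\hat\rho_i\sigma]|$, and set $M=\sum_{i\in\mathcal{C}'}s_i\hat\rho_i$. Then
\[
\tau|\mathcal{C}'|<\sum_i s_i\Tr[O\hat\rho_i\sigma]=\Tr[O M\sigma].
\]
Next apply Cauchy--Schwarz for the positive semidefinite sesquilinear form $\langle A,B\rangle_\sigma=\Tr[A^\dagger B\sigma]$ (note $\Tr[A^\dagger A\sigma]\ge0$ since $\sigma\succeq0$). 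This gives
\[
|\Tr[OM\sigma]|^2\le \Tr[OO^\dagger\sigma]\,\Tr[M^\dagger M\sigma]\le \Tr[M^\dagger M\sigma],
\]
where the last step uses $OO^\dagger\preceq I$ and $\Tr\sigma=1$. Expanding, $\Tr[M^\dagger M\sigma]=\sum_{i,j}s_is_j\Tr[\hat\rho_i^\dagger\hat\rho_j\sigma]$.

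The step I expect to need care is matching this Gram expression with the quantity in $\gamma$, because $\hat\rho$ is not Hermitian. Using $\hat\rho_i^\dagger=\sigma^{-1}\rho_i-I$ and cyclicity,
\[
\Tr[\hat\rho_i^\dagger\hat\rho_j\sigma]=\Tr[\sigma^{-1}\rho_i\rho_j]-1 .
\]
Expanding the other product in the same way,
\[
\Tr[\hat\rho_i\hat\rho_j\sigma]=\Tr[\rho_i\sigma^{-1}\rho_j]-2+1=\Tr[\sigma^{-1}\rho_j\rho_i]-1 .
\]
So the two coincide after swapping $i$ and $j$, and the double sum is symmetric in $(i,j)$. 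Bounding each term by its absolute value gives
\[
\tau^2|\mathcal{C}'|^2<\sum_{i,j}|\Tr[\hat\rho_i\hat\rho_j\sigma]|,
\]
which is exactly $\gamma(\mathcal{C}',\sigma)>\tau^2$, as required.
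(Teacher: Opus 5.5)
Your proof is correct. The signed sum $M=\sum_i s_i\hat\rho_i$, Cauchy--Schwarz for the form $\langle A,B\rangle_\sigma=\Tr[A^\dagger B\sigma]$ with $\Tr[OO^\dagger\sigma]\le 1$, and the identity $\Tr[\hat\rho_i^\dagger\hat\rho_j\sigma]=\Tr[\hat\rho_j\hat\rho_i\sigma]$ (the one place where the non-Hermiticity of $\hat\rho$ matters) together give $\gamma(\mathcal{C}',\sigma)>\tau^2$, exactly as required.

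The paper gives no proof of its own for this lemma; it is quoted from Arunachalam, Havl\'i\v{c}ek and Schatzki. Your argument is the standard one, the quantum analogue of the classical average-correlation (SDA) Cauchy--Schwarz argument. It also covers the implicit conventions in the statement: $\mathcal{C}_0$ finite, and $\max\emptyset=0$, so that both sides are $+\infty$ when no admissible $\mathcal{C}'$ exists.
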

The variance method analyzes the fluctuations of measurement outcomes across an ensemble of states: if, for every observable $O$, the variance of $\Tr[O(\rho-\sigma)]$ over $\rho \in \mathcal{C}$ is small, then no single query can reliably distinguish members of $\mathcal{C}$ from the reference state $\sigma$, forcing the learner to make many queries. The average correlation method, by contrast, considers pairwise overlaps among the states in $\mathcal{C}$: if different $\rho,\rho' \in \mathcal{C}$ yield highly correlated responses to all observables, then even collective querying cannot separate them efficiently, and the $\QSD$ must be large.
\section{Deffered Proofs}
For better readability, we deferred the following proofs to this appendix, restating them for convenience.
\subsection{Proofs of Theorem~\ref{thrm: quantum_parity}}\label{app: quantum_parity}
At a high level, the algorithm consists of two components: an efficient state-preparation procedure that coherently computes 
$c_A \bmod 2$ (Lemma~\ref{lemma: state_prep}), followed by a quantum Fourier sampling step that estimates the Fourier mass of a target function $f$ (Lemma~\ref{lmma:fourier-mass-one-query}). We now formalize these ingredients by restating the two lemmas and providing their proofs.
\begingroup
\renewcommand{\thelmma}{\ref{lemma: state_prep}}
\begin{lmma}
    [State preparation] Let $\ket{g_{S}} = \frac{1}{\sqrt{N}}\sum_{A\in\{0,1\}^{n \times n}} \ket{x_A}\ket{g_{S}(A)}$. There exists a $\text{poly}(n)$-gate quantum circuit $\mathsf{PREP}$ such that
    $$
        \mathsf{PREP} \ket{g_{S}}\ket{0^{\otimes n + n\lfloor\log(n)\rfloor + 1}} = \frac{1}{\sqrt{N}} \sum_{A\in\{0,1\}^{n \times n}}\ket{x_A} \ket{\hat{c}_{A}}\ket{g_{S}(A)}
    $$
    where $N=2^{n^2}$.
\end{lmma}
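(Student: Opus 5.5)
We will construct $\mathsf{PREP}$ as a reversible classical circuit acting coherently on the computational-basis register $\ket{x_A}$. By linearity it then suffices to exhibit, for every fixed $A$, a $\mathrm{poly}(n)$-gate circuit of Toffoli, CNOT and X gates mapping
\[
\ket{x_A}\ket{0\cdots 0}\mapsto\ket{x_A}\ket{\hat c_A}\ket{0\cdots0},
\]
with all scratch qubits returned to $\ket{0}$. The label qubit $\ket{g_S(A)}$ is never touched, so it rides along unchanged. Any garbage is removed by Bennett's compute--copy--uncompute trick. This trick leaves no phase and no entanglement with the ancillas, so the superposition over $A$ is preserved exactly.

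The computation has three stages, matching the ancilla count $n\lfloor\log n\rfloor + n + 1$.

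\emph{Out-degrees.} For each node $j\in[n]$, we compute the out-degree $d_j=\sum_k A_{jk}\in\{0,\dots,n\}$ into a $\lceil\log(n+1)\rceil$-bit register. We use a ripple-carry incrementer controlled on each bit $A_{jk}$ of row $j$ of $x_A$. This costs $O(n\log n)$ gates per node, or alternatively one can use the Draper QFT adder. These $n$ registers account for the $n\lfloor\log n\rfloor$ qubits; if the register width falls one bit short when $d_j=n$, we use one extra flag qubit. The exact count is immaterial since it stays $\mathrm{poly}(n)$.

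\emph{Parity of degree counts.} For each $i\in\{0,\dots,n\}$ and each $j$, we apply an equality test $[d_j=i]$. This is a multi-controlled NOT over $O(\log n)$ bits with X-conjugations on the zero bits of $i$, decomposed into $O(\log n)$ Toffolis with a few reusable scratch bits. The result is XORed into output bit $[\hat c_A]_{i+1}$. XORing over $j$ gives exactly $\#\{j: d_j=i\}\bmod 2=[c_A]_{i+1}\bmod 2$. This stage uses $O(n^2\log n)$ gates. The $n+1$ output bits account for the remaining $n+1$ ancillas.

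\emph{Cleanup.} We run the out-degree stage in reverse, which resets the degree registers to $\ket{0}$. This is valid because the output bits were only XOR targets and never controls. The final state is then $\ket{x_A}\ket{\hat c_A}\ket{g_S(A)}$ with the degree workspace cleared, up to reordering of registers.

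The main obstacle is not correctness but bookkeeping. First, the stated ancilla count must match. Second, uncomputation must leave the degree workspace exactly $\ket{0}$ and not merely some $A$-independent state, so that the workspace factors out cleanly. If the count is tight, we would instead let the degree registers be reused as the scratch space of the equality tests, or simply note that $\mathrm{poly}(n)$ ancillas suffice. The gate count totals $O(n^2\log n)=\mathrm{poly}(n)$.
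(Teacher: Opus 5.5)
Your proposal is correct and follows the paper's construction: compute each out-degree coherently with an adder (the paper uses the Draper QFT adder), XOR multi-controlled-$X$ equality tests $[d_j=i]$ into the $n+1$ parity qubits, then run the adder stage in reverse to clear the degree registers. The width problem you flag is real, since degree $n$ does not fit in $\lfloor\log n\rfloor$ bits, and the paper's own count has the same problem; the cleanest way to meet the stated ancilla budget exactly is to process the nodes one at a time in a single reusable $\lceil\log(n+1)\rceil$-bit degree register (compute $d_j$, apply all equality tests, uncompute $d_j$, then move to node $j+1$), which keeps the workspace at $O(\log n)$ qubits, well within the stated budget.
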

\endgroup
\begin{proof}
    This preparation procedure involves two steps. First, for each register indexed by $i$, we aim to prepare:
    \begin{equation}
        \ket{x_A}\ket{0^{\otimes \lfloor n\log(n)\rfloor}} \mapsto \ket{x_A} \bigotimes_{i\in [n]} \ket{s_i}. 
        \label{eqn: sum_Aij}
    \end{equation}
    where $s_i := \sum_j A_{ij}$. Here, we employ the addition circuit from \cite{draper2000additionquantumcomputer}. The technique utilizes the quantum Fourier transform to eliminate the need for temporary carry bits in traditional adder circuits. We refer the interested reader to the original paper for the details. In total, this step requires $n$ additional circuits, each includes $ n+2\lfloor\log(n)\rfloor$ controlled-rotation and Hadamard gates.
    
    Then a sequence of controlled-X gates is applied on a new register to store values of $\hat{c}_{A}$ controlled by values of $i$, followed by an uncomputation step of $s_i$:
    \begin{equation}
        \ket{x_A}\bigotimes_{i\in [n]}\ket{s_i}\ket{0^{\otimes n+1}} \mapsto \ket{x_A}\ket{\hat{c}_A} 
    \end{equation}
    In particular, for $k$-th qubit in the last register ($k\neq 0$), $n$ multicontrolled-X gates act on the qubit and are controlled by the values of $s_i=k-1$ for every $i\in [n]$. An example of circuit calculating $\ket{[\hat{c}_{A}]_2}$ is presented in Figure~\ref{fig: multi_controlled}. 
    Thus, we need $n^2$ gates for this operation to obtain $\ket{\hat{c}_A}$. This step finishes with an additional application of the operation generating~\eqref{eqn: sum_Aij} to uncompute $s_i$. 

\begin{figure}[!hbt]
\begin{center}
\begin{quantikz}
\lstick[4]{$\ket{s_0}$}  & \ctrl{9} &  & & &\\
 &  \octrl{8} & &  &  &\\
\vdots \\
 & \octrl{6} & &  & & \\
\vdots \\
\lstick[4]{$\ket{s_{n-1}}$}  &  & & \ctrl{4} & &\\
 &   &  &\octrl{3}&  &\\
\vdots \\
 &  & &\octrl{1}  & & \\
\lstick{$\ket{0}$} & \gate{X} &  \hdots & \gate{X}& &\rstick[]{$\ket{[\hat{c}_{A}]_2}$}
\end{quantikz}
\caption{Circuit implementation of calculating $\ket{[\hat{c}_{A}]_2}$.}
\label{fig: multi_controlled}
\end{center}
\end{figure}
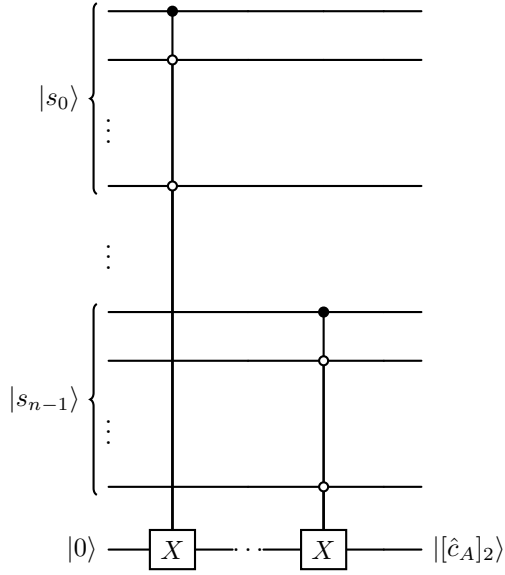
Therefore, our circuit $\mathsf{PREP}$ uses additional $n + n\lfloor\log(n)\rfloor + 1$ qubits and $3n^2 + 4n\lfloor\log(n)\rfloor$ gates to map
    $$
         \ket{g_{S}}\ket{0^{\otimes n + n\lfloor\log(n)\rfloor + 1}} \mapsto \frac{1}{{N}} \sum_{A\in\{0,1\}^{n \times n}} \ket{x_A} \ket{\hat{c}_{A}}\ket{g_{S}(A)}
    $$
\end{proof}

\begingroup
\renewcommand{\thelmma}{\ref{lmma:fourier-mass-one-query}}
\begin{lmma}
Let $f:\{0,1\}^n\to\{0,1\}$ and let $\ket{\psi_f}\;=\;2^{-n/2}\sum_{x\in\{0,1\}^n}\ket{x}\ket{f(x)}$ be the corresponding quantum example state, for any subset $T\subseteq\{0,1\}^n$, there is an observable $M_T$ with $\|M_T\|\le 1$ such that $
\Tr[M_T\ketbra{\psi_f}{\psi_f}] \;=\; \sum_{S\in T}\widehat{f}(S)^2$.
Consequently, a single call to the $\Qstat(M_T,\tau)$ returns a $\tau$-estimate of $\sum_{S\in T}\widehat{f}(S)^2$.
\end{lmma}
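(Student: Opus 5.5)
The plan is to construct $M_T$ explicitly from the standard Fourier-sampling circuit of Bernstein--Vazirani, so that measuring it in the computational basis is equivalent to running Fourier sampling.

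Write $F=(-1)^f:\{0,1\}^n\to\{\pm1\}$ with Walsh--Hadamard coefficients $\widehat F(S)=2^{-n}\sum_x F(x)(-1)^{S\cdot x}$. I read $\widehat f(S)$ in the statement as these $\pm1$-valued coefficients; this is the convention under which $\mathrm{Inf}_i(g_S)\in\{0,1\}$ in Theorem~\ref{thrm: quantum_parity}. Let $U=H^{\otimes (n+1)}$ act on the example register together with the label qubit. The key computation expands $U\ket{\psi_f}$.

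First apply $H$ to the label qubit. This gives $\ket{f(x)}\mapsto \tfrac{1}{\sqrt2}(\ket0+(-1)^{f(x)}\ket1)$, so
\[
\ket{\psi_f}\mapsto \tfrac{1}{\sqrt2}\ket{u}\ket0+\tfrac{1}{\sqrt2}\Bigl(2^{-n/2}\sum_x F(x)\ket{x}\Bigr)\ket1 ,
\]
where $\ket u$ is the uniform superposition. Next apply $H^{\otimes n}$ to the first register. This sends $\ket u\mapsto\ket{0^n}$ and sends $2^{-n/2}\sum_xF(x)\ket x$ to $\sum_S\widehat F(S)\ket S$.

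Hence the final state is
\[
U\ket{\psi_f}=\tfrac{1}{\sqrt2}\ket{0^n}\ket0+\tfrac{1}{\sqrt2}\sum_{S}\widehat F(S)\ket{S}\ket1 .
\]
Define the projector $P_T=\sum_{S\in T}\ketbra{S}{S}\otimes\ketbra11$ and set $M_T=2\,U^\dagger P_T U$. Then
\[
\Tr[M_T\ketbra{\psi_f}{\psi_f}]=2\cdot\tfrac12\sum_{S\in T}\widehat F(S)^2=\sum_{S\in T}\widehat F(S)^2,
\]
as required.

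The main obstacle is the norm constraint, since $\|M_T\|=2$ as written. I would fix it in one of two ways. The first option queries $M_T'=U^\dagger P_TU$, which has norm $1$ and expectation $\tfrac12\sum_{S\in T}\widehat F(S)^2$; a $\tau$-answer then yields a $2\tau$-estimate, which is still enough for the threshold test in Theorem~\ref{thrm: quantum_parity}. The second option uses the observable $U^\dagger(2P_T-P_{\mathrm{lab}=1})U$, where $P_{\mathrm{lab}=1}=I\otimes\ketbra11$. This has norm $1$, because $2P_T-P_{\mathrm{lab}=1}$ is diagonal with entries in $\{-1,0,1\}$. Its expectation is $\sum_{S\in T}\widehat F(S)^2-\tfrac12$, since Parseval gives $\sum_S\widehat F(S)^2=1$. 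Shifting the returned value by the known constant $\tfrac12$ gives an exact $\tau$-estimate of $\sum_{S\in T}\widehat F(S)^2$, so I would adopt the second option as $M_T$.

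Finally I would record two facts. $U$ is a depth-one circuit of Hadamards, and $P_T$ is diagonal, so the observable is efficiently implementable whenever membership in $T$ is efficiently checkable; in particular this holds for $T=\{S: S_i=1\}$, which gives $\mathrm{Inf}_i$. Note also that the normalization in the statement should read $2^{-m/2}$.
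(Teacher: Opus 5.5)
Your construction is the paper's: your $U^\dagger P_T U$ is exactly its $M_T=H^{\otimes(n+1)}(M\otimes\Pi_1)H^{\otimes(n+1)}$, and your accounting is the more careful one, since the paper drops the $1/\sqrt2$ amplitude of the label-$1$ branch. Its observable therefore actually returns $\tfrac12\sum_{S\in T}\widehat F(S)^2$, which would only support $\tau<\tfrac14$ in Theorem~\ref{thrm: quantum_parity}, and your Parseval-shifted observable repairs this at norm $1$. To get the lemma's identity exactly rather than up to the constant $\tfrac12$, take the reflection $M_T=\mathbb{I}-2\,U^\dagger\bigl(\sum_{S\notin T}\ketbra{S}{S}\otimes\ketbra{1}{1}\bigr)U$ (equivalently, add $I\otimes\ketbra{0}{0}$ inside your second option), which has norm $1$ and expectation $1-\sum_{S\notin T}\widehat F(S)^2=\sum_{S\in T}\widehat F(S)^2$ by Parseval.
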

\endgroup

\begin{proof}
Let $M:=\sum_{S\in T}\ketbra{S}{S}$ act on the first $n$ qubits and let $\Pi_1:=\ketbra{1}{1}$ on the last qubit.
Define the observable
\[
M_T \;=\; H^{\otimes (n+1)}\,(\mathbb{I}^{\otimes n}\!\otimes\!\Pi_1)\,M\,(\mathbb{I}^{\otimes n}\!\otimes\!\Pi_1)\,H^{\otimes (n+1)}.
\]
Starting from $\ket{\psi_f}=2^{-n/2}\sum_x\ket{x}\ket{f(x)}$, one has
\[
H^{\otimes(n+1)}\ket{\psi_f}
=\frac{1}{2^{n+1/2}} \sum_{x, y}\sum_{b\in \{0,1\}} (-1)^{x\cdot y + b\cdot f(x)}\ket{y}\ket{b},
\]
so conditioned on the last qubit being $1$, the (unnormalized) state on the first $n$ qubits is
$\ket{\psi_f'}=\sum_{Q}\hat f(Q)\ket{Q}$.
Hence
\[
\bra{\psi_f}M_T\ket{\psi_f}
=\bra{\psi_f'}M\ket{\psi_f'}
=\sum_{S\in T}\hat f(S)^2.
\]
Since $\|M_T\|\le 1$, a single $\Qstat(M_T,\tau)$ call returns a $\tau$-estimate of
$\sum_{S\in T}\hat f(S)^2$, as required.
\end{proof}

Now, we can present the proof for Theorem~\ref{thrm: quantum_parity} as follows. We have each function in $\mathcal{C}_n$ admits the representation
\[
    g_{S}(A) = \hat{S}\cdot \hat{c}_{A} \bmod 2,
\]
where $\hat c_A=c_A\bmod 2$ and $\hat S\in\{0,1\}^{n+1}$ is the indicator vector of
$S$.
Thus, learning $g_S$ reduces to identifying the hidden string $\hat S$.

With $\Qstat$ access, we consider the quantum example state of the form
\[
    \ket{g_{S}} 
    = \frac{1}{\sqrt{2^{n^2}}} \sum_{A \in \{0,1\}^{n \times n}} 
      \ket{x_A} \ket{g_{S}(A)}.
\]
By Lemma~\ref{lemma: state_prep}, there exists an efficient state-preparation
procedure mapping $\ket{g_S}$ to
\begin{equation}
    \ket{g'_{S}} 
    = \frac{1}{\sqrt{2^{n^2}}} \sum_{A \in \{0,1\}^{n \times n}} 
      \ket{x_A} \ket{\hat{c}_{A}}\ket{g_{S}(A)},
    \label{eqn: state_before_fourier}
\end{equation}
which serves as input to a Fourier-sampling routine.

We analyze $g_S$ via its Walsh--Hadamard Fourier expansion.
For each coordinate $i$, the Fourier influence
\[
\mathrm{Inf}_i(g_S)=\sum_{\sigma:\,\sigma_i=1}\hat g_S(\sigma)^2
\]
equals $1$ if $i\in\mathrm{supp}(\hat S)$ and $0$ otherwise.
Hence, estimating $\mathrm{Inf}_i(g_S)$ determines the support of $\hat S$.

For each $i\in[n+1]$, let $T_i=\{\sigma:\sigma_i=1\}$.
By Lemma~\ref{lmma: QSQ_obs}, there exists an observable $M_{T_i}$ such that a
single $\Qstat(M_{T_i},\tau)$ query returns a $\tau$-estimate of
$\mathrm{Inf}_i(g_S)$.
Since the expectation gap is $1$ versus $0$, any $\tau<\tfrac12$ suffices to
distinguish the two cases.

The corresponding Fourier-sampling observable is
\[
B_i := (I^{\otimes n^2} \otimes H^{\otimes(n+2)})\cdot
      M_{T_i}\cdot
      (I^{\otimes n^2} \otimes H^{\otimes(n+2)}),
\]
which can be implemented with $\text{poly}(n)$ gates.
Combining $B_i$ with the state-preparation circuit $\mathsf{PREP}$ yields the
observable
\[
O_i
= (\bra{0}^{m}\!\otimes I)\,
  \mathsf{PREP}^{\dagger}\,
  (I\otimes B_i)\,
  \mathsf{PREP}\,
  (\ket{0}^{m}\!\otimes I),
\]
where $m=n+n\lfloor\log n\rfloor+1$.
Each $O_i$ is efficiently implementable and allows estimation of
$\mathrm{Inf}_i(g_S)$ with a single $\QSQ$ query.

Querying all $i\in[n+1]$ recovers $\hat S$ exactly using $\mathcal{O}(n)$
$\Qstat(\tau)$ queries, completing the proof.

\subsection{Pairwise independent of symmetric Boolean functions}\label{app: pairwise}
\begingroup
\renewcommand{\thelmma}{\ref{lmma: H_pairwise}}
\begin{lmma}
 Consider $\Vert p_{\mathcal{O}_\rho}\Vert_2^2 = \sum_{O_k \in \mathcal{O}_{\rho}} (|O_k|/|\mathcal{X}|)^2$, with $(1- \Vert p_{\mathcal{O}_\rho}\Vert_2^2)$ over the choice of $x, x'\in \mathcal{X}$, for every $f\in \text{Unif}(\mathcal{C}_{\rho})$ the distribution of $(f(x), f(x'))$ is equal to uniform distribution over $\{0, 1\}^{\otimes 2}$
\end{lmma}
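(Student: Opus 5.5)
The plan is to reduce everything to one structural fact: $\mathcal{C}_\rho$ is in bijection with $\{0,1\}^{\mathcal{O}_\rho}$. Then I will compute the probability, over independent uniform $x,x'\in\mathcal{X}$, that the two points land in the same orbit.

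\emph{Step 1 (structure of $\mathcal{C}_\rho$).} A function $f:\mathcal{X}\to\{0,1\}$ satisfies $f(\rho(g)\cdot x)=f(x)$ for all $g\in G$ exactly when $f$ is constant on every orbit $\mathcal{O}(x)$. Conversely, every assignment of a bit $b_O\in\{0,1\}$ to each orbit $O\in\mathcal{O}_\rho$ defines a symmetric function. Hence the map $f\mapsto (b_O)_{O\in\mathcal{O}_\rho}$ is a bijection $\mathcal{C}_\rho\cong\{0,1\}^{|\mathcal{O}_\rho|}$. Drawing $f\sim\mathrm{Unif}(\mathcal{C}_\rho)$ is therefore the same as drawing the orbit labels $b_O$ i.i.d.\ uniform on $\{0,1\}$.

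\emph{Step 2 (conditional law of the pair).} Fix $x,x'$, and let $O(x),O(x')$ be their orbits. If $O(x)\neq O(x')$, then $(f(x),f(x'))=(b_{O(x)},b_{O(x')})$ is a pair of independent uniform bits, i.e.\ $\mathrm{Unif}(\{0,1\}^{\otimes 2})$. If $O(x)=O(x')$, then $f(x)=f(x')$ almost surely, and the pair is not uniform. So the "good" event is exactly $\{O(x)\neq O(x')\}$.

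\emph{Step 3 (probability of the good event).} For $x,x'$ drawn independently and uniformly from $\mathcal{X}$,
\[
\PR[O(x)=O(x')]=\sum_{O_k\in\mathcal{O}_\rho}\PR[x\in O_k]\,\PR[x'\in O_k]=\sum_{O_k\in\mathcal{O}_\rho}\Big(\frac{|O_k|}{|\mathcal{X}|}\Big)^2=\|p_{\mathcal{O}_\rho}\|_2^2 .
\]
This uses only the fact that the orbits partition $\mathcal{X}$. Hence, with probability exactly $1-\|p_{\mathcal{O}_\rho}\|_2^2$ over $(x,x')$, the distribution of $(f(x),f(x'))$ under $f\sim\mathrm{Unif}(\mathcal{C}_\rho)$ is uniform on $\{0,1\}^{\otimes 2}$, which is the claim.

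The only real subtlety is interpretive rather than technical. The statement says "for every $f\in\mathrm{Unif}(\mathcal{C}_\rho)$", which I will read as "for $f$ drawn from $\mathrm{Unif}(\mathcal{C}_\rho)$": the uniformity is over the random choice of $f$, with $x,x'$ fixed in the good set.

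Two checks are needed for the bijection in Step 1. First, the group action really partitions $\mathcal{X}$ into orbits, which requires $\rho$ to be a genuine action, so that orbits are equivalence classes. Second, no constraint other than orbit-constancy is imposed on $\mathcal{C}_\rho$. Both follow directly from Definition~\ref{def: sym_func}.
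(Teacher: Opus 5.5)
Your proposal is correct and takes essentially the same route as the paper. Symmetric functions are constant on orbits, so $(f(x),f(x'))$ is uniform exactly when $x,x'$ lie in different orbits, and the same-orbit event has probability $\sum_{O_k}(|O_k|/|\mathcal{X}|)^2=\|p_{\mathcal{O}_\rho}\|_2^2$. Your explicit bijection $\mathcal{C}_\rho\cong\{0,1\}^{\mathcal{O}_\rho}$ makes rigorous the independence of orbit labels that the paper takes for granted, and you correctly attach $\|p_{\mathcal{O}_\rho}\|_2^2$ to the bad (same-orbit) event, which the paper's wording blurs.
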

\endgroup
\begin{proof}
    Here, we follow the proof from \cite{kiani2024hardness}. Consider any function $f\in \text{Unif}(\mathcal{C}_\rho)$, as the function is symmetric, then $f$ is constant on orbits. Thus, for every value of $x\neq x'$ not coming from the same orbit, the distribution of $(f(x), f(x')$ is equal to the uniform distribution over $\{0,1\}^2$. This occurs with probability:
    $$
        \eta = \sum_{O_k \in O_{\rho}} \left(\frac{|O_k|}{|\mathcal{X}|}\right)^2 = \Vert p_{\mathcal{O}_\rho}\Vert_2^2.
    $$
    Thus, $\mathcal{C}_{\rho}$ is an $(1- \Vert p_{\mathcal{O}_\rho}\Vert_2^2)$-pairwise independent function family.
\end{proof}

\subsection{Proof of Theorem~\ref{thrm: qsq_lb_general}}\label{app: proof_general_obs}
The proof of Theorem~\ref{thrm: qsq_lb_general} follows the same overall strategy as
Theorem~\ref{thrm: qsq_lb_diagonal}, but extends the argument to handle
\emph{general (possibly non-diagonal) observables}. The main additional challenge
is to control the contributions arising from non-diagonal terms in the observable
decomposition.

A key ingredient is the pairwise-independence structure induced by symmetry in the
function class $\mathcal{C}_{\rho}$ (see Definition~\ref{def: sym_func}).
Specifically, for a uniformly random function $f\in\mathcal{C}_{\rho}$, the joint
behavior of $f(x)$ and $f(x')$ is completely determined by whether $x$ and $x'$
lie in the same group orbit. This is captured by the following lemma.

\begin{lmma}
    For every $f \in \mathcal{C}_\rho$, with $x, x'$ are drawn independently from $\mathcal{X}$, the quantity
    $$
    \mathbb{E}_f[(-1)^{f(x)+f(x')}] = \left\{ \begin{array}{cc}
         1 & \text{if $x$ and $x'$ belong to the same orbit} \\
         0 & \text{otherwise}
    \end{array}\right.
    $$
    \label{lmma: 2}
\end{lmma}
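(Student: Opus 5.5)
The plan is to compute the expectation directly from the structure of $\mathcal{C}_\rho$. First I will fix a concrete parametrization of the class. By Definition~\ref{def: sym_func}, a function $f:\mathcal{X}\to\{0,1\}$ lies in $\mathcal{C}_\rho$ if and only if it is constant on every orbit in $\mathcal{O}_\rho$. Conversely, any assignment of a bit $b_O\in\{0,1\}$ to each orbit $O\in\mathcal{O}_\rho$ defines such a function. Hence $\mathcal{C}_\rho$ is in bijection with $\{0,1\}^{|\mathcal{O}_\rho|}$. Drawing $f\sim\mathrm{Unif}(\mathcal{C}_\rho)$ is therefore the same as drawing independent uniform bits $b_O$, one per orbit, and setting $f(x)=b_{\mathcal{O}(x)}$.

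With this in hand, I will treat the two cases for fixed $x,x'$; the statement is conditional on the pair, so the randomness of $x,x'$ plays no role.

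\textbf{Same orbit.} If $\mathcal{O}(x)=\mathcal{O}(x')$, then $f(x)=f(x')$ for every $f\in\mathcal{C}_\rho$. So $f(x)+f(x')$ is even, $(-1)^{f(x)+f(x')}=1$ identically, and the expectation equals $1$.

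\textbf{Different orbits.} If $\mathcal{O}(x)\neq\mathcal{O}(x')$, then $f(x)=b_{\mathcal{O}(x)}$ and $f(x')=b_{\mathcal{O}(x')}$ are independent uniform bits. The expectation factorizes as
\[
\mathbb{E}\big[(-1)^{b_{\mathcal{O}(x)}}\big]\,\mathbb{E}\big[(-1)^{b_{\mathcal{O}(x')}}\big]=0\cdot 0=0 .
\]
This is the same uniform-on-$\{0,1\}^2$ fact used in Lemma~\ref{lmma: H_pairwise}.

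There is no real obstacle. The only point needing care is the bijection between $\mathcal{C}_\rho$ and orbit-wise bit assignments, which is what justifies independence across distinct orbits. It relies on two facts: the orbits partition $\mathcal{X}$, because they are equivalence classes of a group action, and the constraint $f(\rho(g)x)=f(x)$ imposes nothing beyond constancy on each orbit.

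Finally, I will note the interpretation of the phrase ``for every $f$'' in the statement. The quantity is an average over $f\sim\mathrm{Unif}(\mathcal{C}_\rho)$, not a claim about each individual $f$. I will state this explicitly so that the lemma can be used later in expanding $\mathrm{Var}(\mathcal{C}_\rho)$ in the phase-kickback basis.
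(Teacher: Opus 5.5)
Your proof is correct and follows essentially the same route as the paper. You split into the same-orbit case, where $(-1)^{2f(x)}=1$, and the different-orbit case, where $(f(x),f(x'))$ is uniform on $\{0,1\}^2$ and the expectation vanishes; you justify that uniformity directly through the bijection $\mathcal{C}_\rho \leftrightarrow \{0,1\}^{|\mathcal{O}_\rho|}$ (independent uniform bits per orbit), whereas the paper simply invokes Lemma~\ref{lmma: H_pairwise}, so your version is slightly more self-contained.
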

\begin{proof}
    From Lemma~\ref{lmma: H_pairwise}, we know that the distribution of $(f(x), f(x'))$ will equal the uniform distribution on $\{0, 1\}^2$, whenever $x$ and $x'$ belong from different orbit. In such case, 
    $$\mathbb{E}_{f\in \mathrm{Unif}(\mathcal{C}_\rho)}[(-1)^{f(x)+f(x')}] = \mathbb{E}_{y, y'\sim \{0, 1\}^2}[(-1)^{y+y'}] = 0$$

    On the other hand, if the function $f$ is constant on a single orbit, then:
    $$\mathbb{E}_{f\sim \mathrm{Unif}(\mathcal{C}_\rho)}[(-1)^{f(x)+f(x')}] = \mathbb{E}_{f\in \mathrm{Unif}(\mathcal{C}_\rho)}[(-1)^{2f(x)}] = 1.$$
\end{proof}

As the last ingredient,  we use a standard phase-kickback identity to decompose the quantum example state into a uniform component and a phase-encoded component.
This decomposition enables the analysis of general observables acting on
$\ket{\psi_f}$ in later variance and symmetry arguments.

\begin{claim}
    For every $f: \{0, 1\}^n \to \{0, 1\}$ and let $\ket{\psi_f} = \frac{1}{\sqrt{|\mathcal{X}|}}\sum_{x\in\mathcal{X}} \ket{x}\ket{f(x)}$, $\ket{u} = \frac{1}{\sqrt{|\mathcal{X}|}}\sum_{x\in\mathcal{X}} \ket{x}$, and $\ket{\phi_f} = \frac{1}{\sqrt{|\mathcal{X}|}}\sum_{x\in\mathcal{X}} (-1)^{f(x)}\ket{x}$, then:
    $$
        \ket{\psi_f}\bra{\psi_f} = \frac{1}{2}\left(\ketbra{\phi_f}{\phi_f}\otimes\ketbra{-}{-}+\ketbra{\phi_f}{u}\otimes\ketbra{-}{+}+\ketbra{u}{\phi_f}\otimes\ketbra{+}{-}+\ketbra{u}{u}\otimes\ketbra{+}{+}\right) 
    $$
    \label{claim: phase_kickback}
\end{claim}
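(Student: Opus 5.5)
The plan is to decompose the label qubit in the Hadamard basis and then expand the outer product term by term. The identity is linear algebra, so no earlier result is needed beyond the definitions of $\ket{u}$ and $\ket{\phi_f}$.

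First, for each bit $b\in\{0,1\}$, write the label qubit as
\[
\ket{b}=\tfrac{1}{\sqrt 2}\bigl(\ket{+}+(-1)^b\ket{-}\bigr).
\]
Substituting $b=f(x)$ into $\ket{\psi_f}$ and collecting the $\ket{+}$ and $\ket{-}$ components gives
\[
\ket{\psi_f}=\frac{1}{\sqrt{2|\mathcal{X}|}}\sum_{x\in\mathcal{X}}\ket{x}\otimes\bigl(\ket{+}+(-1)^{f(x)}\ket{-}\bigr)=\frac{1}{\sqrt 2}\bigl(\ket{u}\otimes\ket{+}+\ket{\phi_f}\otimes\ket{-}\bigr).
\]
This is the phase-kickback form: the uniform component sits on $\ket{+}$, and the function is pushed into the phases of the component on $\ket{-}$.

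Second, I form the outer product of this vector with itself. I use $(\ket{a}\otimes\ket{s})(\bra{a'}\otimes\bra{s'})=\ketbra{a}{a'}\otimes\ketbra{s}{s'}$. The product has four terms: $\ketbra{u}{u}\otimes\ketbra{+}{+}$, $\ketbra{u}{\phi_f}\otimes\ketbra{+}{-}$, $\ketbra{\phi_f}{u}\otimes\ketbra{-}{+}$ and $\ketbra{\phi_f}{\phi_f}\otimes\ketbra{-}{-}$. Each carries the prefactor $\tfrac12$ from $(1/\sqrt2)^2$.

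All amplitudes are real, so there are no complex conjugations to track. The only point needing care is bookkeeping: checking that the Hadamard-basis expansion has sign $(-1)^b$ on $\ket{-}$, and that the normalizations $1/\sqrt{|\mathcal{X}|}$ are absorbed into $\ket{u}$ and $\ket{\phi_f}$. I do not expect a real obstacle here. As a sanity check, I will verify that the trace of the right-hand side is $\tfrac12(1+1)=1$, using $\braket{u}{u}=\braket{\phi_f}{\phi_f}=1$ and $\braket{+}{-}=0$.
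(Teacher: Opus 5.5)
Your proof is correct and is essentially the paper's argument. The paper conjugates by $\mathbb{I}\otimes H$, expands the result in the computational basis of the label qubit, and then maps $\ket{0},\ket{1}$ back to $\ket{+},\ket{-}$. You make the same change of basis directly on the state vector via $\ket{b}=\tfrac{1}{\sqrt2}(\ket{+}+(-1)^b\ket{-})$, obtaining $\ket{\psi_f}=\tfrac{1}{\sqrt2}(\ket{u}\otimes\ket{+}+\ket{\phi_f}\otimes\ket{-})$ before taking the outer product, which is slightly more direct.
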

\begin{proof}
    We first consider:
    \begin{align}
        (\mathbb{I}\otimes H)\ketbra{\psi_f}{\psi_f}(\mathbb{I}\otimes H) &= \frac{1}{2|\mathcal{X}|}\sum_{x, x'\in \mathcal{X}, a,b\in \{0,1\}} (-1)^{a\cdot f(x) + b\cdot f(x')}\ketbra{x, a}{x', b} \\
        &= \frac{1}{2}\left( \ketbra{\phi_f}{\phi_f}\otimes\ketbra{1}{1} + \ketbra{\phi_f}{u}\otimes \ketbra{1}{0}+\ketbra{u}{\phi_f}\otimes \ketbra{0}{1} + \ketbra{u}{u}\otimes \ketbra{0}{0} \right) \\
    \end{align}
    Thus,
    \begin{equation}
        \ketbra{\psi_f}{\psi_f} = \frac{1}{2}\left(\ketbra{\phi_f}{\phi_f}\otimes\ketbra{-}{-}+\ketbra{\phi_f}{u}\otimes\ketbra{-}{+}+\ketbra{u}{\phi_f}\otimes\ketbra{+}{-}+\ketbra{u}{u}\otimes\ketbra{+}{+}\right) 
    \end{equation}
\end{proof}

We can now proceed to the proof of the main theorem. We first convert the learning problem to a decision problem between $\rho_f = \ket{\psi_f}\bra{\psi_f}$ where $f$ is any state in $\mathcal{C}_\rho$, and an adversarial state $\sigma = \mathbb{E}_{f\sim \mathcal{C}_\rho}[\ket{\psi_f}\bra{\psi_f}]$.

    Using Lemma~\ref{lmma: learn_as_decide} and~\ref{lmma: qsd_bound_decide}, we have:
    $$
    \QSQ^{\varepsilon,\delta}_{\tau}(\mathcal{C}_\rho) \geq (1-2\delta)\QSD_{\tau}(\mathcal{C}_\rho, \sigma) -1
    $$

    To apply this relationship, we need $\min_f \TRD(\rho_f, \sigma) > 2(\tau +\epsilon)$. We observe that:
    \begin{align}
        \TRD(\rho_f, \sigma) \geq 1 - \sqrt{\mathbb{E}_{f'} [|\braket{\psi_f}{\psi_{f'}}|^2]} \geq 1-\sqrt{\frac{1}{2}}
    \end{align}
    
    Fixing $\epsilon = 0.05$ and $\delta = 0.01$, then the Lemma~\ref{lmma: learn_as_decide} holds if $\tau < 0.096$. 
Now, we use the variance bound for an arbitrary observable $O$ as follows:
\begin{equation}
    \mathrm{Var}(\mathcal{C}_\rho) = \mathbb{E}_f[\Tr[O\ket{\psi_f}\bra{\psi_f}]^2] - (\mathbb{E}_f[\Tr[O\ket{\psi_f}\bra{\psi_f}]])^2
    \label{eqn: variance}
\end{equation}
To understand this quantity, we first apply the phase-kickback trick in Claim~\ref{claim: phase_kickback}:
$$
\ket{\psi_f}\bra{\psi_f} = \frac{1}{2}\left(\ket{\phi_f}\bra{\phi_f}\otimes\ket{-}\bra{-}+\ket{\phi_f}\bra{u}\otimes\ket{-}\bra{+}+\ket{u}\bra{\phi_f}\otimes\ket{+}\bra{-}+\ket{u}\bra{u}\otimes\ket{+}\bra{+}\right)
$$
From this identity, we can generally decompose the observable $O$ into the basis of $\{\ket{+}, \ket{-}\}$ on the last qubit: $O = \sum_{a, b} O_{a, b}\ket{a}\bra{b}$, where $a, b \in \{+, -\}$ and $O_{a,b} \in \mathbb{C}^{2^n \times 2^n}$ and $\Vert O_{a,b}\Vert \leq 1$. For the off-diagonal $O_{+, -}$ and $O_{-, +}$, we do not need them to be Hermitian, instead $O_{+, -}=O^{\dagger}_{-, +}$. For convenience, we rewrite $O_1 = O_{+, +}$, $O_2 = O_{+, -}$, $O_3 = O_{-, +}$, and $O_4 = O_{-, -}$    Thus, we have:
\begin{align}
    \Tr[O\ket{\psi_f}\bra{\psi_f}] = \frac{1}{2}\sum_{i\in [4]} \Tr[O_i \rho^{f}_i]
\end{align}
where $\rho^f_1 = \ket{u}\bra{u}$, $\rho^f_2 = \ket{u}\bra{\phi_f}$, $\rho^f_3 = \ket{\phi_f}\bra{u}$, and $\rho^f_4 = \ket{\phi_f}\bra{\phi_f}$. Replace this to~\eqref{eqn: variance}:
\begin{align}
    \mathrm{Var}(\mathcal{C}_\rho) = \frac{1}{4}\left(\mathbb{E}_f\left[\sum_{i,j}\Tr[O_i\rho^{f}_i]\cdot \Tr[O_j\rho^{f}_j]\right] - \left(\sum_{i}\Tr[O_i\mathbb{E}_f[\rho^f_i]]\right)^2\right)
    \label{eqn: variance_general_O}
\end{align}
First, we notice that $\ket{u}\bra{u}$ does not depend on $f$. Therefore, with either $i=1$ or $j=1$, the term equals $0$. We are particularly interested in $i, j\in \{2, 3, 4\}$ as follows
\paragraph{(1) $\boldsymbol{i=j=4}$:} We first consider
\begin{align}
    \mathbb{E}_f[\rho^f_4] = \frac{1}{|\mathcal{X}|}\sum_{x, x'}\mathbb{E}_f[(-1)^{f(x)+f(x')}]\ket{x}\bra{x'}
\end{align}
Using Lemma~\ref{lmma: 2}, we have:
\begin{align}
     \mathbb{E}_f[\rho^f_4] = \frac{1}{|\mathcal{X}|}\sum_{O_k \in \mathcal{O}_\rho} \sum_{x, x'\in O_k} \ket{x}\bra{x'}
\end{align}
Next, observe that
\begin{align}
    \mathbb{E}_f[\rho^f_4\otimes\rho^f_4] = \frac{1}{|\mathcal{X}|^2} \sum_{x, x', v, v' \in \mathcal{X}} \mathbb{E}_f[(-1)^{f(x)+f(x')+f(v)+f(v')}]\ket{x, v}\bra{x, v}.
\end{align}
Here, we see that the expectation $\mathbb{E}_f[(-1)^{f(x)+f(x')+f(v)+f(v')}]$ will vanish if $x, x', v, v'$ belong to four different orbits. Thus, we are particularly interested in the case where at least $2$ variables belong to the same orbit. Without loss of generality, we consider $x, x'$ in the same orbit, then we have:

\begin{align}
    \mathbb{E}_f[\rho^f_4\otimes\rho^f_4] &= \frac{1}{|\mathcal{X}|^2} \sum_{O_k, O_{k'} \in \mathcal{O}_\rho} \sum_{x, x'\in O_k} \sum_{v, v'\in O_{k'}} \mathbb{E}_f[(-1)^{f(v)+f(v')}]\ket{x, v}\bra{x', v'} \\
    &= \frac{1}{|\mathcal{X}|^2} \sum_{O_k, O_{k'} \in \mathcal{O}_\rho} \sum_{x, x'\in O_k} \sum_{v, v'\in \mathcal{O}_{k'}} \ket{x, v}\bra{x', v'}
\end{align}
where the last equality uses Lemma~\ref{lmma: 2}.

Therefore,
\begin{align}
    &\mathbb{E}_f[\Tr[O_4 \otimes O_4 \cdot \rho^f_4 \otimes  \rho^f_4 ]] - \left(\Tr[O_f\cdot \mathbb{E}_f[\rho^f_4]]\right)^2  \\
    &=\frac{1}{|\mathcal{X}|^2} \sum_{O_k, O_{k'} \in \mathcal{O}_\rho} \sum_{x, x'\in O_k} \sum_{v, v'\in \mathcal{O}_{k'}} \bra{x}O_4\ket{x'} \cdot \bra{v}O_4\ket{v'}  - \left(\frac{1}{|\mathcal{X}|}\sum_{O_k \in \mathcal{O}_\rho} \sum_{x, x'\in O_k} \bra{x}O_4\ket{x'}\right)^2 \\ 
    &= 0
\end{align}

\paragraph{(2) $\boldsymbol{i=j=2}$:}
We first consider 
\begin{equation}
    \mathbb{E}_f[\rho^f_2] = \frac{1}{|\mathcal{X}|}\sum_{x, x'\in \mathcal{X}} \mathbb{E}_f[(-1)^{f(x')}]\ket{x}\bra{x'} = 0
\end{equation}
as $\mathbb{E}_f[(-1)^{f(x')}] = 0$ for every $x$.

Next, we consider:
\begin{align}
    \mathbb{E}_f[\rho^f_2\otimes \rho^f_2] &= \frac{1}{|\mathcal{X}|^2} \sum_{x, x', v, v'\in \mathcal{X}} \mathbb{E}_f[(-1)^{f(x')+f(v')}]\ket{x, v}\bra{x', v'} \\
    &= \frac{1}{|\mathcal{X}|^2} \sum_{x, v\in \mathcal{X}} \sum_{O_k\in \mathcal{O}_\rho}\sum_{x', v'\in O_k}\ket{x, v}\bra{x', v'}
\end{align}

Therefore, we have:
\begin{align}
    \mathbb{E}_f[\Tr[O_2 \otimes O_2 \cdot \rho^f_2 \otimes  \rho^f_2 ]] - (\Tr[O_f\cdot \mathbb{E}_f[\rho^f_2]])^2 &= \frac{1}{|\mathcal{X}|^2} \sum_{x, v\in \mathcal{X}} \sum_{O_k\in \mathcal{O}_\rho}\sum_{x', v'\in O_k}\bra{x}O_2\ket{x'}\cdot\bra{v}O_2\ket{v'} 
\end{align}
\paragraph{(3) $\boldsymbol{i=j=3}$:} Same as the case $i=j=2$, we have:
\begin{align}
    \mathbb{E}_f[\Tr[O_3 \otimes O_3 \cdot \rho^f_3 \otimes  \rho^f_3 ]] - (\Tr[O_f\cdot \mathbb{E}_f[\rho^f_3]])^2 &= \frac{1}{|\mathcal{X}|^2} \sum_{x', v'\in \mathcal{X}} \sum_{O_k\in \mathcal{O}_\rho}\sum_{x, v\in O_k}\bra{x}O_3\ket{x'}\cdot\bra{v}O_3\ket{v'} 
\end{align}
\paragraph{(4) $\boldsymbol{i=2, j=3}$ or $\boldsymbol{i=3, j=2}$:}

As both $\mathbb{E}_f[\rho^f_2] = \mathbb{E}_f[\rho^f_3] = 0$, we are only interested in $\mathbb{E}_f[\rho^f_2\otimes \rho^f_3]$ or $\mathbb{E}_f[\rho^f_3\otimes \rho^f_2]$:

\begin{align}
    \mathbb{E}_f[\rho^f_2\otimes \rho^f_3] &= \frac{1}{|\mathcal{X}|^2}\sum_{x, x', v, v'\in \mathcal{X}} \mathbb{E}_f[(-1)^{f(x')+f(v)}]\ket{x, v}\bra{x', v'} \\ 
    &= \frac{1}{|\mathcal{X}|^2}\sum_{x, v'\in \mathcal{X}} \sum_{O_k\in \mathcal{O}_\rho}\sum_{x', v\in O_k}\ket{x, v}\bra{x', v'}
\end{align}
and 
\begin{align}
    \mathbb{E}_f[\rho^f_3\otimes \rho^f_2] &= \frac{1}{|\mathcal{X}|^2}\sum_{x, x', v, v'\in \mathcal{X}} \mathbb{E}_f[(-1)^{f(x)+f(v')}]\ket{x, v}\bra{x', v'} \\ 
    &= \frac{1}{|\mathcal{X}|^2}\sum_{x', v\in \mathcal{X}} \sum_{O_k\in \mathcal{O}_\rho}\sum_{x, v'\in O_k}\ket{x, v}\bra{x', v'}
\end{align}
Therefore
\begin{align}
    \mathbb{E}_f[\Tr[O_2 \otimes O_3 \cdot \rho^f_2 \otimes  \rho^f_3 ]] - (\Tr[O_f\cdot \mathbb{E}_f[\rho^f_2]])^2 = \frac{1}{|\mathcal{X}|^2}\sum_{x, v'\in \mathcal{X}} \sum_{O_k\in \mathcal{O}_\rho}\sum_{x', v\in O_k}\bra{x}O_2\ket{x'}\cdot\bra{v}O_3\ket{v'}
\end{align}
and 
\begin{align}
    \mathbb{E}_f[\Tr[O_3 \otimes O_2 \cdot \rho^f_3 \otimes  \rho^f_2 ]] - (\Tr[O_f\cdot \mathbb{E}_f[\rho^f_3]])^2 = \frac{1}{|\mathcal{X}|^2}\sum_{x', v\in \mathcal{X}} \sum_{O_k\in \mathcal{O}_\rho}\sum_{x, v'\in O_k}\bra{x}O_3\ket{x'}\cdot\bra{v}O_2\ket{v'}
\end{align}
\paragraph{(5) $\boldsymbol{i=4, j=2,3}$: } 
We first consider 
\begin{equation}
    \mathbb{E}_f[\rho^f_4\otimes\rho^f_2] = \frac{1}{|\mathcal{X}|^2}\sum_{x, x', v, v'} \mathbb{E}_f[(-1)^{f(x)+f(x')+f(v')}]\ket{x, v}\bra{x, v'} = 0
\end{equation}
The second equality holds because $\mathbb{E}_f[(-1)^{f(x)+f(x')+f(v')}] = 0$ based on the observation that, if there are at least 2 variables that belong to the same orbit, the expectation is reduced to $\mathbb{E}_f[(-1)^{f(x)}]$ and equals $0$. Otherwise, all three variables belong to different orbits, using the same reasoning as Lemma~\ref{lmma: 2}, the distribution of $(f(x), f(x'), f(v'))$ will be uniform distribution over $\{0,1\}^{\otimes 3}$ and $\mathbb{E}_f[(-1)^{f(x)+f(x')+f(v')}] = 0$.

Similarly, we have:
\begin{equation}
    \mathbb{E}_f[\rho^f_4\otimes\rho^f_3] = 0
\end{equation}

Putting all this together in~\eqref{eqn: variance_general_O}, we have:
\begin{align}
    \mathrm{Var}(\mathcal{C}_\rho) &= \frac{1}{4|\mathcal{X}|} \sum_{O_k \in \mathcal{O}_{\rho}} \sum_{x, v\in O_k}\bra{x}O_3\ket{u}\cdot\bra{v}O_3\ket{u} + \bra{u}O_2\ket{x}\cdot\bra{u}O_2\ket{v} \\
    &\quad + 
    \bra{x}O_3\ket{u}\cdot\bra{u}O_2\ket{v} +
    \bra{u}O_2\ket{x}\cdot\bra{v}O_3\ket{u} \\
    &= \frac{1}{4|\mathcal{X}|}\sum_{O_k \in \mathcal{O}_\rho} \left( \sum_{x\in O_k} \bra{x}O_3\ket{u} + \bra{u}O_2\ket{x}\right)^2 \\
    &= \frac{1}{|\mathcal{X}|}\sum_{O_k \in \mathcal{O}_\rho} \left( \sum_{x\in O_k}  \mathrm{Re}(\bra{u}O_2\ket{x})\right)^2 \\
    &\leq \frac{1}{|\mathcal{X}|}\sum_{O_k \in \mathcal{O}_\rho} \left( \sum_{x\in O_k}  |\bra{u}O_2\ket{x}|\right)^2 \label{eqn: temp1}
\end{align}
Here, we apply some facts that $O_3 = O^{\dagger}_2$ and $\mathrm{Re}(z) \leq |z|$ for any $z\in \mathbb{C}$. Next, applying the Cauchy-Schwarz inequality to~\eqref{eqn: temp1}, we have
\begin{align}
    \mathrm{Var}(\mathcal{C}_\rho) &\leq \frac{1}{|\mathcal{X}|}\sum_{O_k \in \mathcal{O}_\rho} |O_k|  \sum_{x\in O_k}  |\bra{u}O_2\ket{x}|^2 \\
    &\leq \frac{\max_k{|O_k|}}{|\mathcal{X}|}\sum_{x\in\mathcal{X}}|\bra{u}O_2\ket{x}|^2 \\
    &\leq \frac{\max_k{|O_k|}}{|\mathcal{X}|}\Vert O_2\ket{u}\Vert^2_2 \\
    &\leq\frac{\max_k{|O_k|}}{|\mathcal{X}|}\Vert O_2\Vert^2 \\
    &\leq \frac{\max_k{|O_k|}}{|\mathcal{X}|}
\end{align}
Now we tighten the inequality by choosing an explicit construction of $O_2$. Let $O^\star \in \mathcal{O}_\rho$ be an orbit of maximum size,
\[
|O^\star| = \max_{O_k \in \mathcal{O}_\rho} |O_k|.
\]
Define the normalized uniform superposition over this orbit by
\[
\ket{w}
\;:=\;
\frac{1}{\sqrt{|O^\star|}}
\sum_{x \in O^\star} \ket{x}.
\]
Let $\ket{u}$ denote the fixed reference basis state used in the block decomposition of the observable.
We define
\[
O_2 \;:=\; \ket{u}\!\bra{w},
\qquad
O_3 \;:=\; O_2^\dagger = \ket{w}\!\bra{u}.
\]
Then $\|O_2\| = 1$, and for all $x \in \mathcal{X}$,
\[
\bra{u} O_2 \ket{x}
=
\bra{w}x\rangle
=
\begin{cases}
\displaystyle \frac{1}{\sqrt{|O^\star|}}, & x \in O^\star, \\[6pt]
0, & \text{otherwise}.
\end{cases}
\]
Substituting this into the variance expression yields
\[
\mathrm{Var}(\mathcal{C}_\rho)
=
\frac{1}{|\mathcal{X}|}
\left(
\sum_{x \in O^\star} \frac{1}{\sqrt{|O^\star|}}
\right)^2
=
\frac{|O^\star|}{|\mathcal{X}|},
\]
Thus, the variance bound is tight and $\mathrm{Var}(\mathcal{C}_\rho) = \Theta(\max_k |O_k|/|\mathcal{X}|)$.

Finally, using Lemma~\ref{lmma: learn_as_decide},~\ref{lmma: qsd_bound_decide},~\ref{lmma: variance_bound}, we finish the proof.

\subsection{Proof of Theorem~\ref{thm:tolerance_separation}} \label{app: tolerance_separation}
Our proof is based on the reduction to distinguishing two pure states. In this setting, a common figure of merit is the trace distance $\TRD(\cdot,\cdot)$, which is attained by the \emph{Helstrom measurement}. Concretely, letting $X:=\ket{\psi_f}\!\bra{\psi_f}-\ket{\psi_g}\!\bra{\psi_g}$, the optimal observable is the sign of $X$, i.e., the difference of the projectors onto the positive and negative eigenspaces of $X$ (Jordan–Hahn decomposition). Because our $\QSQ$ oracle accepts arbitrary bounded observables, querying with this optimal observable realizes the information-theoretic limit for distinguishing $\ket{\psi_f}$ from $\ket{\psi_g}$.  We formalize this below.
\begin{lmma}
    Let $\ket{\psi_f}$ and $\ket{\psi_g}$ be two quantum states, there always exists an optimal measurement $O^*$ such that:
    $$
        |\Tr[O^* (\ketbra{\psi_f}{\psi_f}-\ketbra{\psi_g}{\psi_g})]| = 2d_{\text{tr}}(\ket{\psi_f}, \ket{\psi_g})
    $$
    \label{lmma: optimal_measurement}
\end{lmma}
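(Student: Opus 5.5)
The plan is to prove the Helstrom statement directly from the spectral (Jordan--Hahn) decomposition of the traceless Hermitian operator $X := \ketbra{\psi_f}{\psi_f}-\ketbra{\psi_g}{\psi_g}$. Write $X = P - Q$, where $P$ and $Q$ are positive semidefinite with orthogonal supports, and let $\Pi_+$ and $\Pi_-$ be the projectors onto those supports. We choose the observable $O^* := \Pi_+ - \Pi_-$, the sign of $X$. Since $\Pi_+$ and $\Pi_-$ are orthogonal projectors, $O^*$ is Hermitian with eigenvalues in $\{-1,0,1\}$. Hence $\|O^*\|\le 1$ and $O^*$ is an admissible $\Qstat$ query. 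Moreover,
\[
\Tr[O^* X] = \Tr[P] + \Tr[Q] = \|X\|_1 = 2\,\TRD(\ket{\psi_f},\ket{\psi_g}),
\]
which is exactly the claim.

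For the later application in Theorem~\ref{thm:tolerance_separation}, I would also identify this value concretely for pure states. $X$ has rank at most two, and it is supported on $\mathrm{span}\{\ket{\psi_f},\ket{\psi_g}\}$. Because $\Tr X = 0$, its two nonzero eigenvalues are $\pm\lambda$. Computing $\Tr[X^2] = 2 - 2|\braket{\psi_f}{\psi_g}|^2 = 2\lambda^2$ gives
\[
\lambda = \sqrt{1-|\braket{\psi_f}{\psi_g}|^2}.
\]
Therefore $\|X\|_1 = 2\lambda$, and $\TRD = \sqrt{1-|\braket{\psi_f}{\psi_g}|^2}$. This is consistent with the formula used in the proof sketch of Theorem~\ref{thm:tolerance_separation}. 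It also gives an explicit $O^*$: the difference of the projectors onto the $\pm\lambda$ eigenvectors inside that two-dimensional span. Optimality, meaning no $\|O\|\le 1$ does better, follows from Hölder's inequality, $|\Tr[OX]|\le \|O\|\,\|X\|_1$. This is not strictly needed for existence, but it justifies the word ``optimal.''

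There is no real obstacle here. The only point to handle carefully is the degenerate case $\ket{\psi_f}\propto\ket{\psi_g}$. In that case $X=0$, any $O$ works, and both sides of the identity are $0$. The second point to watch is the factor of $2$ between $\|X\|_1$ and the paper's normalization $\TRD = \tfrac12\|\cdot\|_1$.
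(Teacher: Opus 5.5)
Your proof is correct, but it takes a different route from the paper's. You work basis-free. You take $O^*=\Pi_+-\Pi_-$, the sign of $X$, so $\Tr[O^*X]=\|X\|_1=2\TRD$ follows essentially from the definition of the Schatten $1$-norm. You then read off the eigenvalues $\pm\sqrt{1-|\braket{\psi_f}{\psi_g}|^2}$ from $\Tr X=0$ and $\Tr X^2$.

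The paper, although it cites Jordan--Hahn as motivation, builds the eigenvectors by hand. It forms $\ket{\phi}\propto\ket{\psi_f}+\ket{\psi_g}$ and $\ket{\phi^\perp}\propto\ket{\psi_f}-\ket{\psi_g}$, rotates them to $\ket{P},\ket{P^\perp}=(\ket{\phi}\pm\ket{\phi^\perp})/\sqrt{2}$, and asserts these are eigenvectors of $X$ with eigenvalues $\pm\TRD(\ket{\psi_f},\ket{\psi_g})$. It then sets $O^*=\ketbra{P}{P}-\ketbra{P^\perp}{P^\perp}$.

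The paper's route gives a closed-form measurement written directly in terms of the two states. However, it silently assumes two things:
\begin{itemize}
\item $\braket{\psi_f}{\psi_g}$ is real. Otherwise the unnormalized overlap of $\ket{\phi}$ and $\ket{\phi^\perp}$ is $-2i\,\mathrm{Im}\braket{\psi_f}{\psi_g}\neq 0$, and the factors $\sqrt{2\pm 2\braket{\psi_f}{\psi_g}}$ are not norms.
\item The states are not parallel. Otherwise one of the normalizations vanishes.
\end{itemize}
Both are harmless: the first is fixed by rephasing $\ket{\psi_g}$, and both hold in the application, where the overlap is $1-\zeta$.

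Your argument avoids both caveats and treats the degenerate case $X=0$ explicitly. Through H\"older's inequality it also justifies the word ``optimal,'' which the paper states but never proves.
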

\begin{proof}
    The construction of $O^*$ relies on the Jordan–Hahn decomposition, which we aim to find the projection to the positive and negative parts of $X = \ket{\psi_f}\bra{\psi_f}-\ket{\psi_g}\bra{\psi_g}$.

    Here we define:
    $$
    \ket{\phi} = \frac{\ket{\psi_f} + \ket{\psi_g}}{\sqrt{2+2\braket{\psi_f}{\psi_g}}} \quad \text{and}\quad \ket{\phi^{\perp}} = \frac{\ket{\psi_f} - \ket{\psi_g}}{\sqrt{2-2\braket{\psi_f}{\psi_g}}}
    $$
    It is easy to check that $\braket{\phi}{\phi^\perp} = 0$. Denote $\ket{P} = \frac{\ket{\phi}+ \ket{\phi^\perp}}{\sqrt{2}}$ and $\ket{P^\perp} = \frac{\ket{\phi}- \ket{\phi^\perp}}{\sqrt{2}}$ we have:
    $$
        X\ket{P} = d_{\text{tr}}(\ket{\psi_f}, \ket{\psi_g})\ket{P} \quad \text{and} \quad  X\ket{P^\perp} = -d_{\text{tr}}(\ket{\psi_f}, \ket{\psi_g})\ket{P\perp}
    $$
    Thus, we observe that if $O^* := \ket{P}\bra{P} - \ket{P^\perp}\bra{P^\perp}$, we have:
    $$
    \Tr[O^*X] = 2d_{\text{tr}}(\ket{\psi_f}, \ket{\psi_g}).
    $$
    We finish the proof.
    
\end{proof}
Thus, the proof for Theorem~\ref{thm:tolerance_separation} proceeds by establishing an upper bound on the distinguishing power of any classical $\SQ$ query and a lower bound on the distinguishing power of specific Helstrom measurements.

\paragraph{Classical $\SQ$ Hardness.}
Consider any two functions $f, g \in \mathcal{S} \cup \{0\}$ (where $0$ denotes the zero function). For a classical statistical query defined by a bounded function $\phi: \mathcal{X} \times \{0,1\} \to [-1,1]$, the difference in expected values is:
\begin{equation}
\Delta_{SQ}(f, g) = \left| \mathbb{E}_{x \sim \mathcal{D}}[\phi(x, f(x))] - \mathbb{E}_{x \sim \mathcal{D}}[\phi(x, g(x))] \right|
\end{equation}
Since $f$ and $g$ differ only on the union of their supports $S_f \cup S_g$, and are identical (both zero) elsewhere, the expectation difference is non-zero only on this region.
\begin{equation}
\Delta_{SQ}(f, g) = \left| \frac{1}{|\mathcal{X}|} \sum_{x \in S_f \cup S_g} (\phi(x, f(x)) - \phi(x, g(x))) \right|
\end{equation}
The maximum possible value of the term $(\phi(x, y) - \phi(x, y'))$ is 2 (since $\phi \in [-1,1]$). Thus, we can bound the statistical distance by the measure of the support. Comparing any function $f \in \mathcal{S}$ to the zero function $g=0$:
\begin{equation}
\Delta_{SQ}(f, 0) \le \frac{|S_f|}{|\mathcal{X}|} \cdot 2 = 2\zeta.
\end{equation}
If the query tolerance $\tau > 2\zeta$, the oracle can validly return the value corresponding to the zero function for every query, regardless of whether the true target is $f$ or $0$. Specifically, the oracle can define a ``null'' response $\alpha_0 = \mathbb{E}_{x}[\phi(x, 0)]$. Since $|\alpha_0 - \mathbb{E}_{x}[\phi(x, f(x))]| \le 2\zeta < \tau$, $\alpha_0$ is a valid response for any $f \in \mathcal{S}$. Consequently, the learner effectively observes responses consistent with the zero function for all targets, making learning impossible.

\paragraph{$\QSQ$ Learnability.}
We now construct a quantum learner. The quantum example state for a function $f$ is $|\psi_f\rangle = \frac{1}{\sqrt{|\mathcal{X}|}} \sum_x |x\rangle |f(x)\rangle$.
To distinguish $f$ from the zero function state $|\psi_0\rangle$, we compute their inner product. Since $f(x) \in \{0,1\}$, the states are orthogonal on inputs $x$ where $f(x)=1$, and identical where $f(x)=0$:
\begin{equation}
\langle \psi_f | \psi_0 \rangle = \frac{1}{|\mathcal{X}|} \sum_{x \in \mathcal{X}} \langle f(x) | 0 \rangle = \frac{1}{|\mathcal{X}|} \sum_{x \notin S_f} 1 = 1 - \zeta.
\end{equation}
The trace distance $d_{tr}$, which governs the optimal distinguishability via Helstrom measurements, is given by:
\begin{equation}
d_{tr}(|\psi_f\rangle, |\psi_0\rangle) = \sqrt{1 - |\langle \psi_f | \psi_0 \rangle|^2} = \sqrt{1 - (1-\zeta)^2} = \sqrt{2\zeta - \zeta^2}.
\end{equation}
By Lemma~\ref{lmma: optimal_measurement}, there exists an observable $O^*$ such that the gap in expectation values between $|\psi_f\rangle$ and $|\psi_0\rangle$ is exactly $2 d_{tr} = 2\sqrt{2\zeta - \zeta^2}$. However, to strictly separate $f$ from $0$ given a tolerance $\tau$, we require that the tolerance be smaller than the signal strength. In the $\QSQ$ model, successful identification is possible if the trace distance exceeds $\tau$ (allowing the oracle to return distinct ranges).
Thus, provided $\tau < \sqrt{2\zeta - \zeta^2}$, a $\QSQ$ algorithm can distinguish $f$ from $0$.

\paragraph{The Separation Window.}
A quantum advantage exists if the condition for classical failure ($ \tau > 2\zeta$) and quantum success ($\tau < \sqrt{2\zeta - \zeta^2}$) can be simultaneously satisfied. This requires:
\begin{equation}
2\zeta < \sqrt{2\zeta - \zeta^2}
\end{equation}
Squaring both sides (valid since $\zeta > 0$):
\begin{equation}
4\zeta^2 < 2\zeta - \zeta^2 \implies 5\zeta^2 < 2\zeta \implies \zeta < \frac{2}{5} = 0.4.
\end{equation}
Thus, for any orbit fraction $\zeta < 0.4$, there exists a valid gap. For example, if we choose functions supported on orbits of size $\zeta = 0.2$, $\SQ$ fails for any $\tau > 0.4$, while $\QSQ$ succeeds up to $\tau \approx 0.6$. The quantum learner can proceed by a tournament method: for any candidate $f \in \mathcal{S}$, it applies the optimal observable to reject the hypothesis $f$ or $0$. Since pairwise trace distances between disjoint support functions scale similarly, the class is efficiently learnable.
\end{document}